\newtheorem{thm}{Theorem}[section]
\newtheorem{lem}[thm]{Lemma}
\newtheorem{assumption}[thm]{Assumption}
\newtheorem{pr}[thm]{Proposition}
\newtheorem{definition}[thm]{Definition}
\newtheorem{example}[thm]{Example}
\newenvironment{exmp}{\begin{example}\rm}{\end{example}}
\newtheorem{remark}[thm]{Remark}
\newenvironment{rem}{\begin{remark}\rm}{\end{remark}}
\newtheorem{tab}{Table}
\newcommand{\eps}{\varepsilon}
\title{Menger's Paths with Minimum Mergings~\footnote{This work is partially supported by a grant from the University Grants Committee of the Hong Kong Special Administrative Region, China (Project No. AoE/E-02/08).}}
\author{Guangyue Han\\
  {\normalsize Department of Mathematics}\vspace{-1mm} \\
  {\normalsize University of Hong Kong}\vspace{-1mm} \\
  {\normalsize Pokfulam Road, Hong Kong}\\
  {\normalsize {\em e-mail:\/} ghan@hku.hk}}
\date{{\normalsize \today}}
\begin{document}\maketitle\thispagestyle{empty}

\begin{abstract}
For an acyclic directed graph with multiple sources and multiple sinks, we prove that one can choose the Menger's paths between the sources and the sinks such that the number of mergings between these paths is upper bounded by a constant depending only on the min-cuts between the sources and the sinks, regardless of the size and topology of the graph. We also give bounds on the minimum number of mergings between these paths, and discuss how it depends on the min-cuts.
\end{abstract}

\section{Introduction}

Let $G(V, E)$ denote an acyclic directed graph, where $V$ denotes the set of all the vertices (points) in $G$ and $E$ denotes the set of all the edges in $G$. Using these notations, the edge-connectivity version of Menger's theorem~\cite{Menger1927} states:
\begin{thm}[Menger, 1927]
For any $u, v \in V$, the maximum number of pairwise edge-disjoint directed paths from $u$ to $v$ in $G$ equals the
min-cut between $u$ and $v$, namely the minimum number of edges in $E$ whose deletion destroys all directed paths from $u$ to $v$.
\end{thm}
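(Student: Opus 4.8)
The inequality ``(maximum number of edge-disjoint $u$--$v$ paths) $\le$ (min-cut)'' is immediate: if $P_1,\dots,P_k$ are pairwise edge-disjoint directed $u$--$v$ paths, then any edge set $C$ whose deletion destroys all $u$--$v$ paths must contain at least one edge of each $P_i$, and these edges are distinct, so $|C|\ge k$. Thus the whole content of the theorem is the reverse inequality: a min-cut of size $k$ forces $k$ pairwise edge-disjoint $u$--$v$ paths to exist. That is what I would prove.

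The quickest route is via network flows. Give every edge of $G$ capacity $1$ and regard $u$ as the source and $v$ as the sink. For unit capacities the capacity of a cut $(S,\,V\setminus S)$ with $u\in S$, $v\notin S$ is just $|\delta^+(S)|$; and if $C$ is any edge set whose removal disconnects $u$ from $v$, then taking $S$ to be the set of vertices still reachable from $u$ in $G-C$ gives $\delta^+(S)\subseteq C$. Hence the minimum cut capacity in this network equals the Menger min-cut $k$. By the max-flow--min-cut theorem there is an integral flow of value $k$; being a $0$--$1$ flow it decomposes into $k$ $u$--$v$ paths plus some cycles, and since $G$ is acyclic there are no cycles, so we get exactly $k$ $u$--$v$ paths, pairwise edge-disjoint because each edge carries at most one unit. (Conversely, $k$ edge-disjoint paths carry a feasible flow of value $k$, so the maximum flow, hence the min-cut, is exactly $k$.) The ``hard part'' here is just the max-flow--min-cut theorem, proved by the usual augmenting-path argument: a flow of maximum value admits no augmenting path in the residual network, and the vertices reachable from $u$ in that residual network then exhibit a cut whose capacity equals the flow value.

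For a self-contained combinatorial argument I would instead induct on $|E|$. The case $|E|=0$ is trivial; assume $k\ge 1$ and pick an edge $e=(x,y)$ on some $u$--$v$ path. Deleting $e$ drops the min-cut to $k$ or to $k-1$. If it stays $k$, induction on $G-e$ already gives $k$ edge-disjoint paths inside $G$. If it drops to $k-1$, let $S$ (with $u\in S$, $v\notin S$) be the reachable set after deleting a min-cut of $G-e$, so $\delta^+_{G-e}(S)$ has size $k-1$; then $e$ must cross from $S$ to $V\setminus S$, so $\delta^+_G(S)=\delta^+_{G-e}(S)\cup\{e\}$ is a min-cut of $G$ of size exactly $k$. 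Contract $V\setminus S$ to a new sink $v^\ast$ to form $G_1$ and $S$ to a new source $u^\ast$ to form $G_2$; each has fewer edges than $G$ and still has min-cut $k$ between its terminals, since a cut of a contracted graph pulls back to a $u$--$v$ cut of $G$. By induction $G_1$ has $k$ edge-disjoint $u$--$v^\ast$ paths and $G_2$ has $k$ edge-disjoint $u^\ast$--$v$ paths; because $|\delta^+_G(S)|=k$, edge-disjointness forces each family to traverse the $k$ edges of $\delta^+_G(S)$ bijectively, as last edges in $G_1$ and as first edges in $G_2$. Gluing the $G_1$-path and the $G_2$-path that share a common crossing edge yields $k$ walks that are in fact edge-disjoint $u$--$v$ paths, since $G_1$-paths live on $E(G[S])\cup\delta^+_G(S)$ and $G_2$-paths on $\delta^+_G(S)\cup E(G[V\setminus S])$. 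Here the main obstacle is the bookkeeping around the contraction: one must clear away the degenerate cases in which a contracted graph is not actually smaller (for instance $S=\{u\}$ or $V\setminus S=\{v\}$, where the min-cut is just the out-degree of $u$ or the in-degree of $v$ and the paths are transparent), and one must verify that contraction preserves the min-cut.
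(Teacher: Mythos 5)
The paper itself gives no proof of this statement: it is Menger's classical 1927 theorem, quoted with a citation and used as a black box throughout, so there is no in-paper argument to compare yours against and I judge the proposal on its own. Your first route, through unit-capacity max-flow, is a complete and correct standard proof: the easy inequality is handled properly, the identification of the flow-theoretic minimum cut with the Menger min-cut is argued in both directions (every $\delta^+(S)$ is a disconnecting set, and every disconnecting set $C$ contains $\delta^+(S)$ for $S$ the set reachable from $u$ in $G-C$), and the decomposition of an integral $0$--$1$ flow of value $k$ into $k$ edge-disjoint paths, with no cycles to discard since $G$ is acyclic, finishes the job modulo max-flow--min-cut, which you correctly reduce to the augmenting-path argument.

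Your self-contained induction, however, has a genuine gap exactly at the point you label as bookkeeping. If after deleting $e$ the min-cut drops to $k-1$ and the reachable set is $S=\{u\}$, then $G[S]$ contains no edges, so the contraction $G_2$ has the same edge set as $G$ and the induction hypothesis cannot be applied to it; the situation is symmetric for $V\setminus S=\{v\}$ and $G_1$. Your remark that in these cases the paths are ``transparent'' because the min-cut is just $\deg^+(u)$ (or $\deg^-(v)$) is not a proof: knowing $\deg^+(u)=k$ forces the $k$ paths to leave $u$ along distinct out-edges, but exhibiting $k$ pairwise edge-disjoint continuations from the heads of those edges to $v$ is essentially the original problem again. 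A genuine repair exists --- one standard version splits into the case where some minimum cut corresponds to a set $S$ with $\{u\}\subsetneq S$ and $\{v\}\subsetneq V\setminus S$ (where both contractions make progress) and the case where no such cut exists (where one deletes an edge not incident with $u$ or $v$ without lowering the min-cut, and treats the residual small graphs by hand) --- but it has to be carried out; as written the induction need not terminate. Since your flow argument already establishes the theorem, this does not sink the proposal, but the combinatorial route is incomplete as stated.
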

\noindent We call any set consisting of the maximum number of pairwise edge-disjoint directed paths from $u$ to $v$ a set of {\em Menger's paths} from $u$ and $v$. Apparently, for fixed $u, v \in V$, there may exist multiple sets of Menger's paths.

For $m$ paths $\beta_1, \beta_2, \cdots, \beta_m$ in $G(V, E)$, we say these paths {\it merge} at $e \in E$ if
\begin{enumerate}
\item $e \in \cap_{i=1}^m \beta_i$;
\item there are at least two distinct $f, g \in E$ such that $f, g$ are immediately ahead of $e$ on some $\beta_i, \beta_j$, $j \neq i$, respectively.
\end{enumerate}
Roughly speaking, condition $1$ says that $\beta_1, \beta_2, \cdots, \beta_m$ {\em internally intersect} at $e$ (namely, all $\beta_i$'s share a common edge $e$), condition $2$ says immediately before all $\beta_i$'s internally intersect at $e$, at least two of them are different. We call $e$ together with all the subsequent edges shared by all $\beta_i$'s (until they branch off) {\it merged subpath} by $\beta_i$ ($i=1, 2, \cdots, m$) at $e$; and we often say all $\beta_i$'s merge at the above-mentioned merged subpath. {\bf In this paper we will count number of mergings without multiplicities: all the mergings at the same edge $e$ will be counted as one merging at $e$.}

\begin{exmp}
In Figure~\ref{Mergings}(a), paths $\beta_1$ and $\beta_2$ share some vertex, however not edges/subpaths, so $\beta_1$ and $\beta_2$ do not merge. In Figure~\ref{Mergings}(b), paths $\beta_1$ and $\beta_2$ do share edge $S \to T$, where $S$ is a source, however condition $2$ is not satisfied, therefore $\beta_1$ and $\beta_2$ do not merge, although they internally intersect at $S \to T$. In Figure~\ref{Mergings}(c), $\beta_1$ and $\beta_2$ merge at edge $A \to B$, at subpath $A \to B \to C$; $\beta_2$ and $\beta_3$ merge at edge $A \to B$, at subpath $A \to B \to C \to D$; $\beta_1$, $\beta_2$ and $\beta_3$ merge at edge $A \to B$, at subpath $A \to B \to C$; $\beta_4$ merges with $\beta_3$ at edge $B \to C$, at subpath $B \to C \to D$; there are two mergings in Figure~\ref{Mergings}(c), at edge $A \to B$, and at edge $B \to C$, respectively.
\begin{figure}
\psfrag{b1}{$\beta_1$} \psfrag{b2}{$\beta_2$} \psfrag{b3}{$\beta_3$} \psfrag{b4}{$\beta_4$}
\centerline{\includegraphics[width=4in]{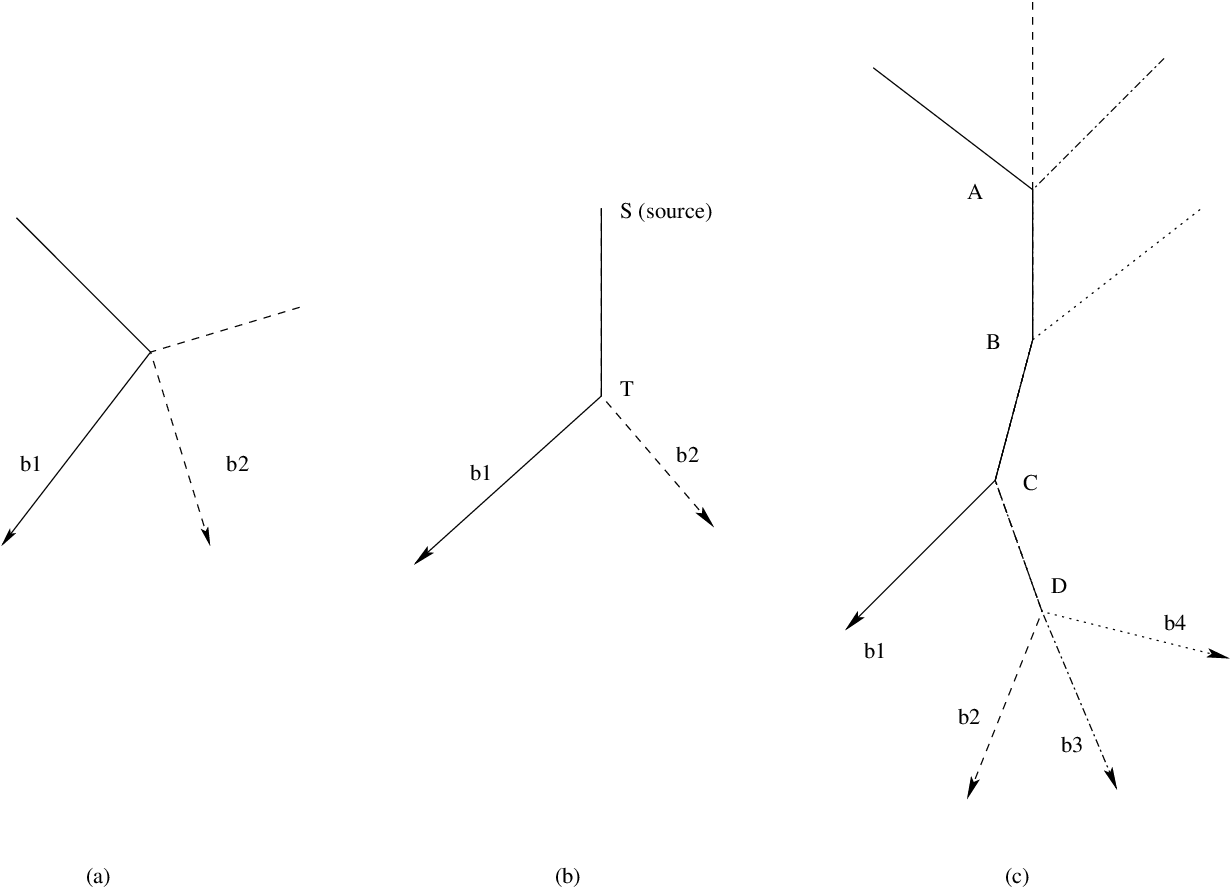}}
\caption{examples of mergings and non-mergings}
\label{Mergings}
\end{figure}
\end{exmp}

In this paper, we will consider an acyclic directed graph $G(E, V)$ with $n$ sources and $n$ sinks. Unless specified otherwise, we will use $S_1, S_2, \cdots, S_n$ to denote the sources and $R_1, R_2, \cdots, R_n$ to denote the sinks; $c_i$ will be used to denote the min-cut between $S_i$ and $R_i$, and $\alpha_i=\{\alpha_{i, 1}, \alpha_{i, 2}, \cdots, \alpha_{i, c_i}\}$ will be used to denote a set of Menger's paths from $S_i$ and $R_i$. We will study how $\alpha_i$'s merge with each other; more specifically, we show that appropriately chosen Menger's paths will only merge with each other finitely many times. In particular, we deal with the case when all sources and sinks are distinct in Section~\ref{M}, and the case when the sources are identical and the sinks are distinct in Section~\ref{MStar}. For both of cases, we will study how the minimum merging number depends on the min-cuts.

We remark that when $n=1$, Ford-Fulkerson algorithm~\cite{fo56} can find the min-cut and a set of Menger's path between $S_1$ and $R_1$ in polynomial time. The LDP (Link Disjoint Problem) asks if there are two edge-disjoint paths from $S_1$, $S_2$ to $R_1$, $R_2$, respectively. The fact that the LDP problem is NP-complete~\cite{fo80} suggests the intricacy of the problem when $n \geq 2$.

\medskip

{\bf Notation and Convention.} For a path $\gamma$ in an acyclic direct graph $G$, let $a(\gamma), b(\gamma)$ denote the starting point and the ending point of $\gamma$, respectively; let $\gamma[s, t]$ denote the subpath of $\gamma$ with the starting point $s$ and the ending point $t$. For two distinct paths $\gamma, \pi$ in $G$, we say $\gamma$ is {\em smaller} than $\pi$ if there is a directed path from $b(\gamma)$ to $a(\pi)$; if $\gamma, \pi$ and the connecting path from $b(\gamma)$ to $a(\pi)$ are subpaths of path $\beta$, we say $\gamma$ is {\em smaller} than $\pi$ on $\beta$. Note that this definition also applies to the case when paths degenerate to vertices/edges; in other words, in the definition, $\gamma, \pi$ or the connecting path from $b(\gamma)$ to $a(\pi)$ can be vertices/edges in $G$, which can be viewed as degenerated paths. If $b(\gamma)=a(\pi)$, we use $\gamma \circ \pi$ to denote the path obtained by concatenating $\gamma$ and $\pi$ subsequently. For a set of vertices $v_1, v_2, \cdots, v_j$ in $G$, define $G|v_1, \cdots, v_j)$ to be subgraph of $G$ consisting of the set of vertices (denoted by $V_0$), each of which is smaller than some $v_j$, and the set of all the edges, each of which is incident with some vertex in $V_0$.

\section{Minimum Mergings $\mathcal{M}$} \label{M}

In this section, we consider any acyclic directed graph $G$ with $n$ distinct sources and $n$ distinct sinks. Let $M(G)$ denote the minimum number of mergings over all possible Menger's path sets $\alpha_i$'s, $i=1, 2, \cdots, n$, and let $\mathcal{M}(c_1, c_2, \cdots, c_n)$ denote the supremum of $M(G)$ over all possible choices of such $G$.

In the following, we shall prove that
\begin{thm} \label{main}
For any $c_1, c_2, \cdots, c_n$,
$$
\mathcal{M}(c_1, c_2, \cdots, c_n) < \infty,
$$
and furthermore, we have
$$
\mathcal{M}(c_1, c_2, \cdots, c_n) \leq \sum_{i < j} \mathcal{M}(c_i, c_j).
$$
\end{thm}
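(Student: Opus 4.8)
The plan is to prove the two assertions essentially simultaneously: the second inequality $\mathcal{M}(c_1,\dots,c_n)\le\sum_{i<j}\mathcal{M}(c_i,c_j)$ immediately yields finiteness once we know each two-index quantity $\mathcal{M}(c_i,c_j)$ is finite, so the real content is (a) reducing the $n$-source problem to all the pairwise two-source problems, and (b) establishing $\mathcal{M}(c,c')<\infty$ for a pair. I would first dispose of (a). The idea is: given $G$ with $n$ sources and $n$ sinks, and given, for each pair $i<j$, a choice of Menger's path sets $\alpha_i^{(ij)},\alpha_j^{(ij)}$ between $(S_i,R_i)$ and $(S_j,R_j)$ that achieves few mergings in the subgraph built from those two commodities, I want to glue these pairwise-optimal choices into a single global choice of Menger's paths $\alpha_1,\dots,\alpha_n$. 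The subtlety is that for a fixed $i$, the set $\alpha_i$ appears in $n-1$ different pairwise problems, and those problems may demand incompatible routings. The natural fix is a \emph{sequential rerouting / uncrossing} argument: process the pairs in some order, and each time you handle pair $(i,j)$, locally modify $\alpha_i$ and $\alpha_j$ (swapping subpaths at crossing points, as in the standard proof of Menger's theorem) so that they merge at most $\mathcal{M}(c_i,c_j)$ times, while arguing this modification does not create \emph{new} mergings among pairs already processed. Here is where I would want a monotonicity invariant — e.g. that the rerouting only ``uncrosses'' and never introduces a shared edge where none existed — so that the total merging count is subadditive over pairs.

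**The two-source finiteness, the crux.**

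The hard part is (b): showing $\mathcal{M}(c,c')<\infty$, i.e. that for \emph{any} acyclic digraph with two source–sink pairs $(S_1,R_1)$, $(S_2,R_2)$ of min-cuts $c_1=c, c_2=c'$, one can pick $c_1$ paths from $S_1$ to $R_1$ and $c_2$ paths from $S_2$ to $R_2$ (each family internally edge-disjoint) with a bounded number of mergings between the two families, the bound depending only on $c,c'$ and not on $|G|$. My approach would be to choose the path families \emph{greedily/extremally} — e.g. take $\alpha_1$ to be a ``leftmost'' system of Menger's paths and then $\alpha_2$ a system that is leftmost subject to $\alpha_1$ — and then analyze what a merged subpath looks like. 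Along a merged subpath, some $\alpha_{1,a}$ and $\alpha_{2,b}$ coincide; at the \emph{start} of the merge two distinct edges feed in, and at the \emph{end} the shared edge must split (otherwise the paths agree all the way to a common sink, contradicting distinctness of sinks, or the merge extends further). The key structural claim I would aim for is that each merging ``consumes'' some resource that is globally bounded in terms of $c,c'$: for instance, that between two consecutive mergings along a fixed pair $(\alpha_{1,a},\alpha_{2,b})$ there must be a cut edge of a small min-cut, or that the set of edges immediately following merges forms an antichain whose size is controlled by the min-cut via Menger's theorem applied to an auxiliary graph. Concretely, I would try to show that if there were ``too many'' mergings, one could reroute to strictly decrease some potential (like total length, or number of crossings, or a lexicographic vector of edge-uses), contradicting extremality of the chosen families — a descent argument forcing the merging count below a function of $(c_1,c_2)$ only.

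**Carrying it out and the main obstacle.**

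Concretely the steps are: (1) fix notation for the two-commodity subgraph $G|...)$ and reduce to the case where $G$ is exactly the union of the two commodities' relevant vertices; (2) define the extremal choice of $(\alpha_1,\alpha_2)$ and record its minimality properties under swap-operations; (3) prove the local structure of a single merged subpath (distinct in-edges, eventual split, and that a path of one family can be ``peeled off'' and rerouted around the merge if the merge is ``redundant''); (4) bound the number of non-redundant mergings by a Menger/Dilworth-type counting tied to $c_1,c_2$; (5) assemble into $\mathcal{M}(c_1,c_2)<\infty$; (6) return to the $n$-commodity case and run the sequential uncrossing to get subadditivity. I expect the main obstacle to be step (4) together with the compatibility issue in step (6): it is easy to uncross \emph{one} pair, but proving that an uncrossing move for pair $(i,j)$ cannot cascade and spoil the bound already achieved for pair $(i,k)$ — i.e. finding the right global invariant that is simultaneously non-increasing for every pair — is the delicate point, and may require processing the pairs in a carefully chosen order (say by the partial order induced by the acyclic structure, or by min-cut size) and a potential function that is lexicographic across the commodities.
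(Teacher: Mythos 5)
Your high-level architecture (reduce to the two-commodity case, then prove $\mathcal{M}(c_1,c_2)<\infty$ by a descent/rerouting argument) matches the paper's, but at both of the points you yourself flag as ``the crux'' and ``the main obstacle'' the proposal stops short of the idea that actually closes the argument, so as it stands there is a genuine gap rather than a proof. For the two-commodity case, the missing ingredient is a concrete, checkable certificate that a rerouting exists. The paper gets this by defining, for merged subpaths of $\alpha_1,\alpha_2$, a ``semi-reachability through $\alpha_i$'' relation built from alternating $\alpha_j$-subpaths (traversed backwards, merging-free) and $\alpha_i$-subpaths (traversed forwards), and showing that a merged subpath semi-reachable from above by \emph{itself} yields a strict decrease in the merging count (Proposition~\ref{crossing}). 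The quantitative bound then comes from an auxiliary graph $\hat{G}$: delete the merged subpaths, reverse the $\alpha_2$-edges; a cycle in $\hat{G}$ gives the self-reachable configuration directly, and if $\hat{G}$ is acyclic it decomposes into exactly $c_1+c_2$ vertex-disjoint alternating paths, so with more than $c_1c_2(c_1+c_2)$ merge-endpoints some single path carries two merged subpaths with the same index pair $(i,j)$, which again forces a self-reachable configuration. Your proposed mechanisms (leftmost/extremal path systems, an antichain of post-merge edges bounded by a min-cut, a length or lexicographic potential) do not obviously produce a bound depending only on $(c_1,c_2)$, and you give no argument that they do; the pigeonhole over index pairs along the $c_1+c_2$ auxiliary paths is the step your sketch is missing.

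For the reduction to pairs, your plan of gluing pairwise-optimal choices and ``uncrossing'' pair by pair runs exactly into the cascading problem you acknowledge, and you leave it unresolved. The paper avoids it by inducting on \emph{commodities}, not pairs: fix $\alpha_1,\dots,\alpha_k$ achieving at most $\mathcal{M}(c_1,\dots,c_k)$ mergings among themselves, add $\alpha_{k+1}$, and take as potential the number of \emph{new} mergings (those involving $\alpha_{k+1}$). If this exceeds $\sum_{i\le k}\mathcal{M}(c_i,c_{k+1})$, the pigeonhole gives some pair $(\alpha_i,\alpha_{k+1})$ admitting a rerouting; if it reroutes $\alpha_{k+1}$ the potential drops, and if it reroutes $\alpha_i$ the paper's counting shows the rerouted $\alpha_i$ must ``miss'' so many of the newly merged subpaths that, across all such reroutings, at least one new merged subpath is abandoned by every $\alpha_j$, $j\le k$, so the potential still drops. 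That counting argument — not a processing order or a lexicographic potential over pairs — is what makes the subadditivity $\mathcal{M}(c_1,\dots,c_n)\le\mathcal{M}(c_1,\dots,c_{n-1})+\sum_{i<n}\mathcal{M}(c_i,c_n)$ go through, and it is absent from your proposal.
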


Now consider
$$
\alpha_i=\{\alpha_{i, 1}, \alpha_{i, 2}, \cdots, \alpha_{i, c_i}\},
$$
a set of Menger's paths from $S_i$ to $R_i$, and
$$
\alpha_j=\{\alpha_{i, 1}, \alpha_{i, 2}, \cdots, \alpha_{i, c_j}\},
$$
a set of Menger's paths from $S_j$ to $R_j$. For two merged subpaths $u, v$ by $\alpha_i$ and $\alpha_j$ (more rigorously, by some paths from $\alpha_i$ and $\alpha_j$), we say $v$ is {\it semi-reachable through $\alpha_i$} by $u$ if there is a sequence of merged subpaths $\gamma_0, \gamma_1, \cdots, \gamma_n$ by $\alpha_i$ and $\alpha_j$ such that
\begin{enumerate}
\item $\gamma_0=u$, $\gamma_n=v$;
\item For each feasible $k$, $\gamma_{2k+1}$ is smaller than $\gamma_{2k}$ on some $\alpha_{j, t_k}$, and $\alpha_{j, t_k}[b(\gamma_{2k+1}), a(\gamma_{2k})]$ doesn't merge with any paths from $\alpha_i$;
\item For each feasible $k$, $\gamma_{2k+1}$ is smaller than $\gamma_{2k+2}$ on some $\alpha_{i, h_k}$.
\end{enumerate}
We say $v$ is {\it regularly-semi-reachable through $\alpha_i$} by $u$ if besides the three conditions above, we further require that all $h_k$'s in condition $3$ are distinct from each other. If $n$ is an even number, we say $v$ is semi-reachable through $\alpha_i$ by $u$ {\em from above}; if $n$ is an odd number, we say $v$ is semi-reachable through $\alpha_i$ by $u$ {\em from below} (``above'' and ``below'' naturally come up when $G$ is ``drawn'' in an $(x,y)$-plane such that smaller paths are always higher than larger paths, as exemplified in Figure~\ref{reachable}). It immediately follows that for three merged subpaths $u, v, w$ by $\alpha_i, \alpha_j$, if $v$ is semi-reachable through $\alpha_i$ from above by $u$, $w$ is semi-reachable through $\alpha_i$ from above by $v$, then $w$ is also semi-reachable through $\alpha_i$ from above by $u$.

The following two propositions are more or less obvious.
\begin{pr}  \label{quasi}
Consider Menger's path sets $\alpha_i, \alpha_j$ and merged subpaths by $\alpha_i, \alpha_j$. For a merged subpath $v$ semi-reachable through $\alpha_i$ by a merged subpath $u$ via a sequence of merged subpaths $\gamma_0, \gamma_1, \cdots, \gamma_n$, if none of $\gamma_i$'s is semi-reachable through $\alpha_i$ by itself from above, then $v$ is regularly-semi-reachable through $\alpha_i$ by $u$.
\end{pr}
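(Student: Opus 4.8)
To prove Proposition~\ref{quasi}, the plan is to argue via a minimal witnessing sequence. Among all sequences of merged subpaths by $\alpha_i,\alpha_j$ that witness ``$v$ is semi-reachable through $\alpha_i$ by $u$'' and that consist only of members of the given sequence $\gamma_0,\gamma_1,\dots,\gamma_n$, I would choose one of minimal length, say $\sigma_0,\sigma_1,\dots,\sigma_N$; such a sequence exists because the given one is a candidate, and every $\sigma_m$ is one of the $\gamma$'s. Writing $h_0,h_1,\dots$ for the indices supplied by condition~$3$ for $\sigma_0,\dots,\sigma_N$, it then suffices to prove that these $h$'s are pairwise distinct, for then $\sigma_0,\dots,\sigma_N$ itself witnesses that $v$ is regularly-semi-reachable through $\alpha_i$ by $u$.

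So suppose $h_k=h_l=:h$ with $k<l$. By condition~$3$, $\sigma_{2k+1}$ is smaller than $\sigma_{2k+2}$ on $\alpha_{i,h}$ and $\sigma_{2l+1}$ is smaller than $\sigma_{2l+2}$ on $\alpha_{i,h}$; in particular all four of these merged subpaths are subpaths of the single directed path $\alpha_{i,h}$, so their endpoints occur in a definite order along $\alpha_{i,h}$. I would split on the position of $b(\sigma_{2l+1})$ relative to $a(\sigma_{2k+2})$. If $b(\sigma_{2l+1})$ is not after $a(\sigma_{2k+2})$ along $\alpha_{i,h}$, then the segment of $\alpha_{i,h}$ joining them shows $\sigma_{2l+1}$ is smaller than $\sigma_{2k+2}$ on $\alpha_{i,h}$, and I claim
\[
\sigma_{2k+2},\ \sigma_{2k+3},\ \dots,\ \sigma_{2l+1},\ \sigma_{2k+2}
\]
is a valid witness (of even length $2(l-k)$, hence a witness ``from above''): its condition~$2$ requirements are exactly those of $\sigma$ at indices $k+1,\dots,l$, hence inherited verbatim including the ``does not merge with any path from $\alpha_i$'' clauses, while its condition~$3$ requirements are those of $\sigma$ at indices $k+1,\dots,l-1$ together with the forward step just exhibited. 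This makes $\sigma_{2k+2}$ semi-reachable through $\alpha_i$ by itself from above; but $\sigma_{2k+2}$ is one of $\gamma_0,\dots,\gamma_n$, contradicting the hypothesis of the proposition.

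In the complementary case $b(\sigma_{2l+1})$ lies strictly after $a(\sigma_{2k+2})$ along $\alpha_{i,h}$; combining this with the facts that $b(\sigma_{2k+1})$ is not after $a(\sigma_{2k+2})$ and $b(\sigma_{2l+1})$ is not after $a(\sigma_{2l+2})$, we get that $b(\sigma_{2k+1})$ lies strictly before $a(\sigma_{2l+2})$, i.e.\ $\sigma_{2k+1}$ is smaller than $\sigma_{2l+2}$ on $\alpha_{i,h}$. Deleting $\sigma_{2k+2},\dots,\sigma_{2l+1}$ from $\sigma$ and joining $\sigma_{2k+1}$ directly to $\sigma_{2l+2}$ by this forward step (taking $h$ as the relevant index there) gives a strictly shorter sequence, still built from members of $\{\gamma_0,\dots,\gamma_n\}$ and still witnessing that $v$ is semi-reachable through $\alpha_i$ by $u$ — the surviving condition~$2$ clauses are untouched and the discarded ones involved only discarded terms — contradicting the minimality of $\sigma$. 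Since both cases are impossible, the $h$'s must be distinct, which proves the proposition.

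I expect the main obstacle to be the bookkeeping in these two surgeries rather than any conceptual difficulty: one must check that the re-indexing preserves the even/odd alternation (it does, since $2(l-k)$ is even), that the clauses ``$\alpha_{j,t_k}[b(\gamma_{2k+1}),a(\gamma_{2k})]$ does not merge with any path from $\alpha_i$'' survive verbatim, and — the fussiest point — handle the degenerate configurations in which some of $\sigma_{2k+1},\sigma_{2k+2},\sigma_{2l+1},\sigma_{2l+2}$ share edges on $\alpha_{i,h}$, so that a ``smaller than'' relation is realized only through a trivial one-vertex connecting path. Phrasing the case split in terms of the individual endpoints $b(\sigma_{2l+1})$ and $a(\sigma_{2k+2})$ (which are genuinely linearly ordered on $\alpha_{i,h}$) is designed to keep these degeneracies manageable.
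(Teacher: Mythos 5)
Your proposal is correct and follows essentially the same route as the paper's sketch: both locate a repeated index $h_k=h_l$ and splice the sequence from $\sigma_{2k+1}$ directly to $\sigma_{2l+2}$, with the hypothesis that no $\gamma_i$ is semi-reachable by itself from above ruling out the only obstruction (your Case 1 makes explicit the ``one checks'' step the paper leaves implicit). Your use of a minimal witnessing sequence in place of the paper's explicit iteration is a cosmetic reorganization of the same argument.
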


To see this, consider any $k < l$ (if any) such that $h_k=h_l$ and $h_k, h_{k+1}, h_{k+2}, \cdots, h_{l-1}$ are all distinct from each other. Since none of $\gamma_i$'s is semi-reachable through $\alpha_i$ by itself from above, one checks that $v$ is semi-reachable through $\alpha$ via a shorter intermediate sequence of merged subpaths
$$
\gamma_0, \cdots, \gamma_{2k+1}, \gamma_{2l+2}, \cdots, \gamma_n.
$$
Continue to find such shorter intermediate sequences iteratively in the similar fashion until all $h_k$'s (corresponding to the new intermediate sequence) are all distinct from each other.

\begin{pr} \label{crossing}
Consider Menger's path sets $\alpha_i, \alpha_j$ and merged subpaths by $\alpha_i, \alpha_j$. If a merged subpath $u$ is semi-reachable through $\alpha_i$ by itself from above via a sequence of merged subpaths $\gamma_0, \gamma_1, \cdots, \gamma_{2m}=\gamma_0$, then one can find a new set, still denoted by $\alpha_i$, of $m$ pairwise edge-disjoint paths from $S_i$ to $R_i$ such that the number of mergings between $\alpha_j$ and the new $\alpha_i$ strictly decreases.
\end{pr}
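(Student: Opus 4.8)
The idea is to treat $\gamma_0,\gamma_1,\dots,\gamma_{2m}=\gamma_0$ as an augmenting cycle and reroute $\alpha_i$ along it. First I would pass to a \emph{minimal-length} witnessing sequence for ``$u$ is semi-reachable through $\alpha_i$ by itself from above'': using the shortcutting move from the proof of Proposition~\ref{quasi} (if $h_k=h_l$ or $t_k=t_l$ with $k<l$, delete the intermediate terms to obtain a shorter witnessing sequence), minimality forces $h_0,\dots,h_{m-1}$ and $t_0,\dots,t_{m-1}$ to be pairwise distinct, and in particular $m\le c_i$. Write $V_k:=\gamma_{2k+1}$ and $P_k:=\gamma_{2k}$, with the subscripts of $P$ and of $h,t$ read mod $m$; by conditions $2$ and $3$, on $\alpha_{j,t_k}$ the subpath $V_k$ precedes $P_k$ and the connecting stretch $C_k:=\alpha_{j,t_k}[b(V_k),a(P_k)]$ does not merge with $\alpha_i$, while on $\alpha_{i,h_k}$ the subpath $V_k$ precedes $P_{k+1}$, whose connecting stretch there I call $C'_k:=\alpha_{i,h_k}[b(V_k),a(P_{k+1})]$. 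Geometrically the cycle runs backward along $\alpha_j$ from each ``peak'' $P_k$ down to the ``valley'' $V_k$, then forward along $\alpha_i$ from $V_k$ up to the next peak $P_{k+1}$.

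For the rerouting, set, for $k=0,\dots,m-1$,
$$\widetilde\alpha_k\ :=\ \alpha_{i,h_k}\bigl[S_i,b(V_k)\bigr]\ \circ\ C_k\ \circ\ \alpha_{i,h_{k-1}}\bigl[a(P_k),R_i\bigr],$$
which is a legitimate directed $S_i$--$R_i$ walk because $b(V_k)$ lies on both $\alpha_{i,h_k}$ and $\alpha_{j,t_k}$ and $a(P_k)$ lies on both $\alpha_{j,t_k}$ and $\alpha_{i,h_{k-1}}$, hence a directed path since $G$ is acyclic. The new family consists of $\widetilde\alpha_0,\dots,\widetilde\alpha_{m-1}$ together with the $c_i-m$ untouched members of $\alpha_i$; equivalently it is obtained from $\alpha_i$ by deleting the stretches $C'_0,\dots,C'_{m-1}$ and inserting the connectors $C_0,\dots,C_{m-1}$. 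A one-line flow-conservation check at the vertices $b(V_k)$ and $a(P_k)$ shows that the associated edge-capacity-one flow from $S_i$ to $R_i$ still has value $c_i$; the only thing that can go wrong is that the new family fails to be edge-disjoint, which is exactly the obstacle addressed next.

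The crux is: the inserted connectors $C_k$ are pairwise edge-disjoint (immediate, since the $\alpha_{j,t_k}$ are pairwise edge-disjoint and the $t_k$ are distinct) and no $C_k$ shares an edge with $\bigcup_l\alpha_{i,l}$. For the latter I would use the fact --- and this is where distinctness of the sources is essential --- that two of our directed paths issuing from \emph{different} sources that share an edge must merge, namely at the first edge of the maximal common subpath containing the shared edge (a common subpath of two such paths contains no source, so its first edge has two distinct preceding edges). Hence if $C_k$ shared an edge with some $\alpha_{i,l}$, the resulting merging with $\alpha_i$ would sit at an edge that is either an edge of $C_k$, contradicting condition $2$, or an edge of a merged subpath $\gamma_r$ lying on $\alpha_{j,t_k}$ strictly between $V_k$ and $P_k$ --- i.e.\ again an edge of $C_k$ carrying a merging with $\alpha_i$, contradicting condition $2$ once more; the one residual degenerate possibility is $l=h_k$ with the shared edge lying inside the discarded stretch $C'_k$, which is harmless. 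This is the main point, and I expect it to require a short but fiddly case analysis, possibly streamlined by first normalizing each $\gamma_r$ to be the maximal common subpath of the two paths realizing it. Granting it, the new family is a set of $c_i$ pairwise edge-disjoint $S_i$--$R_i$ paths.

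Finally, that the merging count strictly drops. Since the paths of $\alpha_j$ (and those of $\alpha_i$) are pairwise edge-disjoint, along any edge $e$ of a connector $C_k\subseteq\alpha_{j,t_k}$ the only $\alpha_j$-path through $e$ is $\alpha_{j,t_k}$, and --- using the edge-disjointness just established --- the only new $\alpha_i$-path through $e$ is $\widetilde\alpha_k$, which runs along $\alpha_{j,t_k}$ there and so has the same preceding edge as $\alpha_{j,t_k}$; thus $e$ is not a merging between $\alpha_j$ and the new $\alpha_i$. Away from the connectors the new $\alpha_i$-paths traverse old $\alpha_i$-edges with their old preceding edges, the only exceptions being the first edges of the $P_k$ (treated below); hence no new merging between $\alpha_j$ and $\alpha_i$ is created. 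On the other hand the merging at the first edge of $u=\gamma_0=\gamma_{2m}=P_0$ is destroyed: in the new family the unique $\alpha_i$-path through that edge is $\widetilde\alpha_0$, which enters $u$ along $\alpha_{j,t_0}$ and therefore shares its preceding edge with the $\alpha_j$-path $\alpha_{j,t_0}$, and, as above, no third path is present to supply a differing preceding edge. Therefore the new $\alpha_i$ merges with $\alpha_j$ strictly fewer times than the old one, as required.
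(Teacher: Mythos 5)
Your rerouting is exactly the paper's: the paper replaces $\alpha_{i,h_k}[b(\gamma_{2k+1}),R_i]$ by $\alpha_{j,t_k}[b(\gamma_{2k+1}),a(\gamma_{2k})]\circ\alpha_{i,h_{k-1}}[a(\gamma_{2k}),R_i]$ after reducing to the regularly-semi-reachable case, which is precisely your $\widetilde\alpha_k$ after your minimal-cycle normalization. The only difference is that you spell out the verifications (distinctness of the $t_k$'s, edge-disjointness of the connectors from $\alpha_i$ via the no-merging clause of condition $2$, and the bookkeeping of destroyed versus created mergings) that the paper leaves implicit in its informal paragraph, and these check out.
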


To see this, consider the following {\it reroutings} of all $\alpha_i$-paths, each of which reaches some $b(\gamma_{2k+1})$: when any of such $\alpha_i$-path, say $\alpha_{i, l}$, reaches $b(\gamma_{2k+1})$, instead of continuing on its original ``trajectory'', it is rerouted to continue on $\alpha_{j, t_k}[b(\gamma_{2k+1}), b(\gamma_{2k})]$, and then from $b(\gamma_{2k})$ it continues on the $\alpha_i$-path (typically different from $\alpha_{i, l}$) incident with $b(\gamma_{2k})$. For example, for the case when $u$ is regularly-semi-reachable through $\alpha_i$ by itself from above, one can reroute $\alpha_i$ to obtain a set of $m$ pairwise edge-disjoint paths from $S_i$ to $R_i$, by replacing $\alpha_{i, h_k}[{b(\gamma_{2k+1})}, R_i]$ by $\alpha_{j, t_k}[{b(\gamma_{2k+1})}, a(\gamma_{2k})] \circ \alpha_{i, h_{k-1}}[a(\gamma_{2k}), R_i]$ for all feasible $k$ (here $h_{0} \stackrel{\triangle}{=}h_{m}$).

Note that the above reroutings ``desert'' certain subpaths in the original $\alpha_i$ and ``borrow'' other subpaths from $\alpha_j$ to obtain a new Menger's path set $\alpha_i$ from $S_i$ to $R_i$. One checks that after such reroutings, the number of mergings between $\alpha_i$ and $\alpha_j$ strictly decreases (however the number of mergings in $G$ may remain the same). We say that $\alpha_i$ is {\em reroutable} using $\alpha_j$ if there exist certain reroutings of some $\alpha_i$-paths using some $\alpha_j$-paths; we say $G$ is {\em reroutable} with respect to $\alpha_1, \alpha_2, \cdots, \alpha_n$ if there exist $i \neq j$ such that $\alpha_i$ is is reroutable using $\alpha_j$.

The following proposition deals with the opposite direction of Proposition~\ref{crossing} for the case when $G$ has $2$ distinct sources and $2$ distinct sinks.
\begin{pr}
Consider the case when there are $2$ distinct sources and $2$ distinct sinks in $G$. For any rerouting of $\alpha_1$ using $\alpha_2$-subpaths, there is a merged subpath semi-reachable through $\alpha_1$ by itself from above.
\end{pr}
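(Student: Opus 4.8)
This is precisely the converse of Proposition~\ref{crossing}, and my plan is to reverse the construction given there: from a prescribed rerouting, read off a closed chain of merged subpaths, and then check that this chain is exactly a witness that one of its members is semi-reachable through $\alpha_1$ by itself. The guiding principle is that a rerouting cannot ``leak'': the rerouted family $\alpha_1'$ is again a family of $c_1$ pairwise edge-disjoint paths from $S_1$ to $R_1$, and in an acyclic graph such a family occupies a rigidly determined set of edges, so the way a rerouting re-splices $\alpha_1$ is forced to close up. Concretely, I would first observe that the difference of the edge-indicator vectors of $\alpha_1'$ and $\alpha_1$ satisfies conservation at every vertex (both are unit flows of value $c_1$ from $S_1$ to $R_1$), hence is an integral circulation, hence identically $0$ since $G$ is acyclic; so $\alpha_1'$ and $\alpha_1$ use exactly the same edge set. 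Consequently every $\alpha_2$-subpath that is spliced in has all its edges among those of $\alpha_1$, and, because such a subpath does not merge with $\alpha_1$ along its interior, the $\alpha_1$-path carrying its edges is constant, so each spliced-in $\alpha_2$-subpath coincides, edge for edge, with a subpath of a single $\alpha_{1,\cdot}$. In other words a rerouting of $\alpha_1$ using $\alpha_2$-subpaths is nothing but a rewiring of the $c_1$ ``lanes'' $\alpha_{1,1},\dots,\alpha_{1,c_1}$: at finitely many vertices it re-matches incoming edges to outgoing edges among the lanes, and an individual lane-change from a lane $X$ to a lane $Y$ at a vertex $v$ is licensed by some $\alpha_{2,t}$ that merges with $X$ along a merged subpath ending at $v$ and runs along $Y$ through a merged subpath beginning at $v$.

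Next I would run the lanes. Because we genuinely use $\alpha_2$-subpaths, $\alpha_1'\ne\alpha_1$, so at least one lane-change occurs; starting from one, follow the rerouted path forward. Since $\alpha_1'$ and $\alpha_1$ occupy the same edge set and each lane of $\alpha_1$ has exactly one first edge and one last edge (each used by exactly one rerouted path), every lane that is left must be rejoined by a uniquely determined later lane-change, which forces the lane-changes to organize into a cycle $v_1\to v_2\to\cdots\to v_m\to v_1$. Reading off, in cyclic order, the two merged subpaths attached at each $v_k$ — the one ending at $v_k$ where a lane is left and the one beginning at $v_k$ where a lane is joined — produces a sequence $\gamma_0,\gamma_1,\dots,\gamma_{2m}=\gamma_0$ of merged subpaths by $\alpha_1,\alpha_2$. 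I would then verify the three conditions defining ``$\gamma_0$ semi-reachable through $\alpha_1$ by itself'' one at a time: condition~1 because the cycle closes; condition~2 because from $\gamma_{2k}$ back to $\gamma_{2k+1}$ one travels backward along the licensing $\alpha_{2,t_k}$ exactly over a spliced-in subpath, which does not merge with $\alpha_1$; and condition~3 because from $\gamma_{2k+1}$ to $\gamma_{2k+2}$ one travels forward along the single stretch of a lane $\alpha_{1,h_k}$ that the rerouted path actually reuses, the orientation here being forced by the edge-disjointness of $\alpha_1'$. Since the sequence has even length $2m$, $\gamma_0$ is semi-reachable through $\alpha_1$ by itself \emph{from above}, which is the assertion.

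I expect the heart of the difficulty to be this middle step: proving cleanly that the lane-changes close into a cycle rather than forming a branching or a terminating chain, and extracting the merged subpaths with the correct orientations. This is where one must use simultaneously that $\alpha_1$ is edge-disjoint, that $\alpha_1'$ is edge-disjoint, and that $G$ is acyclic; it is essentially a bookkeeping argument about ``who covers each outgoing edge of each lane at each vertex''. One must also dispose of the degenerate cases hidden behind the word ``attached'' — a spliced subpath or a merged subpath collapsing to a single vertex, several lane-changes occurring at one vertex, or an endpoint lying on a merged subpath shared by more than two of the $\alpha_{1,\cdot},\alpha_{2,\cdot}$ — and check that the ``does not merge with $\alpha_1$'' clause of a spliced subpath remains available once the subpath has been re-described as a piece of some $\alpha_{1,\cdot}$, so that condition~2 emerges with exactly the non-merging requirement appearing in the definition.
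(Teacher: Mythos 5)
Your overall plan---reverse the construction of Proposition~\ref{crossing} by extracting a closed chain of merged subpaths from the given rerouting and reading it off as a witness of self-semi-reachability---is the same strategy the paper uses (the paper takes the smallest deserted subpath on each affected $\alpha_1$-path, uses the borrowed $\alpha_2$-subpaths to link the affected paths to one another, and closes the chain by finiteness of the index set). But your execution rests on a false step. You argue that the difference of the edge-indicator vectors of $\alpha_1'$ and $\alpha_1$ is an integral circulation and hence identically zero because $G$ is acyclic. Acyclicity kills only \emph{nonnegative} circulations; the difference of two flows is a signed circulation, and a nonzero signed circulation decomposes into cycles of the \emph{underlying undirected graph} (forward arcs where the difference is positive, backward arcs where it is negative), which a DAG has in abundance. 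Already with $c_1=1$: two distinct $S_1$--$R_1$ paths in a DAG are two unit flows whose difference is a nonzero circulation. Consequently your conclusion that $\alpha_1'$ and $\alpha_1$ occupy the same edge set is false, and it has to be: a rerouting by definition deserts $\alpha_1$-subpaths and splices in $\alpha_2$-subpaths consisting of genuinely new edges---if every spliced-in $\alpha_2$-subpath coincided edge for edge with a subpath of some $\alpha_{1,\cdot}$, the rerouting would change nothing and could never reduce the number of mergings, contradicting Proposition~\ref{crossing}.

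Because the entire ``lane-rewiring'' picture, and in particular your justification that the lane-changes must close into a cycle (``since $\alpha_1'$ and $\alpha_1$ occupy the same edge set and each lane has exactly one first and one last edge\dots''), is built on that false claim, the argument does not go through as written. The repair is to drop the flow identity and work directly with the deserted and borrowed subpaths, as the paper does: each $\alpha_{1,i}$ from which a subpath is deserted is re-entered by the rerouted family via a borrowed $\alpha_2$-subpath whose endpoint lies on some (possibly different) affected $\alpha_{1,j}$; this defines a map on the finite set of affected indices, which must contain a cycle, and that cycle---with the borrowed subpaths supplying the non-merging $\alpha_2$-links of condition~2 and the runs along the $\alpha_{1,\cdot}$ between the relevant deserted subpaths supplying condition~3---is exactly the required even-length sequence exhibiting a merged subpath semi-reachable through $\alpha_1$ by itself from above.
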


\begin{proof}

Assume that subpaths $\gamma_1, \gamma_2, \cdots, \gamma_l$ are the ``deserted'' subpaths for a given rerouting of $\alpha_1$, and these subpaths ``spread'' out to $\alpha_{1, 1}, \alpha_{1, 2}, \cdots, \alpha_{1, k}$, $k \leq l$. Without loss of generality, further assume that $\gamma_1, \gamma_2, \cdots, \gamma_{k}$ are the smallest such deserted subpaths on $\alpha_{1, 1}, \alpha_{1, 2}, \cdots, \alpha_{1, k}$, respectively. Then there are $\alpha_2$-subpaths $\eps_1, \eps_2, \cdots, \eps_{k}$ such that all $\eps_i$'s do not merge with any $\alpha_1$-paths, and for each $i$ with $1 \leq i \leq k$ and correspondingly certain $k_i$ with $1 \leq k_i \leq l$, $a(\eps_i)=a(\gamma_i)$, $b(\eps_i)=b(\gamma_{k_i})$. Surely one can find a subset $\{\hat{k}_1, \hat{k}_2, \cdots, \hat{k}_s\}$ of $\{1, 2, \cdots, k\}$ such that $b(\eps_1) \in \alpha_{1, \hat{k}_1}$, $b(\eps_{\hat{k}_1}) \in \alpha_{1, \hat{k}_2}$, $\cdots$, $b(\eps_{\hat{k}_s}) \in \alpha_{1, 1}$, which implies that there is a merged subpath (for instance, the one merged by $\gamma_1$ and $\eps_{\hat{k}_s}$) semi-reachable through $\alpha_1$ by itself from above.

\end{proof}

\begin{rem}  \label{imaginary}
Consider any set of edge-disjoint paths $\beta=\{\beta_1, \beta_2, \cdots, \beta_m\}$ in $G$. If we add ``imaginary'' source $S$ together with $m$ disjoint edges from $S$ to all $a(\beta_i)$'s, and add ``imaginary'' sink $R$ together with $m$ disjoint edges from all $b(\beta_i)$'s to $R$, we obtain a set of Menger's paths from $S$ to $R$ in the graph extended from $G$. In this section, we don't differentiate between a set of Menger's paths and a set of edge-disjoint paths for simplicity, since we can always assume the existence of such imaginary sources and sinks when they are needed.
\end{rem}

\begin{exmp}
\psfrag{a1}{$\alpha_{i, 1}$} \psfrag{a2}{$\alpha_{i, 2}$} \psfrag{a3}{$\alpha_{i, 3}$}
\psfrag{b1}{$\alpha_{j, 1}$} \psfrag{b2}{$\alpha_{j, 2}$} \psfrag{b3}{$\alpha_{j, 3}$} \psfrag{b4}{$\alpha_{j, 4}$}
\psfrag{g0}{$\gamma_0$} \psfrag{g1}{$\gamma_1$} \psfrag{g2}{$\gamma_2$} \psfrag{g3}{$\gamma_3$} \psfrag{g4}{$\gamma_4$} \psfrag{g5}{$\gamma_5$} \psfrag{g}{$\gamma$}
\begin{figure}
\centerline{\includegraphics[width=2.5in]{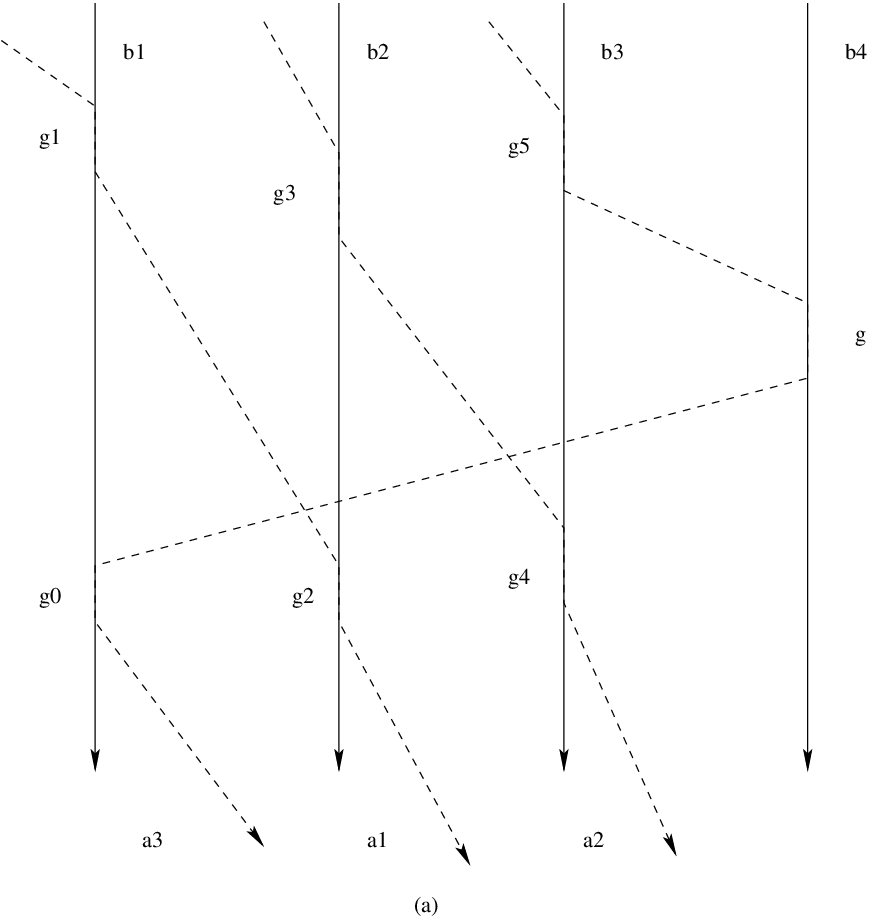} \hspace{0.5cm} \includegraphics[width=2.5in]{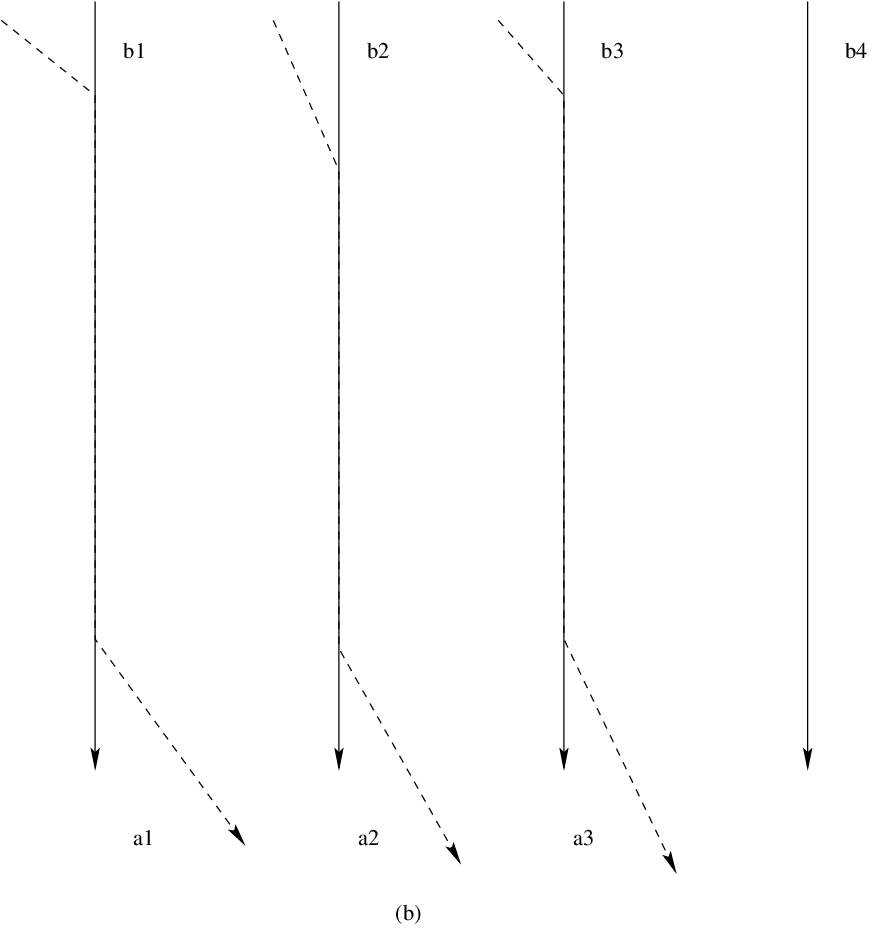}}
\caption{an example}
\label{reachable}
\end{figure}

In Figure~\ref{reachable}(a), $\gamma$ and $\gamma_i$ ($i=0, 1, \cdots, 5$) are merged subpaths from $\alpha_i=\{\alpha_{i, 1}, \alpha_{i, 2}, \alpha_{i, 3}\}$ and $\alpha_j=\{\alpha_{j, 1}, \alpha_{j, 2}, \alpha_{j, 3}, \alpha_{j, 4}\}$. By definitions, we have
\begin{enumerate}
\item $\gamma_1, \gamma_3, \gamma_5$ are semi-reachable through $\alpha_i$ from below by $\gamma_0$.
\item $\gamma_3, \gamma_5$ are semi-reachable through $\alpha_i$ from below by $\gamma_2$.
\item $\gamma_2, \gamma_4$ are semi-reachable through $\alpha_i$ from above by $\gamma_0$.
\item $\gamma$ is semi-reachable through $\alpha_i$ from above by $\gamma_0, \gamma_2, \gamma_4$.
\item $\gamma_0$ is semi-reachable through $\alpha_i$ from above by itself (via the sequence of merged subpaths $\gamma_0, \gamma_1, \gamma_2, \gamma_3, \gamma_4, \gamma_5, \gamma_0$) , so are $\gamma_2, \gamma_4$, thus, as shown in Figure~\ref{reachable}(b), $\alpha_i$ is routable using $\alpha_j$ by Proposition~\ref{crossing}.
\end{enumerate}

\end{exmp}

Before the proof of Theorem~\ref{main}, we shall first prove the following lemma.

\begin{lem}  \label{twotwo}
For any $c_1, c_2$,
$$
\mathcal{M}(c_1, c_2) \leq c_1c_2(c_1+c_2)/2.
$$
\end{lem}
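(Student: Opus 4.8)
The plan is to prove the bound by first rerouting the path sets into a \emph{stable} configuration and then counting mergings there. Starting from arbitrary Menger's path sets $\alpha_1=\{\alpha_{1,1},\dots,\alpha_{1,c_1}\}$ and $\alpha_2=\{\alpha_{2,1},\dots,\alpha_{2,c_2}\}$, I would repeatedly apply Proposition~\ref{crossing}: whenever some merged subpath is semi-reachable through $\alpha_1$ by itself from above, reroute $\alpha_1$ (using $\alpha_2$-subpaths) to strictly decrease the number of mergings between $\alpha_1$ and $\alpha_2$, and symmetrically for $\alpha_2$. Since this number is a nonnegative integer, the process stops at path sets --- still written $\alpha_1,\alpha_2$ --- in which no merged subpath is semi-reachable through $\alpha_1$, nor through $\alpha_2$, by itself from above. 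Because $\mathcal M(c_1,c_2)$ is the supremum over all admissible $G$ of the minimum merging number, it then suffices to bound the number of mergings in any such stable configuration.

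Next I would extract the relevant local structure of a stable configuration. Fix $s,t$ and two mergings $e,e'$ of the pair $(\alpha_{1,s},\alpha_{2,t})$ that are consecutive along $\alpha_{1,s}$ --- equivalently along $\alpha_{2,t}$, since $G$ is acyclic. If the subpath $\alpha_{2,t}[b(e),a(e')]$ merged with no $\alpha_1$-path, then $\gamma_0=e'$, $\gamma_1=e$, $\gamma_2=e'$ would witness $e'$ being semi-reachable through $\alpha_1$ by itself from above, contrary to stability; so there is a merging of $\alpha_{2,t}$ with some $\alpha_{1,s'}$ strictly between $e$ and $e'$ on $\alpha_{2,t}$, and necessarily $s'\neq s$ (no $(\alpha_{1,s},\alpha_{2,t})$-merging lies strictly between $e$ and $e'$). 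Interchanging $\alpha_1$ and $\alpha_2$ gives, strictly between $e$ and $e'$ on $\alpha_{1,s}$, a merging of $\alpha_{1,s}$ with some $\alpha_{2,t'}$, $t'\neq t$. In particular, along any one path, no two consecutive mergings involve the same path of the other set.

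For the count, I would combine this with Proposition~\ref{quasi}. In the stable configuration every semi-reachability from above is regular, so ``semi-reachable through $\alpha_1$ from above'' is a strict partial order on merged subpaths whose chains have length at most $c_1$ (a regular chain is indexed by distinct $h_k\in\{1,\dots,c_1\}$), and likewise ``through $\alpha_2$ from above'' has chains of length at most $c_2$. I would attach to each merging $\delta$ its heights $a(\delta)\in\{1,\dots,c_1\}$ and $b(\delta)\in\{1,\dots,c_2\}$ in these two orders, and then show that as one moves forward through the mergings of a fixed pair $(\alpha_{1,s},\alpha_{2,t})$ the heights $a$ and $b$ are forced upward --- the witnesses being built from the ``fresh'' intermediate mergings found above together with the clean subpaths that sit between consecutive mergings, with the first and last merging on each path handled by adjoining the imaginary sources and sinks of Remark~\ref{imaginary}. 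This should give that each pair carries at most $(c_1+c_2)/2$ mergings, and summing over the $c_1c_2$ pairs yields $\mathcal M(c_1,c_2)\le c_1c_2(c_1+c_2)/2$.

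The hard part is exactly this monotonicity of the heights: turning the local ``there is a fresh merging in between'' statement into a genuine semi-reachability witness that advances $a$ or $b$, while controlling the indices $h_k$ so that the chains being built stay regular and the heights stay within $\{1,\dots,c_1\}$ and $\{1,\dots,c_2\}$. The termination of the rerouting process and the partial-order bookkeeping, by contrast, are routine given Propositions~\ref{quasi} and~\ref{crossing}.
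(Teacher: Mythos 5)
Your overall strategy --- reroute via Proposition~\ref{crossing} until no merged subpath is semi-reachable through $\alpha_1$ or through $\alpha_2$ by itself from above, then bound the number of mergings in the resulting stable configuration --- matches the paper's, and your local observation (between two consecutive mergings of a fixed pair $(\alpha_{1,s},\alpha_{2,t})$ there must sit a merging with a different path of the other family, else $e',e,e'$ witnesses a self-semi-reachability from above) is correct. But the proof is not complete: the entire quantitative content of the lemma is concentrated in the ``monotonicity of the heights'' step, which you explicitly defer as ``the hard part'' and do not carry out. Moreover, the height functions are not obviously well defined with the ranges you claim. Proposition~\ref{quasi} regularizes a \emph{single} witness sequence from $u$ to $v$, so it bounds the number of down--up steps in \emph{some} witness by $c_1$; it does not bound the length of chains in the induced partial order, because regularization deletes intermediate merged subpaths. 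A chain $u_0<u_1<\cdots<u_L$ composes to one witness that can be shortened to at most $c_1$ steps with no control on $L$, so the claim that $a(\delta)\in\{1,\dots,c_1\}$ needs an argument you have not supplied.

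The paper closes exactly this gap with a different device. It builds an auxiliary graph $\hat G$ by deleting all merged subpaths and reversing the edges lying only on $\alpha_2$-paths (after detaching vertex intersections that share no edge). A directed cycle in $\hat G$ corresponds to a self-semi-reachability from above, so in a stable configuration $\hat G$ is acyclic; a degree count ($S_1, R_2$ of out-degree $c_1, c_2$, $S_2, R_1$ of in-degree $c_1, c_2$, and every merged-subpath endpoint of in- and out-degree $1$) forces $\hat G$ to decompose into exactly $c_1+c_2$ vertex-disjoint alternating paths. On any one such path, two endpoints of merged subpaths belonging to the same pair $(\alpha_{1,i},\alpha_{2,j})$ again yield a self-semi-reachability from above, so each alternating path carries at most $c_1c_2$ endpoints of merged subpaths, giving $|V_{\mathcal M}|\le c_1c_2(c_1+c_2)$ and hence at most $c_1c_2(c_1+c_2)/2$ mergings. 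This replaces your global height bookkeeping with a single pigeonhole on each alternating path; if you want to salvage your approach, the alternating paths of $\hat G$ are exactly the objects along which your ``fresh intermediate mergings'' can be chained into genuine semi-reachability witnesses.
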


\begin{proof}
Consider any acyclic directed graph $G(E, V)$ with $2$ distinct sources $S_1, S_2$ and $2$ distinct sinks $R_1, R_2$, where the min-cut between $S_i$ and $R_i$ is $c_i$ for $i=1, 2$. Let $\alpha_1=\{\alpha_{1, 1}, \cdots, \alpha_{1, c_1}\}$ be any set of Menger's paths from $S_1$ to $R_1$, and $\alpha_2=\{\alpha_{2, 1}, \cdots, \alpha_{2, c_2}\}$ be any set of Menger's paths from $S_2$ to $R_2$. Let $V_{\mathcal{M}}$ be the set of the terminal vertices (starting and ending vertices) of all the merged subpaths by $\alpha_1$ and $\alpha_2$. It suffices to prove that for any $c_1, c_2$, if $|V_{\mathcal{M}}| \geq c_1c_2(c_1+c_2)+1$, one can always reroute $\alpha_1$ using $\alpha_2$, or reroute $\alpha_2$ using $\alpha_1$ to obtain new Menger's path sets $\alpha_1, \alpha_2$ such that the number of mergings between the new $\alpha_1, \alpha_2$ is strictly less than that between the original $\alpha_1, \alpha_2$.

Now we perform certain operations on $G$ to obtain another graph $\hat{G}$. First we delete all the edges which do not belong to any $\alpha_1$-path or $\alpha_2$-path; then whenever two paths $\beta_1, \beta_2$ from $\alpha_1 \cup \alpha_2$ ($\beta_1, \beta_2$ could be both $\alpha_1$-paths or $\alpha_2$-paths) intersect on a vertex $v$, however do not share any edge incident with $v$ (for an example, see Figure~\ref{Mergings}(a)), we ``detach'' $\beta_1, \beta_2$ at $v$ (in other words, ``split'' $v$ into two copies $v^{(1)}, v^{(2)}$ and let $\beta_1$ pass $v^{(1)}$ and let $\beta_2$ pass $v^{(2)}$); next we delete all the merged subpaths by $\alpha_1$ and $\alpha_2$; finally we reverse the direction of the edges which only belong to some $\alpha_2$-path. Note that the above operations does not add more vertices to $G$; and for any path in $\hat{G}$, each edge either belongs to a $\alpha_1$-path or a reversed $\alpha_2$-path.

Suppose that there is a cycle in $\hat{G}$ taking the following form:
$$
\gamma_1 \circ \gamma_2 \circ \cdots \circ \gamma_{2n},
$$
where $b(\gamma_{2n})=a(\gamma_1)$, $\gamma_i$ is a reversed $\alpha_2$-subpath for any odd $i$ and a $\alpha_1$-subpath for any even $i$. For any vertex $w$ in $V_{\mathcal{M}}$, let $\eps_w$ denote the merged subpath in $G$ corresponding to $w$; then one checks that in $G$, $\eps_{a(\gamma_1)}$ is semi-reachable through $\alpha_1$ by itself from above via the sequence
$$
\eps_{a(\gamma_1)}, \eps_{a(\gamma_2)}, \cdots, \eps_{a(\gamma_{2n})}, \eps_{b(\gamma_{2n})},
$$
which implies certain reroutings can be done to reduce the number of mergings.

Next we assume that $\hat{G}$ is acyclic. Note that in $\hat{G}$, $S_1, R_2$ have out-degree $c_1, c_2$, respectively, $S_2, R_1$ has in-degree $c_1, c_2$, respectively, and any vertex in $V_{\mathcal{M}}$ has in-degree $1$ and out-degree $1$. It then immediately follows that $\hat{G}$ consists of $c_1+c_2$ pairwise vertex-disjoint paths, each of which, say $\gamma$, takes the following {\em regular} form:
$$
\gamma=\gamma_1 \circ \gamma_2 \circ \cdots \circ \gamma_n,
$$
where $a(\gamma_1)=S_1 \mbox{ or } R_2$, $b(\gamma_n)=S_2 \mbox{ or } R_1$, the terminal points of $\gamma_2, \gamma_3, \cdots, \gamma_{n-1}$ are in $V_{\mathcal{M}}$, and each of $\gamma_1,\gamma_2, \cdots, \gamma_n$ is, alternately, either a $\alpha_1$-subpath or a reversed $\alpha_2$-subpath. Since $|V_{\mathcal{M}}| \geq c_1c_2(c_1+c_2)+1$, out of the $c_1+c_2$ pairwise edge-disjoint paths, there must be at least one path, say $\gamma$, taking the regular form $\gamma=\gamma_1 \circ \gamma_2 \circ \cdots \circ \gamma_n$, such that $|V_\mathcal{M} \cap \gamma| \geq c_1c_2+1$. It then follows that there are two vertices $u, v \in V_{\mathcal{M}}$ on $\gamma$, where $u$ corresponds to the merged subpath by $\alpha_{1, i_1}$ and $\alpha_{2, j_1}$, and $v$ corresponds to the merged subpath by $\alpha_{1, i_2}$ and $\alpha_{2, j_2}$, such that $(i_1, j_1)=(i_2, j_2)$. Note that if $u$ is larger (smaller) than $v$ on $\alpha_{1, i_1}$, then $u$ will be also larger (smaller) than $v$ on $\alpha_{2, j_1}$, otherwise we would have a cycle $\alpha_{1, i_1}[u, v] \circ \alpha_{1, j_1}[v, u]$ in $G$, which contradicts the assumption that $G$ is acyclic. Now assume that $\gamma[u, v]=\gamma_s \circ \gamma_{s+1} \circ \cdots \circ \gamma_t$. First consider the following conditions (ignoring the parathetic words for the moment):
\begin{itemize}
\item $u$ is smaller (larger) than $v$ on $\alpha_{1, i_1}$;
\item $\gamma_i$ is a $\alpha_1$-subpath (reversed $\alpha_2$-subpath) for $i=s+1$;
\item $u$ is the starting (ending) vertex of the corresponding merged subpath in $G$, $v$ is the starting (ending) vertex of the corresponding merged subpath in $G$.
\end{itemize}
Then one checks that $\eps_v$ is semi-reachable by itself from above through $\alpha_2$ via the sequence $\eps_v, \eps_{b(\gamma_{t-1})}, \cdots, \eps_{b(\gamma_{s})}, \eps_u, \eps_v$, implying a rerouting of $\alpha_2$ using $\alpha_1$ to reduce the number of mergings can be done. Similar arguments can be applied to other cases when any parathetic words replace the words before them.

So in any case, if $|V_{\mathcal{M}}| \geq c_1c_2(c_1+c_2)+1$, certain reroutings can be done to strictly reduce the number of mergings. Together with the fact that each merged subpath has two terminal points, we then prove that $\mathcal{M}(c_1, c_2) \leq c_1c_2(c_1+c_2)/2$, establishing the lemma.

\end{proof}

We are now ready for the proof of Theorem~\ref{main}.

\begin{proof}

With Lemma~\ref{twotwo} being established, to prove Theorem~\ref{main}, it suffices to prove that
\begin{equation} \label{recursive}
\mathcal{M}(c_1, c_2, \cdots, c_n) \leq \mathcal{M}(c_1, c_2, \cdots, c_{n-1}) + \sum_{i < n} \mathcal{M}(c_i, c_n),
\end{equation}
for $n=3, 4, \cdots,$ inductively.

Now suppose that for $n \leq k$, $\mathcal{M}(c_1, c_2, \cdots, c_n)$ is finite and satisfies (\ref{recursive}) and consider the case $n=k+1$. For $i=1, 2, \cdots, k+1$, choose a set of Menger's paths $\alpha_i=\{\alpha_{i, 1}, \alpha_{i, 2}, \cdots, \alpha_{i, c_i}\}$ between $S_i$ and $R_i$, and assume $\alpha_1, \alpha_2, \cdots, \alpha_k$ are chosen such that the number of mergings among themselves is no more than $\mathcal{M}(c_1, c_2, \cdots, c_k)$. By a ``new'' merging, we mean a merging which is among $\alpha_1, \alpha_2, \cdots, \alpha_{k+1}$, however is not among $\alpha_1, \alpha_2, \cdots, \alpha_{k}$. We shall prove that if the number of new mergings between $\alpha_{k+1}$ and $\alpha_1, \alpha_2, \cdots, \alpha_{k}$ is larger than or equal to
$$
\mathcal{M}(c_1, c_{k+1})+\mathcal{M}(c_2, c_{k+1})+\cdots+\mathcal{M}(c_k, c_{k+1})+1,
$$
certain reroutings can be done to strictly reduce the number of mergings.

By contradiction, assume the opposite of the claim above and label all the newly merged subpaths as $\gamma_1, \gamma_2, \cdots, \gamma_l$. By the Pigeonhole principle, there exists some $\alpha_i$ such that $\alpha_i$ and $\alpha_{k+1}$ will have more than $\mathcal{M}(c_i, c_{k+1})$ new mergings, thus reroutings of $\alpha_i$ or $\alpha_{k+1}$ can be done. If such a rerouting is in fact a rerouting of $\alpha_{k+1}$ using $\alpha_i$, then the number of mergings between $\alpha_{k+1}$ and $\alpha_1, \alpha_2, \cdots, \alpha_{k}$ will be strictly decreased after the rerouting. So in the following we assume that the rerouting between every $\alpha_i$ and $\alpha_{k+1}$, if exists, is a rerouting of $\alpha_i$ using $\alpha_{k+1}$. Then after the rerouting of $\alpha_i$, the new $\alpha_i$ will ``miss'' at least
$$
\mathcal{M}(c_1, c_{k+1})+\cdots+\mathcal{M}(c_{i-1}, c_{k+1})+\mathcal{M}(c_{i+1}, c_{k+1})+\cdots+\mathcal{M}(c_k, c_{k+1})+1
$$
of all the newly merged subpaths, which implies the new $\alpha_j$'s, $j \leq k$, will all ``miss'' at least one of newly merged subpaths (in other words, there is $\gamma_{l_0}$ such that none of $\alpha_j$'s, $j \leq k$, merge with $\alpha_{k+1}$ at $\gamma_{l_0}$). So the number of mergings between $\alpha_1, \alpha_2, \cdots, \alpha_{k}$ and $\alpha_{k+1}$ strictly decreases after the possible reroutings of all $\alpha_i$'s. With this contradiction, we establish the theorem.

\end{proof}

\begin{rem}
For an acyclic directed graph $G(V, E)$, the vertex-connectivity version of Menger's theorem~\cite{Menger1927} states:

{\em For any $u, v \in V$, with no edge from $u$ to $v$, the maximum number of pairwise vertex-disjoint directed paths from $u$ to $v$ in $G$ equals the minimum vertex cut between $u$ and $v$, namely the minimum number of vertices in $E \backslash \{u, v\} $ whose deletion destroys all directed paths from $u$ to $v$.}

In this remark, we redefine Menger's paths and merging: we call any set consisting of the maximum number of pairwise vertex-disjoint directed paths from $u$ to $v$ a set of {\em Menger's paths} from $u$ and $v$; and for $m$ paths $\beta_1, \beta_2, \cdots, \beta_m$ in $G(V, E)$, we say these paths {\it merge} at $e \in V$ (here $E$ in the original definition is replaced by $V$) if
\begin{enumerate}
\item $e \in \cap_{i=1}^m \beta_i$;
\item there are at least two distinct $f, g \in E$ such that $f, g$ are immediately ahead of $e$ on some $\beta_i, \beta_j$, respectively.
\end{enumerate}
And naturally we can also redefine $\mathcal{M}$ with the above redefined Menger's paths and merging. Then using a parallel argument, one can show that Theorem~\ref{main} still hold true for redefined $\mathcal{M}$.

\end{rem}

The following proposition shows that $\mathcal{M}$ is symmetric on its parameters.

\begin{pr} \label{symmetric}
For any $c_1, c_2, \cdots, c_n$, we have
$$
\mathcal{M}(c_1, c_2, \cdots, c_n)=\mathcal{M}(c_{\delta(1)}, c_{\delta(2)}, \cdots, c_{\delta(n)}),
$$
where $\delta$ is any permutation on the set $\{1, 2, \cdots, n\}$.
\end{pr}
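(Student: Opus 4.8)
The plan is to observe that $\mathcal{M}(c_1,\dots,c_n)$ is defined as a supremum over a class of graphs that is closed under relabeling of sources and sinks, and that the quantity being optimized — the minimum number of mergings — carries no dependence on how the Menger's path sets are indexed. Consequently, permuting the parameters $c_i$ amounts merely to a bijective relabeling of the graphs appearing in the supremum, which cannot change the supremum.

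Concretely, I would fix a permutation $\delta$ of $\{1,2,\dots,n\}$ and start with an arbitrary acyclic directed graph $G$ with $n$ distinct sources $S_1,\dots,S_n$ and $n$ distinct sinks $R_1,\dots,R_n$ for which the min-cut between $S_i$ and $R_i$ equals $c_i$. I then define $\tilde G$ to be the \emph{same} underlying graph, but with the source previously called $S_{\delta(i)}$ now designated $\tilde S_i$ and the sink previously called $R_{\delta(i)}$ now designated $\tilde R_i$. Then the min-cut between $\tilde S_i$ and $\tilde R_i$ is $c_{\delta(i)}$, so $\tilde G$ lies in the class of graphs defining $\mathcal{M}(c_{\delta(1)},\dots,c_{\delta(n)})$, and the assignment $G \mapsto \tilde G$ is a bijection between the two classes of graphs.

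The key step is to verify $M(\tilde G)=M(G)$. This holds because a set of Menger's paths from $\tilde S_i$ to $\tilde R_i$ is literally a set of Menger's paths from $S_{\delta(i)}$ to $R_{\delta(i)}$; hence the collections of Menger's path sets $\{\tilde\alpha_1,\dots,\tilde\alpha_n\}$ available in $\tilde G$ are exactly the collections $\{\alpha_1,\dots,\alpha_n\}$ available in $G$, only re-indexed. Since (as the paper emphasizes) mergings are counted at edges without reference to which path set plays which role — condition~$1$ asks only that $e\in\cap_{i}\beta_i$ and condition~$2$ only that two of the paths differ just before $e$, and repeated mergings at the same $e$ are counted once — the number of mergings of a family is invariant under re-indexing of the family. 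Taking the supremum over all $G$ in the class then yields $\mathcal{M}(c_1,\dots,c_n)=\mathcal{M}(c_{\delta(1)},\dots,c_{\delta(n)})$; equivalently, $\mathcal{M}$ depends only on the multiset $\{c_1,\dots,c_n\}$.

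I do not anticipate a genuine obstacle here: the whole content is the intrinsic, order-free nature of the merging count, which the definition of merging for $m$ paths and the counting convention already supply. If one prefers to avoid invoking the bijection, one can instead argue the two inequalities $\mathcal{M}(c_1,\dots,c_n)\le\mathcal{M}(c_{\delta(1)},\dots,c_{\delta(n)})$ and its reverse separately, the first by applying the relabeling construction with $\delta$ and the second by applying it with $\delta^{-1}$.
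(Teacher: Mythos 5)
Your relabeling argument is correct, and it is precisely the observation the paper relies on: the paper states Proposition~\ref{symmetric} without proof, treating it as immediate from the fact that the supremum ranges over a class of graphs closed under permuting the source--sink pairs and that the merging count is index-free. Nothing further is needed.
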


The following proposition shows that $\mathcal{M}$ is an ``increasing'' function.

\begin{pr}  \label{increasing}
For any $m \geq n$, $c_1 \leq c_2 \leq \cdots \leq c_n$ and $d_1 \leq d_2 \cdots \leq d_m$, if
$c_i \leq d_{m-n+i}$ for $i=1, 2, \cdots, n$, then
$$
\mathcal{M}(c_1, c_2, \cdots, c_n) \leq \mathcal{M}(d_1, d_2, \cdots, d_m).
$$
\end{pr}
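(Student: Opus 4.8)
The plan is to reduce the general claim to two simpler monotonicity facts and then compose them. First I would establish that $\mathcal M$ is monotone in a single coordinate: if $c_n\le d$ then $\mathcal M(c_1,\dots,c_{n-1},c_n)\le \mathcal M(c_1,\dots,c_{n-1},d)$. Second I would establish that padding by an extra trivial coordinate does not decrease $\mathcal M$, i.e. $\mathcal M(c_1,\dots,c_n)\le \mathcal M(c_1,\dots,c_n,1)$ (indeed, for any $d\ge 1$, $\mathcal M(c_1,\dots,c_n)\le \mathcal M(c_1,\dots,c_n,d)$, which actually subsumes the first fact once combined with symmetry, but it is cleanest to think of the two moves separately). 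Granting these, the proposition follows: starting from $\mathcal M(c_1,\dots,c_n)$, use symmetry (Proposition~\ref{symmetric}) freely to reorder, apply the single-coordinate bound $n$ times to raise $c_i$ up to $d_{m-n+i}$, and then pad with the $m-n$ remaining coordinates $d_1,\dots,d_{m-n}$; each step only increases the quantity, yielding $\mathcal M(c_1,\dots,c_n)\le\mathcal M(d_1,\dots,d_m)$.

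For the single-coordinate step the idea is a graph-surgery/embedding argument. Given any acyclic $G$ realizing min-cuts $c_1,\dots,c_{n-1},d$ between $S_i$ and $R_i$, I want to produce a graph $G'$ with min-cuts $c_1,\dots,c_{n-1},c_n$ (where $c_n\le d$) such that $M(G')\ge M(G)$ — or rather, I work in the opposite direction: it suffices to show that for every $G$ realizing $(c_1,\dots,c_{n-1},c_n)$, there is a $G'$ realizing $(c_1,\dots,c_{n-1},d)$ with $M(G')\ge M(G)$. But that direction is easy: take $G$, and simply attach $d-c_n$ new pairwise-edge-disjoint paths from $S_n$ to $R_n$ that share no edges with $G$ at all (route them through fresh vertices). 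This raises the $n$-th min-cut from $c_n$ to exactly $d$ — the new paths form, together with the old Menger set, a Menger set of size $d$ — while creating no new merged subpaths, since the added paths are edge-disjoint from everything. Any choice of Menger's path sets in $G'$ restricts/projects to a choice in $G$ with no more mergings, so $M(G')\ge M(G)$; taking suprema gives $\mathcal M(c_1,\dots,c_{n-1},c_n)\le\mathcal M(c_1,\dots,c_{n-1},d)$. The padding step $\mathcal M(c_1,\dots,c_n)\le\mathcal M(c_1,\dots,c_n,d)$ is the same construction with a brand-new source–sink pair $S_{n+1},R_{n+1}$ joined by $d$ disjoint fresh paths: the original sources and sinks and their min-cuts are untouched, no new mergings are introduced, and $M$ cannot drop.

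The one point that needs care — and which I expect to be the only real obstacle — is verifying that attaching fresh disjoint paths genuinely does not decrease the \emph{minimum} merging number, i.e. that an optimal (merging-minimizing) choice of Menger's paths in the enlarged graph $G'$ cannot exploit the new vertices to do better than $M(G)$. Since the added paths use entirely new vertices and edges, any Menger path in $G'$ for a pair $(S_k,R_k)$ with $k\le n-1$ cannot use them at all (there is no edge connecting the old part to the new part except at shared sources/sinks in the padding case — and one should route the fresh paths to avoid even that, e.g. via the ``imaginary source/sink'' device of Remark~\ref{imaginary} if $S_{n+1},R_{n+1}$ must be genuine vertices). For the $n$-th pair in the first construction, a size-$d$ edge-disjoint family from $S_n$ to $R_n$ in $G'$ must use at least $d-c_n$ of its paths essentially within the new component (by a counting/min-cut argument on the cut separating old from new), and its restriction to $G$ is an edge-disjoint family of size $\ge c_n$, hence a Menger set for $G$; the new portion contributes no mergings. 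Making this restriction-and-counting argument fully rigorous, and checking the mergings are in bijection, is the crux; everything else is bookkeeping with Proposition~\ref{symmetric}.
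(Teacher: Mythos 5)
Your argument is correct: the paper states Proposition~\ref{increasing} without proof, and your two-step reduction (raising one coordinate by grafting $d-c_n$ fresh pairwise edge-disjoint $S_n$--$R_n$ paths through new vertices, then padding with an isolated source--sink pair) is sound and in the same constructive spirit as the paper's proofs of Propositions~\ref{Isolated} and~\ref{BraveHeart}. The verifications you flag do go through: since $S_n$ has in-degree $0$ and $R_n$ has out-degree $0$, no path for another pair can touch the fresh component, any size-$d$ edge-disjoint family from $S_n$ to $R_n$ in the enlarged graph must consist of exactly the $d-c_n$ fresh paths plus a size-$c_n$ Menger set in the old graph, and the fresh paths contribute no mergings, so $M(G')\ge M(G)$ and taking suprema gives the claim.
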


Together with Proposition~\ref{symmetric}, the following proposition shows that when $\mathcal{M}$ has two parameters, $\mathcal{M}$ is ``sup-linear'' in all its parameters.

\begin{pr}  \label{Isolated}
For any $c_{1, 0}, c_{1, 1}, c_2$, we have
$$
\mathcal{M}(c_{1, 0}+c_{1, 1}, c_2) \geq \mathcal{M}(c_{1, 0}, c_2)+\mathcal{M}(c_{1, 1}, c_2).
$$
\end{pr}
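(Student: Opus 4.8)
The plan is to prove the ``super-additivity'' bound $\mathcal{M}(c_{1,0}+c_{1,1}, c_2) \geq \mathcal{M}(c_{1,0}, c_2)+\mathcal{M}(c_{1,1}, c_2)$ by a disjoint-union construction. Let $G'$ be a graph with source $S_1'$, sink $R_1'$, source $S_2'$, sink $R_2'$ achieving (or approaching, within $\eps$) the supremum $\mathcal{M}(c_{1,0},c_2)$, with min-cut $c_{1,0}$ between $S_1'$ and $R_1'$ and min-cut $c_2$ between $S_2'$ and $R_2'$; similarly let $G''$ achieve $\mathcal{M}(c_{1,1},c_2)$ with parameters $c_{1,1}$ and $c_2$. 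Since $\mathcal{M}$ is finite by Lemma~\ref{twotwo}, both suprema are attained or approximated by finite graphs, and by considering graphs that actually attain them (the supremum over finitely many merging numbers, each of which is a nonnegative integer bounded by Lemma~\ref{twotwo}, is a maximum) we may assume $M(G')=\mathcal{M}(c_{1,0},c_2)$ and $M(G'')=\mathcal{M}(c_{1,1},c_2)$.

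Next I would build $G$ by placing $G'$ and $G''$ ``side by side'' and then merging their second source/sink pairs in series: identify $R_2'$ with $S_2''$ (so an $\alpha_2$-path first traverses $G'$ and then $G''$), keep $S_1', R_1'$ and $S_1'', R_1''$ as the four distinct terminals for the two ``$c_{1,j}$-sides,'' but then — since $\mathcal{M}$ with two parameters only allows two sources and two sinks — I need to be a little careful: the cleaner construction is to use Remark~\ref{imaginary}. Take the edge-disjoint path collection $\alpha_1 = \alpha_1' \cup \alpha_1''$ where $\alpha_1'$ is a Menger path set of size $c_{1,0}$ from $S_1'$ to $R_1'$ in $G'$ and $\alpha_1''$ is a Menger path set of size $c_{1,1}$ from $S_1''$ to $R_1''$ in $G''$; these are edge-disjoint because $G'$ and $G''$ share no edges. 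By Remark~\ref{imaginary} we may attach an imaginary common source $S_1$ feeding all $c_{1,0}+c_{1,1}$ path-starts and an imaginary common sink $R_1$, so that $\alpha_1$ becomes a genuine Menger path set of min-cut exactly $c_{1,0}+c_{1,1}$ (the imaginary edges form a cut of that size, and the $c_{1,0}+c_{1,1}$ disjoint paths show it is tight). For the $\alpha_2$-side, identify $R_2'$ with $S_2''$ and route each of the $c_2$ Menger $\alpha_2'$-paths in $G'$ onward along a matched $\alpha_2''$-path in $G''$; this gives $c_2$ edge-disjoint $S_2' \to R_2''$ paths, and one checks the min-cut stays $c_2$ (any cut must cut either the $G'$-portion or the $G''$-portion of all $c_2$ paths). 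Write $S_2 := S_2'$, $R_2 := R_2''$.

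The key observation is then that in $G$ the paths of $G'$ and the paths of $G''$ live in edge-disjoint regions, so every merging between $\alpha_1$ and $\alpha_2$ in $G$ is either a merging ``inside $G'$'' (between $\alpha_1'$ and the $G'$-portion of $\alpha_2$) or a merging ``inside $G''$'' (between $\alpha_1''$ and the $G''$-portion of $\alpha_2$), and these two families are disjoint. Hence the total number of $\alpha_1$--$\alpha_2$ mergings in $G$ equals (number of mergings in $G'$) $+$ (number of mergings in $G''$). To get a lower bound on $M(G)$ — the \emph{minimum} over all Menger path choices — I need that \emph{any} Menger path set $\alpha_1$ for $S_1 \to R_1$ in $G$ decomposes: because the only edges out of $S_1$ are the $c_{1,0}$ imaginary edges into $G'$ and the $c_{1,1}$ imaginary edges into $G''$, and $G', G''$ are otherwise disconnected except through the shared $\alpha_2$-spine, every $\alpha_1$-path lies entirely in $G'$ or entirely in $G''$, with exactly $c_{1,0}$ in $G'$ and $c_{1,1}$ in $G''$ by the cut structure; restricting to $G'$ gives a valid Menger path set there, so its mergings with the $G'$-portion of $\alpha_2$ number at least $M(G') = \mathcal{M}(c_{1,0},c_2)$, and similarly for $G''$. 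Summing, $M(G) \geq \mathcal{M}(c_{1,0},c_2) + \mathcal{M}(c_{1,1},c_2)$, and since $\mathcal{M}(c_{1,0}+c_{1,1},c_2) \geq M(G)$ we are done.

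The main obstacle I anticipate is verifying the ``decomposition'' and min-cut claims rigorously: I must make sure that (a) no $\alpha_1$-path can sneak from the $G'$-region into the $G''$-region (this needs that the only connection between the two regions is through the $\alpha_2$-spine $R_2'=S_2''$, and that this vertex is not usable by an $\alpha_1$-path because $G'$ and $G''$ were chosen with \emph{distinct} sources and sinks, so $S_1', R_1'$ side and $S_1'', R_1''$ side never touch the $\alpha_2$-terminals except where they merge, and a merged subpath cannot let an $\alpha_1$-path change region without violating acyclicity or the degree bounds); and (b) that after the series identification $R_2'=S_2''$ the min-cut between $S_2$ and $R_2$ is still exactly $c_2$ and not larger — it cannot be smaller since we exhibit $c_2$ disjoint paths, and it is at most $c_2$ since the original $c_2$-cut in $G'$ still works. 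Handling (a) cleanly may require adding the imaginary vertices carefully (e.g.\ also isolating the $\alpha_2$-spine) so that the region-decomposition of $\alpha_1$-paths is forced purely by which imaginary edge out of $S_1$ they use.
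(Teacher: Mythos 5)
Your proposal is correct and is essentially the paper's own argument: the paper likewise builds a single graph in which two extremal configurations (for $c_{1,0},c_2$ and $c_{1,1},c_2$) are placed so that their mergings with the $\alpha_2$-paths are ``sequentially isolated'' on $\alpha_2$ --- exactly your series composition along the $\alpha_2$-spine with a common imaginary source/sink on the $\alpha_1$-side. Your extra care about attainment of the supremum and the forced decomposition of an arbitrary Menger path set is a welcome elaboration of the paper's ``one checks'' step, not a different method.
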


\begin{proof}

For any $c_{1, 0}, c_{1, 1}$ and $c_2$, consider the following directed graph $G$ with $2$ sources $S_1, S_2$ and $2$ sinks $R_1, R_2$ such that
\begin{enumerate}
\item there is a set $\alpha_1$ of $c_{1, 0}+c_{1, 1}$ edge-disjoint paths from $S_1$ to $R_1$, here $\alpha_1=\alpha_1^{(0)} \cup \alpha_1^{(1)}$, where $\alpha_1^{(0)}$ and $\alpha_1^{(1)}$ are mutually exclusive, consisting of $c_{1, 0}$, $c_{1, 1}$ edge-disjoint paths, respectively, and there is a set $\alpha_2$ of $c_2$ edge-disjoint paths from $S_2$ to $R_2$;
\item mergings by $\alpha_1^{(0)}, \alpha_2$ and mergings by $\alpha_1^{(1)}, \alpha_2$ are ``sequentially isolated'' on $\alpha_2$ in the sense that on each $\alpha_2$-path, the smallest merged $\alpha_1^{(1)}$-subpath is larger than the largest merged $\alpha_1^{(0)}$-subpath;
\item the number of mergings in the subgraph consisting of $\alpha_1^{(0)}$ and $\alpha_2$ achieves $\mathcal{M}(c_{1, 0}, c_2)$, and the number of mergings in the subgraph consisting of $\alpha_1^{(1)}$ and $\alpha_2$ achieves $\mathcal{M}(c_{1, 1}, c_2)$.
\end{enumerate}

One checks that for such graph $G$, the min-cut between $S_1$ and $R_1$ is $c_{1, 0}+c_{1, 1}$, and the min-cut between $S_2$ and $R_2$ is $c_2$, and
$$
M(G)=\mathcal{M}(c_{1, 0}, c_2)+\mathcal{M}(c_{1, 1}, c_2),
$$
which implies that
$$
\mathcal{M}(c_{1, 0}+c_{1, 1}, c_2) \geq \mathcal{M}(c_{1, 0}, c_2)+\mathcal{M}(c_{1, 1}, c_2).
$$

\end{proof}

\begin{pr} \label{BraveHeart}
For any $c_1, c_2, \cdots, c_{n}$ and any fixed $k$ with $1 \leq k \leq n$, we have
$$
\mathcal{M}(c_1, c_2, \cdots, c_n) \geq \sum_{i \leq k, j \geq k+1} \mathcal{M}(c_i, c_j).
$$
\end{pr}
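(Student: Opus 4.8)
The plan is to build a lower-bound graph in the same spirit as the proof of Proposition~\ref{Isolated}, but now ``stacking'' one extremal gadget for each pair $(i,j)$ with $i\le k$ and $j\ge k+1$, arranged so that the mergings coming from distinct pairs cannot interfere with each other. For each such pair, let $G_{i,j}$ be a two-source, two-sink graph witnessing $\mathcal{M}(c_i,c_j)$, i.e.\ realizing the supremum (or, if the supremum is not attained, realizing $\mathcal{M}(c_i,c_j)-\eps$ for arbitrarily small $\eps$; since all the quantities here are integers by Lemma~\ref{twotwo} and Theorem~\ref{main}, the supremum is in fact attained, so we may take $G_{i,j}$ exactly extremal). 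Inside $G_{i,j}$ there is a set $\alpha_i^{(j)}$ of $c_i$ edge-disjoint paths and a set $\alpha_j^{(i)}$ of $c_j$ edge-disjoint paths whose minimum number of mergings is $\mathcal{M}(c_i,c_j)$.

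The key construction step is the concatenation. First I would concatenate, for each fixed $i\le k$, the gadgets $G_{i,k+1},G_{i,k+2},\dots,G_{i,n}$ in series along the $i$-th path bundle: the $c_i$ output edges of $\alpha_i^{(j)}$ in $G_{i,j}$ are identified with the $c_i$ input edges of $\alpha_i^{(j+1)}$ in $G_{i,j+1}$, so that a single bundle of $c_i$ edge-disjoint paths $\alpha_i$ threads through all of $G_{i,k+1},\dots,G_{i,n}$ in order. Symmetrically, for each fixed $j\ge k+1$, the gadgets $G_{1,j},G_{2,j},\dots,G_{k,j}$ are concatenated in series along the $j$-th bundle, identifying outputs of $\alpha_j^{(i)}$ with inputs of $\alpha_j^{(i+1)}$. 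Because the graph is a grid of $k\times(n-k)$ gadgets glued along shared boundary bundles, I must check this can be done consistently — each $G_{i,j}$ receives its $\alpha_i$-bundle from $G_{i,j-1}$ (or from the source $S_i$ if $j=k+1$) and its $\alpha_j$-bundle from $G_{i-1,j}$ (or from the source $S_j$ if $i=1$), and sends them on to $G_{i,j+1}$ and $G_{i+1,j}$ respectively (or to the sinks $R_i$, $R_j$ at the far end). Since the gadgets are otherwise disjoint and the gluing is only along these bundles, and since all $G_{i,j}$ are acyclic, acyclicity of the whole $G$ follows from a topological-order argument: order the gadgets by $i+j$, then within each gadget use its own topological order. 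One then verifies that the min-cut between $S_i$ and $R_i$ is exactly $c_i$ (the bundle never widens or narrows, and within each gadget $c_i$ is the min-cut for that pair by Menger), and likewise for $S_j$, $R_j$; and for the pure-source or pure-sink indices there is nothing to do since those bundles simply pass through.

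The heart of the argument is then the lower bound $M(G)\ge\sum_{i\le k,\,j\ge k+1}\mathcal{M}(c_i,c_j)$. Take any choice of Menger's path sets $\alpha_1,\dots,\alpha_n$ in $G$ and any merging among them. Crucially, a merging at an edge $e$ involves paths from at least two bundles, and every edge of $G$ lives in a unique gadget $G_{i,j}$; since within $G_{i,j}$ the only path-bundles present are the $i$-bundle and the $j$-bundle, every merging occurs between $\alpha_i$ and $\alpha_j$ for some $i\le k<j$, inside that gadget. Restricting $\alpha_i$ and $\alpha_j$ to $G_{i,j}$ (they must enter through its $c_i$, resp.\ $c_j$, boundary edges, because those boundary bundles are min-cuts, so each $\alpha_i$-path passes through $G_{i,j}$ exactly once and uses exactly one of its input edges) yields a valid Menger's-path configuration in $G_{i,j}$, which therefore has at least $\mathcal{M}(c_i,c_j)$ mergings. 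Summing over the $k(n-k)$ disjoint gadgets — disjoint in edges, hence the merging sets are disjoint — gives the claimed bound for this particular path choice, and taking the infimum over all path choices gives $M(G)\ge\sum\mathcal{M}(c_i,c_j)$, whence $\mathcal{M}(c_1,\dots,c_n)\ge M(G)\ge\sum_{i\le k,\,j\ge k+1}\mathcal{M}(c_i,c_j)$.

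The main obstacle I anticipate is the claim that an arbitrary Menger's path set in $G$ decomposes cleanly across gadgets — specifically, that each $\alpha_i$-path traverses the gadgets $G_{i,k+1},\dots,G_{i,n}$ in order, entering and leaving each through its designated boundary bundle, with no path ever ``escaping'' into a gadget $G_{i',j}$ with $i'\neq i$. This needs the boundary bundles to be genuine min-cuts in the global graph $G$ and the gadget interiors to be edge-reachable only through those bundles; I would secure it by inserting, at each gluing interface, a thin ``bottleneck'' layer of exactly $c_i$ (resp.\ $c_j$) edges in series (a construction already legitimized by Remark~\ref{imaginary}'s device of imaginary sources/sinks), so that any $S_i$–$R_i$ path is forced through those edges in sequence and any set of $c_i$ edge-disjoint such paths must use each bottleneck edge exactly once. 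With that in place the decomposition is forced and the summation argument goes through.
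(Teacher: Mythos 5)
Your proposal is correct and is essentially the paper's own argument: the paper likewise builds a single lower-bound graph by taking an extremal two-pair gadget achieving $\mathcal{M}(c_i,c_j)$ for each pair $i\le k<j$ and arranging them so that the mergings for distinct pairs are ``sequentially isolated'' along each bundle, which is exactly your series-concatenated grid. Your write-up just makes explicit the verifications (acyclicity, preservation of the min-cuts, and the gadget-by-gadget decomposition of mergings) that the paper compresses into ``one checks that $M(G)=\sum_{i\le k,\,j\ge k+1}\mathcal{M}(c_i,c_j)$.''
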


\begin{proof}

For any $c_1, c_2, \cdots, c_n$, consider the following directed graph $G$ with $n$ sources $S_1, S_2, \cdots, S_n$ and $n$ sinks $R_1, R_2, \cdots, R_n$ such that for any fixed $k$ with $1 \leq k \leq n$,
\begin{enumerate}
\item there is a set $\alpha_i$ of $c_i$ edge-disjoint paths from $S_i$ to $R_i$ for each $i$;
\item all $\alpha_i$'s, $i \leq k$, do not merge with each other, and all $\alpha_j$'s, $j \geq k+1$, do not merge with each other;
\item for any $i$ with $i \leq k$, mergings by $\alpha_i$ and all $\alpha_j$'s, $j \geq k+1$, are ``sequentially isolated'' on $\alpha_i$ in the sense that on each $\alpha_i$-path, for any $j_1 < j_2$ with $j_1, j_2 \geq k+1$, the smallest merged $\alpha_{j_2}$-subpath is larger than the largest merged $\alpha_{j_1}$-subpath. Similarly for any $j$ with $j \geq k+1$, mergings by $\alpha_j$ and all $\alpha_i$'s, $i \leq k$, are sequentially isolated on $\alpha_j$;
\item the number of mergings in the subgraph consisting of any $\alpha_i$ with $i \leq k$ and any $\alpha_j$ with $j \geq k+1$ achieves $\mathcal{M}(c_i, c_j)$.
\end{enumerate}

One checks that for such graph $G$, the min-cut between $S_i$ and $R_i$ is $c_i$, and
$$
M(G)=\sum_{i \leq k, j \geq k+1} \mathcal{M}(c_i, c_j),
$$
which implies that
$$
\mathcal{M}(c_1, c_2, \cdots, c_n) \geq \sum_{i \leq k, j \geq k+1} \mathcal{M}(c_i, c_j).
$$

\end{proof}

The following proposition gives an upper bound on $\mathcal{M}(m, n)$ using $\mathcal{M}(m_1, n_1)$'s, where $m_1 \leq m$, $n_1 \leq n$.

\begin{pr} \label{ILikeIt}
For any $m \leq n$, we have
$$
\mathcal{M}(m, n) \leq U(m, n)+V(m, n)+m-2,
$$
where
$$
U(m, n)=\sum_{j=1}^{m-1} \left(\mathcal{M}(j, m-1)+1+\mathcal{M}(m-j, n)\right) + \mathcal{M}(m, m-1)+1,
$$
and
$$
V(m, n)=\mathcal{M}(m, n-1)+\sum_{j=1}^{m-1} \left(\mathcal{M}(j, n)+1+\mathcal{M}(m-j, n)\right)-\mathcal{M}(1, n).
$$
\end{pr}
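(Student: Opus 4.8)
The stated inequality is a recursion in $m,n$, so the task is this: given an arbitrary acyclic directed graph $G$ with distinct sources $S_1,S_2$, distinct sinks $R_1,R_2$, and min-cuts $m\le n$, \emph{construct} Menger's path sets $\alpha_1$ (of size $m$) and $\alpha_2$ (of size $n$) with at most $U(m,n)+V(m,n)+m-2$ mergings; this recursion is then to be combined with Lemma~\ref{twotwo} in an external induction to estimate $\mathcal M(m,n)$. I would start from an $\alpha_1,\alpha_2$ minimizing the number of mergings and, by Propositions~\ref{crossing} and~\ref{quasi}, assume the configuration is \emph{reduced}: no merged subpath is semi-reachable through $\alpha_1$, or through $\alpha_2$, by itself from above — otherwise a rerouting strictly decreases the count, contradicting minimality. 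Remark~\ref{imaginary} will be used repeatedly so that any subfamily of $k$ of the $\alpha_1$- or $\alpha_2$-paths may be regarded as a Menger's path set of min-cut $k$ in a suitably extended graph.

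The core of the argument is to partition the merged subpaths of the reduced configuration into a bounded collection of pairwise \emph{sequentially isolated} sub-configurations matching the summands of $U$ and $V$, together with a few forced interface mergings. For one part of $G$ — morally the region in which the $S_1$-side bundle of $\alpha_1$-paths forms — I would cut along a nested sequence of cuts separating $S_1$ from $R_1$; at the stage in which $j$ of the $m$ $\alpha_1$-paths have split off from the remaining $m-j$, the mergings created at that stage fall into two sub-problems — the $j$ separated paths against the pertinent fragment of $\alpha_2$, and the $m-j$ remaining paths against another fragment of $\alpha_2$ — which are isolated from each other along both $\alpha_1$ and $\alpha_2$, so after re-choosing each fragment optimally they contribute at most $\mathcal M(j,m-1)+1+\mathcal M(m-j,n)$ mergings (the ``$m-1$'' reflecting that one $\alpha_1$-wire has already left the active bundle, and the ``$+1$'' the single merging handing stage $j$ over to stage $j+1$); summing over $j=1,\dots,m$ gives $U(m,n)$. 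The complementary part of $G$ is treated identically except that one $\alpha_2$-path is set aside and accounted for on its own — producing the isolated $\mathcal M(m,n-1)$ term and the $-\mathcal M(1,n)$ correction for the lowest, doubly counted stage — which yields $V(m,n)$; finally $m-2$ bounds the mergings straddling the cut where the two parts are glued.

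The step I expect to be the main obstacle is the ``clean decomposition'' claim — that in a reduced configuration the mergings really are partitioned among these sub-configurations, with only the counted interface mergings between consecutive stages and no leakage across stages or between the two parts. I would establish it by contradiction, in the spirit of the $\hat G$ construction in the proof of Lemma~\ref{twotwo}: a merging straddling two nominally isolated fragments would, through the alternating $\alpha_1$/$\alpha_2$ subpaths linking it to the fragment boundaries, exhibit a merged subpath semi-reachable through $\alpha_1$ or through $\alpha_2$ by itself from above, contradicting reducedness. A secondary but necessary point is the legitimacy of re-choosing a fragment optimally: using the imaginary source/sink device one must check the replacement paths stay within that fragment and remain edge-disjoint from everything retained, which is exactly where sequential isolation is used — so the real labor is to arrange the cuts (and the set-aside $\alpha_2$-path) so that isolation holds by construction. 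Granting these, summing the contributions yields $\mathcal M(m,n)\le U(m,n)+V(m,n)+m-2$.
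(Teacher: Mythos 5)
Your strategy is genuinely different from the paper's, and the step you yourself flag as the main obstacle --- the ``clean decomposition'' claim --- is a real gap that I do not think can be closed in the form you state it. You want a merging-minimal (``reduced'') configuration to partition into pairwise sequentially isolated blocks whose shapes match the summands of $U$ and $V$, each block then contributing at most its $\mathcal{M}$-value. But reducedness (no merged subpath semi-reachable through $\alpha_1$ or $\alpha_2$ by itself from above) is far weaker than sequential isolation: the extremal configuration for $\mathcal{M}(2,2)=5$ (Example~\ref{Mtwotwo}, Figure~\ref{Mthreethree}) is irreducible yet each path merges with the two paths of the other family \emph{alternately}, so its mergings do not split into isolated sub-configurations at all. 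Sequential isolation is exactly what the lower-bound constructions (Propositions~\ref{Isolated} and~\ref{BraveHeart}) build in by hand; an arbitrary minimal configuration need not have it, and your proposed proof of the claim does not follow --- a merging straddling two nominal stages yields a semi-reachability chain \emph{between} the stages, not a closed chain from a merged subpath back to itself from above. There is also a misreading of where the summands come from: in $\mathcal{M}(j,m-1)$ the argument $m-1$ is the number of $\psi$-paths ($\alpha_2$-paths) to which the relevant merged subpaths are confined, not a count of $\alpha_1$-wires that have ``left the bundle.''

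The paper's proof never partitions the mergings. It argues by contradiction: assuming $|G|_{\mathcal{M}}\ge U+V+m-1$ and no available rerouting, it builds a filtration $T_0\subset T_1\subset\cdots$ whose merging count grows by at most $m$ per step (this granularity, not an interface count, is the source of the additive $m-1$), splitting $G$ into a lower part $\mathbb{S}$ with at least $V$ mergings and an upper part $\mathbb{R}$ with at least $U$ mergings. Inside $\mathbb{R}$ and then inside $\mathbb{S}$, the summands of $U$ and $V$ serve as \emph{thresholds} in a pigeonhole argument: if a region exceeded $\mathcal{M}(k+1,m-1)+1+\mathcal{M}(m-k-1,n)$ mergings, then either some sub-collection of paths (via the imaginary source/sink device) would already admit a rerouting, or a merged subpath on a new $\phi$-path is forced to be semi-reachable from the fixed family $\eta_i^{(1)}$. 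Semi-reachability is thereby forced to spread across all $m$ $\phi$-paths and return to $\phi_{j_1}$, closing a cycle; Proposition~\ref{crossing} then yields a merging-reducing rerouting and the contradiction. If you rework your argument, I would replace the decomposition claim with this threshold-and-spreading mechanism, which requires no structural assumption on the minimal configuration.
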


\begin{proof}

Consider any acyclic directed graph $G(E, V)$ with $2$ distinct sources $S_1, S_2$ and $2$ distinct sinks $R_1, R_2$. Assume the min-cut between $S_1$ and $R_1$ is $m$, and the min-cut between $S_2$ and $R_2$ is $n$. Let $\phi=\{\phi_1, \cdots, \phi_m\}$ be any set of Menger's paths from $S_1$ to $R_1$, and $\psi=\{\psi_1, \cdots, \psi_n\}$ be any set of Menger's paths from $S_2$ to $R_2$. Let $|G|_{\mathcal{M}}$ denote the number of mergings in $G$ (in this proof, we only consider mergings by $\phi$ and $\psi$). It suffices to prove that for any $m, n$, if
\begin{equation}  \label{bigN}
|G|_{\mathcal{M}} \geq U(m, n)+V(m, n)+m-1,
\end{equation}
then $G$ is reroutable with respect to $\phi$ and $\psi$, namely, one can always reroute $\phi$ using $\psi$, or reroute $\psi$ using $\phi$ to obtain new Menger's path sets $\phi$ and $\psi$.

By contradiction, assume that even if (\ref{bigN}) is satisfied, there are no reroutings to reduce the number of mergings. In the following, we say a merged subpath $\gamma_1$ is {\em immediately ahead of} another merged subpath $\gamma_2$ (or $\gamma_2$ is {\em immediately behind} $\gamma_1$) on certain path $\beta$ if $\gamma_1$ is smaller than $\gamma_2$ on $\beta$ and there is no other merged subpath in between $\gamma_1$ and $\gamma_2$ on $\beta$.

Consider the following iterative procedure, where, for notational simplicity, we treat a graph as a union of its vertex set and edge set. Let $T_0$ be the initial graph only consisting of $S_1, S_2$. We will subsequently construct a sequence of graphs $T_1, T_2, \cdots$ such that $T_i \subset T_{i+1}$ for feasible $i$. Suppose we have obtained $T_i$. Now pick a merged subpath $\gamma_{i+1}$ outside $T_i$ such that each merged subpath within $G|b(\gamma_{i+1})) \backslash T_i$ (here $\backslash$ is the symbol for ``relative complement'' in set theory) is immediately behind some merged subpath in $T_i$ on some $\phi$-path (if $i=0$, treat $S_1, S_2$ as degenerated merged subpaths). Define $T_{i+1}=G|b(\gamma_{i+1})) \cup T_i$. One checks that when $|T_i|_{\mathcal{M}} < |G|_{\mathcal{M}}$ such $\gamma_{i+1}$ always exists; and $|T_{i+1}|_{\mathcal{M}}-|T_i|_{\mathcal{M}} \leq m$, where again $|T_i|_{\mathcal{M}}$ denotes the number of mergings in $T_i$. So there exists $l$ such that $|T_l|_{\mathcal{M}} < V(m, n)$ and $|T_{l+1}|_{\mathcal{M}} \geq V(m, n)$. Now for each $i$ let $\eps_i$ be the largest merged subpath in $T_{l+1}$ on $\psi_i$. Note that such $\eps_i$ always exists, since otherwise $T_{l+1} \backslash \beta_i$, consisting of subpaths from $m$ $\phi$-paths and $n-1$ $\psi$-paths, will have more than $\mathcal{M}(m, n-1)$ mergings, which will lead to certain merging reducing reroutings. Let
$$
\mathbb{S}\stackrel{\triangle}{=}T_{l+1}=G|b(\eps_1), \cdots, b(\eps_n)).
$$
Then $|\mathbb{S}|_{\mathcal{M}} \geq V(m, n)$ and $|\mathbb{R}=G \backslash \mathbb{S}|_{\mathcal{M}} \geq U(m, n)$.

Now arbitrarily pick $j_1$ and assume that within $\mathbb{R}$, $\phi_{j_1}$ merges with $\psi$ at the merged subpaths $\eta_1^{(1)}, \eta_2^{(1)}, \cdots, \eta_{l_1}^{(1)}$. We shall prove that within $G$ all the merged subpaths semi-reachable through $\phi$ by $\eta_1^{(1)}, \eta_2^{(1)}, \cdots, \mbox{ or } \eta_{l_1}^{(1)}$ will spread out to no less than $m$ $\psi$-paths (here by a set of merged subpaths $A$ spreading out to a set of paths $B$, we mean each element in $A$ is a subpath of some path in $B$, and every path in $B$ contain at least one element in $A$ as subpath). If $\eta_1^{(1)}, \eta_2^{(1)}, \cdots, \eta_{l_1}^{(1)}$ spread out to no less than $m$ $\psi$-paths, there is nothing to prove (since for each $i$ there must be at least one merged subpath immediately ahead of $\eta_i^{(1)}$ on some $\psi$-path). Now assume that $\eta_1^{(1)}, \eta_2^{(1)}, \cdots, \eta_{l_1}^{(1)}$ are confined within $m-1$ $\psi$-paths. Then one can prove that there is at least one $l_1^*$ such that within $\mathbb{R}$ there is a merged $\phi_{j_2}$-subpath ($j_2 \neq j_1$), say $\gamma^{(1)}$, is immediately ahead of $\eta_{l_1^*}^{(1)}$ on some $\psi$-path (and thus $\gamma^{(1)}$ and all the merged subpath larger than $\gamma^{(1)}$ on $\phi_{j_2}$ are semi-reachable by $\eta_{l_1^*}^{(1)}$), since otherwise, $l_1$ must be smaller than $\mathcal{M}(1, m-1)$, which means $\mathbb{R} \backslash \mathbb{R}|b(\eta_1^{(1)}), \cdots, b(\eta_{l_1}^{(1)}))$, consisting of subpaths from $m-1$ $\phi$-paths and $n$ $\psi$-paths, will have more than $\mathcal{M}(m-1, n)$ mergings, which implies reroutings can be done to reduce the number of mergings. Now pick a $l_1^*$ such that the corresponding $\eta_{l_1^*}^{(1)}$ is the smallest among such merged subpaths. Let $\mathbb{R}^{(1)}=\mathbb{R} \backslash \mathbb{R}|b(\eta_{l_1^*}^{(1)}))$. Within $\mathbb{R}|b(\eta_{l_1^*}^{(1)}))$, we can only have at most $\mathcal{M}(1, m-1)+1$ $\phi_{j_1}$-mergings and $\mathcal{M}(m-1, n)$ non-$\phi_{j_1}$-mergings, which implies that
$$
|\mathbb{R}|b(\eta_{l_1^*}^{(1)}))|_{\mathcal{M}} \leq \mathcal{M}(1, m-1)+1+\mathcal{M}(m-1, n).
$$
Now suppose we have $j_1, j_2, \cdots, j_{k+1}$ and $\mathbb{R}^{(k)}$ already, suppose within $\mathbb{R}^{(k)}$, $\alpha_{j_1}, \alpha_{j_2}, \cdots, \alpha_{j_{k+1}}$ merge with $\psi$ at $\eta_1^{(k+1)}, \eta_2^{(k+1)}, \cdots, \eta_{l_{k+1}}^{(k+1)}$. As argued above, without loss of generality, we can assume that these paths are confined within $m-1$ $\psi$-paths. Then one can prove that there is at least one $l_{k+1}^*$ such that there is merged $\phi_{j_{k+2}}$-subpath ($j_{k+2} \neq j_1, j_2, \cdots, j_{k+1}$), say $\gamma^{(k+1)}$, is immediately ahead of $\eta_{l_1^*}^{(k+1)}$ on some $\psi$-path (and thus $\gamma^{(k+1)}$ and all the merged subpaths larger than $\gamma^{(k+1)}$ on $\phi_{j_{k+2}}$ are semi-reachable through $\phi$ by $\eta_{l_{k+1}^*}^{(k+1)}$, thus semi-reachable through $\phi$ by some $\eta_i^{(1)}$), since otherwise, $l_{k+1}$ must be smaller than $\mathcal{M}(k+1, m-1)$, which means $\mathbb{R}^{(k)} \backslash \mathbb{R}^{(k)}|b(\eta_1^{(k+1)}), \cdots, b(\eta_{l_{k+1}}^{(k+1)}))$, consisting of subpaths from $m-k-1$ $\phi$-paths and $n$ $\psi$-paths, will have more than $\mathcal{M}(m-k-1, n)$ mergings, which implies reroutings can be done to reduce the number of mergings. Pick a $l_{k+1}^*$ such that $\eta_{l_{k+1}^*}^{(k+1)}$ is the smallest among such merged subpaths and define $\mathbb{R}^{(k+1)}=\mathbb{R}^{(k)} \backslash \mathbb{R}^{(k)}|b(\eta_{l_{k+1}^*}^{(k+1)}))$. Similarly one checks that within $\mathbb{R}^{(k)}|b(\eta_{l_{k+1}^*}^{(k+1)}))$, we can only have at most $\mathcal{M}(k+1, m-1)+1$ $\phi_{j}$-mergings ($j=j_1, j_2, \cdots, \mbox{ or } j_{k+1}$) and $\mathcal{M}(m-k-1, n)$ $\phi_{j}$-mergings (here $j \neq j_1, j_2, \cdots, \mbox{ and } j_{k+1}$), which implies that
$$
|\mathbb{R}^{(k)}|b(\eta_{l_{k+1}^*}^{(k+1)}))|_{\mathcal{M}} \leq \mathcal{M}(k+1, m-1)+1+\mathcal{M}(m-k-1, n).
$$
So eventually we will have $\mathbb{R}^{(m-1)}$, within which all the merged subpaths are semi-reachable through $\phi$ by $\eta_1^{(1)}, \eta_2^{(1)}, \cdots, \mbox{ or } \eta_{l_1}^{(1)}$. One checks that $|\mathbb{R}^{(m-1)}|_{\mathcal{M}} \geq \mathcal{M}(m, m-1)+1$, which implies that within $G$ all the merged subpaths semi-reachable through $\phi$ by $\eta_1^{(1)}, \eta_2^{(1)}, \cdots, \mbox{ or } \eta_{l_1}^{(1)}$ will spread out to no less than $m$ $\psi$-paths.

Now within $G$, consider all the merged subpaths semi-reachable through $\phi$ by $\eta_1^{(1)}, \eta_2^{(1)}$,  $\cdots, \mbox{ or } \eta_{l_1}^{(1)}$. As shown above, no less than $m$ $\psi$-paths, say $\psi_1, \psi_2, \cdots, \psi_{m'}$ ($m' \geq m$), contain at least one of the semi-reachable merged subpaths. Consider the smallest such merged  subpaths on each of $\psi_1, \psi_2, \cdots, \psi_{m'}$, say $\pi_1, \pi_2, \cdots, \pi_{m'}$, respectively. We can assume that none of $\pi_1, \pi_2, \cdots, \pi_{m'}$ belongs to $\phi_{j_1}$, otherwise for some $i$, $\eta_i^{(1)}$ is semi-reachable through $\phi$ by itself from above, then certain reroutings can be done to reduce the number of mergings. So, at least two of such smallest merged subpaths, say $\pi_i$, $\pi_j$, will belong to the same $\phi$-path. Assume that $\pi_i$ is smaller than $\pi_j$ on this $\phi$-path. If there is a merged subpath $\pi'_j$ immediately ahead of $\pi_j$ on $\psi_j$, then by definition, $\pi'_j$ will be semi-reachable by some $\eta_i^{(1)}$ as well, which contradicts the fact that $\pi_j$ is the smallest semi-reachable merged subpath on $\psi_j$. As a consequence, at least one of $\pi_1, \pi_2, \cdots, \pi_{m'}$ will be in fact the smallest merged subpath on the corresponding $\psi$-path. Without loss of generality, we assume that $\pi_1$ is in fact the smallest merged subpath on $\psi_1$ and $\pi_1$ is semi-reachable through $\phi$ from above by some $\eta_i^{(1)}$.

Apparently $\pi_1$ is within $\mathbb{S}$, since otherwise $\mathbb{S} \backslash \psi_1$, consisting of subpaths from $m$ $\phi$-paths and $n-1$ $\psi$-paths, will have more than $\mathcal{M}(m, n-1)$ mergings, which implies certain merging reducing reroutings can be done. With the same argument, we can assume
$$
|\mathbb{S}|a(\pi_1)) \backslash \psi_1|_{\mathcal{M}} \leq \mathcal{M}(m, n-1).
$$
As a consequence of this, we have
$$
|\mathbb{S}^{(0)}\stackrel{\triangle}{=}\mathbb{S} \backslash \mathbb{S}|a(\pi_1))|_{\mathcal{M}} \geq \sum_{j=1}^{m-2} (\mathcal{M}(j, n)+1+\mathcal{M}(m-j, n))+\mathcal{M}(m-1,n)+1.
$$
In the following, we shall prove that within $G$, some $\eta_i^{(1)}$ is semi-reachable through $\phi$ by itself from above. Now pick $i_1$ such that $\phi_{i_1}$ contains $\pi_1$ and assume that within $\mathbb{S}^{(0)}$, $\phi_{i_1}$ merges with $\psi$ at the merged subpaths $\zeta_1^{(1)}=\pi_1, \zeta_2^{(1)}, \cdots, \zeta_{r_1}^{(1)}$. Then one can prove that there is at least one $r_1^*$ such that there is a merged $\phi_{i_2}$-subpath ($i_2 \neq i_1$), say $\lambda^{(1)}$, is immediately ahead of $\zeta_{r_1^*}^{(1)}$ on some $\psi$-path (and thus $\lambda^{(1)}$ and all the merged subpath larger than $\lambda^{(1)}$ on $\phi_{i_2}$ are semi-reachable through $\phi$ by $\zeta_{r_1^*}^{(1)}$, and thus semi-reachable through $\phi$ by some $\eta_i^{(1)}$), since otherwise, $r_1$ must be smaller than $\mathcal{M}(1, n)$, which means $\mathbb{S}^{(0)} \backslash \mathbb{S}^{(0)}|b(\zeta_1^{(1)}), \cdots, b(\zeta_{l_1}^{(1)}))$, consisting of suppaths from $m-1$ $\phi$-paths and $n$ $\psi$-paths, will have more than $\mathcal{M}(m-1, n)$ mergings, which implies reroutings can be done to reduce the number of mergings. Now pick a $r_1^*$ such that the corresponding $\zeta_{r_1^*}^{(1)}$ is the smallest among such merged subpaths and let $\mathbb{S}^{(1)}=\mathbb{S}^{(0)} \backslash \mathbb{S}^{(0)}|b(\zeta_{r_1^*}^{(1)}))$. Within $\mathbb{S}^{(0)}|b(\zeta_{r_1^*}^{(1)}))$, we can only have at most $\mathcal{M}(1, n)+1$ $\phi_{i_1}$-mergings and at most $\mathcal{M}(m-1, n)$ non-$\phi_{i_1}$-mergings, which implies that
$$
|\mathbb{S}^{(0)}|b(\zeta_{l_1^*}^{(1)}))|_{\mathcal{M}} \leq \mathcal{M}(1, n)+1+\mathcal{M}(m-1, n).
$$
Now suppose we have $j_1, j_2, \cdots, j_{k+1}$ and $\mathbb{S}^{(k)}$ already, and suppose within $\mathbb{S}^{(k)}$, $\phi_{j_1}, \phi_{j_2}, \cdots, \phi_{j_{k+1}}$ merge with $\psi$ at $\zeta_1^{(k+1)}, \zeta_2^{(k+1)}, \cdots, \zeta_{r_{k+1}}^{(k+1)}$. Then one can prove that there is at least one $r_{k+1}^*$ such that there is $\phi_{i_{k+2}}$-merged subpath ($i_{k+2} \neq i_1, i_2, \cdots, i_{k+1}$), say $\lambda^{(k+1)}$, is immediately ahead of $\zeta_{r_{k+1}^*}^{(k+1)}$ on some $\psi$-path (and thus $\lambda^{(k+1)}$ and all the merged subpath larger than $\lambda^{(k+1)}$ on $\phi_{i_{k+2}}$ are semi-reachable through $\phi$ by $\zeta_{r_{k+1}^*}^{(k+1)}$, thus semi-reachable through $\phi$ by some $\eta_i^{(1)}$), since otherwise, $r_{k+1}$ must be smaller than $\mathcal{M}(k+1, n)$, which means $\mathbb{S}^{(k)} \backslash \mathbb{S}^{(k)}|b(\zeta_1^{(k+1)}), \cdots, b(\zeta_{r_{k+1}}^{(k+1)}))$, consisting of subpaths from $m-k-1$ $\phi$-paths and $n$ $\psi$-paths, will have more than $\mathcal{M}(m-k-1, n)$ mergings, which implies reroutings can be done to reduce the number of mergings. Pick a $r_{k+1}^*$ such that $\zeta_{r_{k+1}^*}^{(k+1)}$ is the smallest such merged subpath and define $\mathbb{S}^{(k+1)}=\mathbb{S}^{(k)} \backslash \mathbb{S}^{(k)}|b(\zeta_{r_{k+1}^*}^{(k+1)}))$. Similarly one checks that within $\mathbb{S}^{(k)}|b(\zeta_{r_{k+1}^*}^{(k+1)}))$, we can only have at most $\mathcal{M}(k+1, n)+1$ $\phi_{i}$-mergings ($i=i_1, i_2, \cdots, \mbox{ or } i_{k+1}$) and $\mathcal{M}(m-k-1, n)$ $\alpha_{i}$-mergings ($i \neq i_1, i_2, \cdots, \mbox{ and } i_{k+1}$), which implies that
$$
|\mathbb{S}^{(k)}|b(\zeta_{r_{k+1}^*}^{(k+1)}))|_{\mathcal{M}} \leq \mathcal{M}(k+1, n)+1+\mathcal{M}(m-k-1, n).
$$
So eventually we will have $\mathbb{S}^{(m-2)}$. One checks that
$$
|\mathbb{S}^{(m-2)}|_{\mathcal{M}} \geq \mathcal{M}(m-1, n)+1,
$$
which implies all $\phi_i$'s has merged subpaths within $\mathbb{S}$ semi-reachable through $\phi$ by some $\eta_i^{(1)}$. In particular, some merged $\phi_{j_1}$-subpath within $\mathbb{S}$ is semi-reachable by some $\eta_i^{(1)}$, thus some $\eta_i^{(1)}$ is semi-reachable by itself from above, so certain reroutings can be done to reduce the number of mergings. With this contradiction, we establish the proposition.

\end{proof}

\begin{rem}
Define $w_i=\sum_{j=1}^i (\mathcal{M}(j, m-1)+1)$. Note that Proposition~\ref{ILikeIt} is still true if $U(m, n)$ is replaced by $m w_m$, which produces an alternative upper bound on $\mathcal{M}(m, n)$. One can obtain the proof of this by replacing $U(m, n)$ in the first and second paragraphs in the proof of Proposition~\ref{ILikeIt} with $mw_m$ and replacing the third paragraph in the proof of Proposition~\ref{ILikeIt} with the following paragraph.

Now assume that we find $\eps_1, \eps_2, \cdots, \eps_n$ such that
$$
\mathbb{S}\stackrel{\triangle}{=}G|b(\eps_1), \cdots, b(\eps_n))
$$
has no less than $V(m, n)$ mergings and $\mathbb{R}=G \backslash \mathbb{S}$ has no less than $m w_m$ mergings. By the Pigeonhole principle, there must be at least one $\phi_j$ such that $\phi_j$ merge with $\psi$ for no less than $w_m$ times. Without loss of generality, assume that $\phi_{j_1}$ merges with $\psi$ subsequently at $\eta_1^{(1)}, \eta_2^{(1)}, \cdots, \eta_{l_1}^{(1)}$, here $l_1 \geq w_m$. Now within $\mathbb{R}|b(\eta_{w_1}^{(1)}))$, unless $\phi_{j_1}$ merges with no less than $m$ $\psi$-paths, there exists $j_2 \neq j_1$ such that a merged $\phi_{j_2}$-subpath, say $\gamma^{(1)}$, is immediately ahead of certain merged $\phi_{j_1}$-subpath, say $\eta_{l_1^*}^{(1)}$. So $\gamma^{(1)}$ and any merged subpath larger than $\gamma^{(1)}$ on $\phi_{j_2}$ is semi-reachbable through $\phi$ by $\eta_{l_1^*}^{(1)}$. Now continue the argument inductively and suppose we have already obtained $j_1, j_2, \cdots, j_{k+1}$. Then within $\mathbb{R}|b(\eta_{w_{k+1}}^{(1)})) \backslash \mathbb{R}|b(\eta_{w_{k}}^{(1)}))$, assume that $\phi_{j_1}, \phi_{j_2}, \cdots, \phi_{j_{k+1}}$ merge with $\psi$ at $\eta_1^{(k+1)}, \eta_2^{(k+1)}, \cdots, \eta_{l_{k+1}}^{(k+1)}$, here obviously $l_{k+1} \geq w_{k+1}-w_k$. Unless $\phi_{j_1}, \phi_{j_2}, \cdots, \phi_{j_{k+1}}$ merge with no less than $m$ $\psi$-paths within $\mathbb{R}|b(\eta_{w_{k+1}}^{(1)})) \backslash \mathbb{R}|b(\eta_{w_{k}}^{(1)}))$, there exists $j_{k+2} \neq j_1, j_2, \cdots, j_{k+1}$ such that a merged $\phi_{j_{k+2}}$-subpath, say $\gamma^{(k+1)}$, is immediately ahead of some $\eta_{l_{k+1}^*}^{(k+1)}$. Thus $\gamma^{(k+1)}$ and any merged subpaths larger than $\gamma^{(k+1)}$ on $\phi_{j_{k+2}}$ are semi-reachable through $\phi$ by $\eta_{l_{k+1}^*}^{(k+1)}$, and thus by some $\eta_i^{(1)}$. Eventually one can show that within $\mathbb{R}|b(\eta_{w_{m}}^{(1)})) \backslash \mathbb{R}|b(\eta_{w_{m-1}}^{(1)}))$, all merged non-$\phi_{j_1}$-subpaths are semi-reachable through $\phi$ by some $\eta_i^{(1)}$. Since
$$
|\mathbb{R}|b(\eta_{w_{m}}^{(1)})) \backslash \mathbb{R}|b(\eta_{w_{m-1}}^{(1)}))|_{\mathcal{M}} \geq \mathcal{M}(m-1, m)+1,
$$
all merged subpaths within $\mathbb{R}|b(\eta_{w_{m}}^{(1)})) \backslash \mathbb{R}|b(\eta_{w_{m-1}}^{(1)}))$ spread out to no less than $m$ $\psi$-paths, which implies that within $G$ all the merged subpaths semi-reachable through $\phi$ by $\eta_1^{(1)}, \eta_2^{(1)}$, $\cdots, \mbox{ or } \eta_{l_1}^{(1)}$ will spread out to no less than $m$ $\psi$-paths.

\end{rem}

\begin{exmp} \label{MWithOne}
It was first shown in~\cite{Ta2003} that $\mathcal{M}(1, n)=n$. To see this, consider any acyclic directed graph $G(E, V)$ with $2$ distinct sources $S_1, S_2$ and $2$ distinct sinks $R_1, R_2$, where the min-cut between $S_i$ and $R_i$ is denoted by $c_i$; here $c_1=1$ and $c_2=n$. Pick a set of Menger's path $\alpha_i=\{\alpha_{i, 1}, \alpha_{i, 2}, \cdots, \alpha_{i, c_i}\}$ from $S_i$ to $R_i$. If $\alpha_{1, 1}$ merges with some $\alpha_2$-path, say $\alpha_{2, j}$, at least twice, say at $e$ and $f$. Then we can replace $\alpha_{1, 1}[a(e), a(f)]$, the subpath of $\alpha_{1, 1}$ starting from $a(e)$ to $a(f)$, by $\alpha_{2, j}[a(e), a(f)]$, the subpath of $\alpha_{2, j}$ starting from $a(e)$ to $a(f)$. After this rerouting, the new $\alpha_{1, 1}$ has fewer mergings with $\alpha_2$. This shows that
$$
\mathcal{M}(1, n) \leq n,
$$
since $\alpha_{1, 1}$ can be chosen to merge with each $\alpha_2$-path for at most once. For the other direction, by Proposition~\ref{Isolated}, we have
$$
\mathcal{M}(1, n) \geq \sum_{i=1}^n \mathcal{M}(1, 1)=n,
$$
the last equality follows from the simple fact that $\mathcal{M}(1, 1)=1$.
\end{exmp}

\begin{rem}
Note that Example~\ref{MWithOne} together with the inductive argument in the proof of Proposition~\ref{ILikeIt} gives
an alternative proof of that $\mathcal{M}(c_1, c_2)$ is finite.
\end{rem}

\begin{pr} \label{k-2}
$$
\mathcal{M}(2, n)=3n-1.
$$
\end{pr}

\begin{proof}

We first show that $\mathcal{M}(2, n)$ is upper bounded by $3n-1$.

Consider an acyclic graph $G$ with two distinct sources $S_1, S_2$ and two distinct sinks $R_1, R_2$. Assume that a group of Menger's paths $\psi=\{\psi_1, \psi_2\}$ connect $S_1, R_1$, and another group of Menger's paths $\phi=\{\phi_1, \phi_2, \cdots, \phi_n\}$ connect $S_2, R_2$;
and assume that $G$ consists of only the above-mentioned two groups of Menger's paths and $G$ is non-reroutable with respect to $\phi$ and $\psi$.

Assume that out of $n$ $\phi$-paths, there are $k$ $\phi$-paths, say $\phi_1, \phi_2, \cdots, \phi_k$, each of which merges for at least $3$ times. Notice that when $k=0$, the total number of mergings in $G$ is upper bounded by $2n$; so in the following, we only consider the case when $k \geq 1$. For $i=1, 2, \cdots, k$, assume that $\phi_i$ sequentially merges at $\gamma_{i, 1}, \gamma_{i, 2}, \cdots, \gamma_{i, n_i}$. Let $\ell(i, j)$ denote the index of the $\psi$-path which $\gamma_{i, j}$ belongs to. Since each $\phi$-path has to merge with $\psi_1, \psi_2$ alternately, we have
$\ell(i, j)=\ell(i, k)$ if $j=k \mod 2$.

Note that for each pair $\gamma_{i, j}, \gamma_{i, j+2}$, there must exist one merged subpath, say $\eta_{i, j}$, which is in between $\gamma_{i, j}$ and $\gamma_{i, j+2}$ on $\psi_{\ell(i, j)}$. One easily checks that for any $i, j$, the $\phi$-path which $\eta_{i, j}$ belongs to can merge with $\psi$-paths at most twice. We then claim that, for fixed $i$, one can choose all $\eta_{i, j}$ such that each $\eta_{i, j}$ belongs to a different $\phi$-path. This can be shown by an inductive approach on the length of path $\phi_i$. The case when $n_i=3$ is trivial. Now suppose the claim is established for $n_i=3, 4, \cdots, l$. We next show that the claim is also true for $n_i=l+1$. First note that if $\eta_{i, j}$ and $\eta_{i, k}$ ($j < k$) share the same $\phi$-path, then necessarily $k=j+1$. Now consider $\eta_{i, 1}$. If $\eta_{i, 1}$ does not share the same $\phi$-path with $\eta_{i, 2}$, then by induction assumptions, the claim, when restricted to $\phi_i[a(\gamma_{i, 2}), b(\gamma_{i, n_i})]$, is true, thus implying the claim without any restriction is true. Hence, in the following, we only consider the case when $\eta_{i, 1}$ and $\eta_{i, 2}$ share the same $\phi$-path. For this case, there must be a merged subpath either in between $\gamma_{i, 1}$ and $\eta_{i, 1}$ on $\psi_{\ell(i, 1)}$ or in between $\eta_{i, 2}$ and $\gamma_{i, 4}$ on $\psi_{\ell(i, 2)}$, since, otherwise, $\gamma_{i, 4}$ would be semi-reachable through $\phi$ via $\gamma_{i, 4}, \eta_{i, 2}, \gamma_{i, 1}, \eta_{i, 1}, \gamma_{i, 4}$ from above by itself, which implies that $G$ is reroutable. If there is a merged subpath, say $\eta'_{i, 1}$, in between $\gamma_{i, 1}$ and $\eta_{i, 1}$ on $\phi_{\ell(i, 1)}$, then one can reset $\eta_{i, 1}$ to be $\eta'_{i, 1}$, then apply induction to $\phi_i[a(\gamma_{i, 3}), b(\gamma_{i, n_i})]$ to establish the claim. Hence, in the following, we further assume that there is no merged subpath in between $\gamma_{i, 1}$ and $\gamma_{i, 3}$ on $\psi_{\ell(i, 1)}$, thus there must exist a merged subpath, say $\eta'_{i, 2}$, in between $\eta_{i, 2}$ and $\gamma_{i, 4}$ on $\psi_{\ell(i, 2)}$. If $\eta'_{i, 2}$ does not share the same $\phi$-path with $\eta_{i, 3}$, we can reset $\eta_{i, 2}$ to be $\eta'_{i, 2}$ and apply induction on $\phi_i[a(\gamma_{i, 3}), b(\gamma_{i, n_i})]$ to establish the claim. Hence in the following, we further assume $\eta'_{i, 2}$ does share the same $\phi$-path with $\eta_{i, 3}$. For $j=2, 3, \cdots, n_i-3$, we say $\eta_{i, j}$ is {\em type I} if there exists exactly one merged subpath $\eta'_{i, j}$ in between $\eta_{i, j}$ and $\gamma_{i, j+2}$, and $\eta'_{i, j}$ and $\eta_{i, j+1}$ share the same $\phi$-path. Let $2 \leq k \leq n_i-3$ be the smallest index such that $\eta_{i, k}$ is not type I, meaning either (there is no merged subpath in between $\eta_{i, k}$ and $\gamma_{i, k+2}$ on $\psi_{\ell(i, k+2)}$) or (there is a merged subpath, say $\eta'_{i, k}$, in between $\eta_{i, k}$ and $\gamma_{i, k+2}$ on $\psi_{\ell(i, k+2)}$, however $\eta'_{i, k}$ does not share the same $\phi$-path with $\eta_{i, k+1}$). The former case implies that $\gamma_{i, k+2}$ is semi-reachable through $\phi$ via $\gamma_{i, k+2}, \eta_{i, k}, \eta'_{i, k-1}, \eta_{i, k-1}, \cdots, \eta_{i, 1}, \gamma_{i, 1}, \gamma_{i, k+2}$ from above by itself, thus it would not occur; while for the latter case, one can reset $\eta_{i, k}$ to be $\eta'_{i, k}$, $\eta_{i, k-1}$ to be $\eta'_{i, k-1}$, $\cdots$, $\eta_{i, 2}$ to be $\eta'_{i, 2}$, and apply induction on $\phi_i[a(\gamma_{i, k+2}), b(\gamma_{i, n_i})]$ to establish the claim. So in the following, we further assume that all $\eta_{i, j}$, $j=2, 3, \cdots, n_i-3$, are type I. Now consider $\eta_{i, n_i-2}$. One checks that there must exist a merged subpath, say $\eta'_{i, n_i-2}$ in between $\eta_{i, n_i-2}$ and $\gamma_{i, n_i}$ on $\psi_{\ell(i, n_i-2)}$, since otherwise, again, $\gamma_{i, n_i}$ would be semi-reachable through $\phi$ via $\gamma_{i, n_i}, \eta_{i, n_i-2}, \eta'_{i, n_i-3}, \eta_{i, n_i-3}, \cdots, \eta_{i, 1}, \gamma_{i, 1}, \gamma_{i, n_i}$ from above itself. Thus we can reset $\eta_{i, n_i}$ to be $\eta'_{i, n_i}$, $\eta_{i, n_i-1}$ to be $\eta'_{i, n_i-1}$, $\cdots$, $\eta_{i, 2}$ to be $\eta'_{i, 2}$. One checks that each of newly defined $\eta_{i, j}$ belong to different $\phi$-path.

One also verifies that for any $i, j=1, 2, \cdots, k$, $\phi_i$ and $\phi_j$ are ``well-separated''; more precisely,
one of the pair, say $\phi_i$, must be ``smaller'' than the other one, $\phi_j$, in the sense that the merged subpaths by $\phi_i$ on $\psi_1, \psi_2$ must be smaller than the merged subpaths by $\phi_j$ on $\psi_1, \psi_2$, respectively. Through renumbering, if necessary, we assume that for any $1 \leq i < j \leq k$, $\phi_i$ is always smaller than $\phi_j$. Then with this, one checks that for any $1 \leq i_1 < i_2 \leq k$, $\eta_{i_1, j_1}$ and $\eta_{i_2, j_2}$ share the same $\phi$-path if and only if $i_2=i_1+1$ and $j_1=n_{i_1}-2, j_2=1$. Thus
there must exist at least $(n_1-2+n_2-2+\cdots+n_k-2)-(k+1)$ $\phi$-paths, each of which contains some $\eta_{i, j}$ as subpath, and again, each of these $\phi$-paths can merge at most twice.

So the total number of mergings in $G$ is upper bounded by
$$
n_1+n_2+\cdots+n_k+2(n-k),
$$
subject to $(n_1-2)+(n_2-2)+\cdots+(n_k-2)-(k-1) \leq n-k$ (the number of $\phi$-paths that contains some $\eta_{i, j}$ as subpath is lower bounded by $n-k$). One then checks that the number of mergings is upper bounded by $3n-1$, thus we conclude that $\mathcal{M}(2, n)$ is upper bounded by $3n-1$.

To show $\mathcal{M}(2, n)$ is also lower bounded by $3n-1$, it suffices to construct a non-reroutable graph $G$ with $M(G)=3n-1$. For instance, we can first choose $\phi_1$ to alternately merge with $\psi_1, \psi_2$ $n+1$ times at $\gamma_1, \gamma_2, \cdots, \gamma_{n+1}$. Next we choose each $\phi_i$, $i=2, 3, \cdots, n$ to merge exactly twice, while ensuring that, for all $i < j$, $\phi$ is smaller than $\phi_j$ in the sense that the merged subpaths by $\phi_i$ on $\psi_1, \psi_2$ are smaller than the merged subpaths by $\phi_j$ on $\psi_1, \psi_2$, respectively. Moreover we also require that $\phi_{2i}$ first merges with $\psi_1$ in between $\gamma_{2i-1}$ and $\gamma_{2i+1}$, and then merge with $\psi_2$ in between $\gamma_{2i-2}$ and $\gamma_{2i}$, and that $\phi_{2i+1}$ first merges with $\psi_2$ in between $\gamma_{2i}$ and $\gamma_{2i+2}$, and then merges with $\psi_1$ in between $\gamma_{2i-1}$ and $\gamma_{2i+1}$ (see an example graph in Figure~\ref{2n-example} for the case $n=3$). It can be checked that such a graph is non-reroutable and the number of mergings is $3n-1$.

\begin{figure}
\psfrag{S1}{$S_1$} \psfrag{S2}{$S_2$} \psfrag{R1}{$R_1$} \psfrag{R2}{$R_2$}
\psfrag{a}{$\phi_1$} \psfrag{b}{$\phi_2$} \psfrag{c}{$\phi_3$} \psfrag{psi1}{$\psi_1$} \psfrag{psi2}{$\psi_2$} \psfrag{psi3}{$\psi_3$}
\psfrag{g1}{$\gamma_1$} \psfrag{g2}{$\gamma_2$} \psfrag{g3}{$\gamma_3$}

\centerline{\includegraphics[width=2.5in]{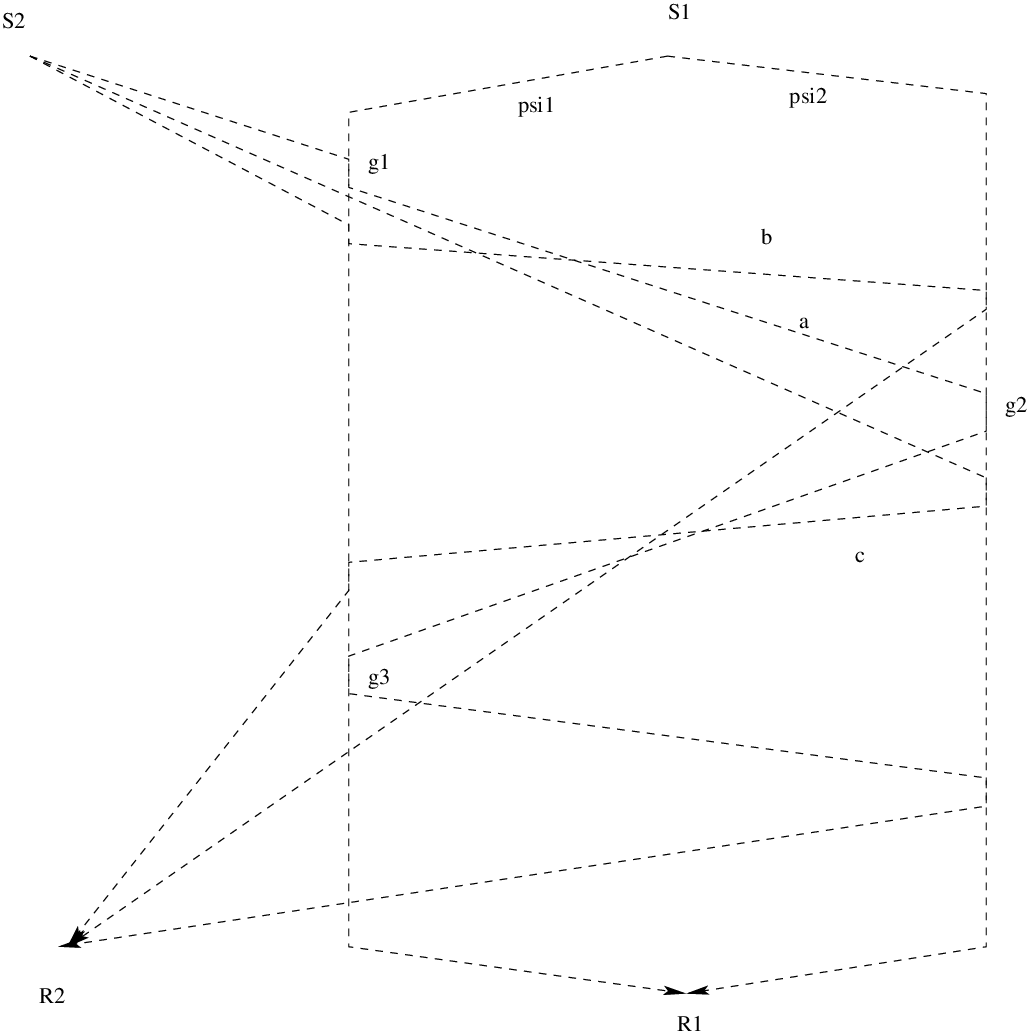}}
\caption{an example graph achieving $\mathcal{M}(2, 3)$}
\label{2n-example}
\end{figure}

\end{proof}

We next prove that when fixing $c_1$, $\mathcal{M}(c_1, c_2)$ grows at most linearly with respect to $c_2$.
\begin{pr} \label{k-n}
For any fixed $k$, there exists a positive constant $C_k$ such that for all $n$,
$$
\mathcal{M}(k, n) \leq C_k n.
$$
\end{pr}

\begin{proof}

We proceed by induction on $k$. It follows from $\mathcal{M}(1, n)=n$ (see Example~\ref{MWithOne}) that for the case when $k=1$, the theorem is true with $C_1=1$. Now for any $k \geq 2$, assume that for any $i=1, 2, \cdots, k-1$, there exists a positive constant $C_i$ such that for all $n$,
$$
\mathcal{M}(i, n) \leq C_i n;
$$
we next show that there exists a positive constant $C_k$ such that for all $n$,
$$
\mathcal{M}(k, n) \leq C_k n.
$$

Again for an acyclic graph $G$ with two distinct sources $S_1, S_2$ and two distinct sinks $R_1, R_2$, let $\psi=\{\psi_1, \psi_2, \cdots, \psi_k\}$ denote a group of Menger's paths connecting $S_1, R_1$, and let $\phi=\{\phi_1, \phi_2, \cdots, \phi_n\}$ denote another group of Menger's paths connecting $S_2, R_2$. We further assume that $G$ consists of only the above-mentioned two groups of Menger's paths and $G$ is non-reroutable with respect to $\phi$ and $\psi$. We only need to prove that there exists $C_k$ such that $|G|_{\mathcal{M}} \leq C_k n$, where $|G|_{\mathcal{M}}$ denotes the number of mergings in $G$ (Note that when $G$ is non-reroutable, $|G|_{\mathcal{M}}$ is in fact $M(G)$).

Consider the following iterative procedure, where, for notational simplicity, we treat a graph as the union of its vertex set and edge set. Initially set $\mathbb{S}^{(0)}=\emptyset$, and $\mathbb{R}^{(0)}=G$. Now for an arbitrary yet fixed $K > 0$ (we shall choose $K$ large enough later) and each $j=1, 2, \cdots, k$, pick merged subpaths $\gamma_{0, j}$   such that $\gamma_{0, j}$ belongs to path $\psi_j$ and
$$
|\mathbb{R}^{(0)}|b(\gamma_{0, 1}), b(\gamma_{0, 2}), \cdots, b(\gamma_{0, k}))|_{\mathcal{M}}=K;
$$
note that, without loss of generality, we can assume that $\gamma_{0, j}$ is the largest merged subpath from $\mathbb{R}^{(0)}|b(\gamma_{0, 1}), b(\gamma_{0, 2}), \cdots, b(\gamma_{0, k}))$ on $\psi_j$ (one can choose $\gamma_{0, j}$ to be $S_1$ if such merged subpath does not exist on $\psi_j$).
Now set
$$
\mathbb{S}^{(1)}=\mathbb{S}^{(0)} \cup \mathbb{R}^{(0)}|b(\gamma_{0, 1}), b(\gamma_{0, 2}), \cdots, b(\gamma_{0, k}))
$$
and
$$
\mathbb{R}^{(1)}=\mathbb{R}^{(0)} \backslash \mathbb{R}^{(0)}|b(\gamma_{0, 1}), b(\gamma_{0, 2}), \cdots, b(\gamma_{0, k})).
$$
If a merged subpath is the smallest or the largest one on a $\phi$-path, we say it is a {\em terminal} merged subpath on the $\phi$-path, or simply a $\phi$-terminal merged subpath. Now suppose that we already obtain
$$
\mathbb{S}^{(i)}=\mathbb{S}^{(i-1)} \cup \mathbb{R}^{(i-1)}|b(\gamma_{i-1, 1}), b(\gamma_{i-1, 2}), \cdots, b(\gamma_{i-1, k}))
$$
and
$$
\mathbb{R}^{(i)}=\mathbb{R}^{(i-1)} \backslash \mathbb{R}^{(i-1)}|b(\gamma_{i-1, 1}), b(\gamma_{i-1, 2}), \cdots, b(\gamma_{i-1, k})),
$$
where $\mathbb{R}^{(i-1)}|b(\gamma_{i-1, 1}), b(\gamma_{i-1, 2}), \cdots, b(\gamma_{i-1, k}))$ contains exactly $K$ mergings and at least one $\phi$-terminal merged subpath, we then continue to pick merged subpath $\gamma_{i, j}$ on $\psi_j$ from $R^{(i)}$ such that
$$
|\mathbb{R}^{(i)}|b(\gamma_{i, 1}), b(\gamma_{i, 2}), \cdots, b(\gamma_{i, k}))|_{\mathcal{M}}=K,
$$
here, again, each $\gamma_{i-1, j}$, $j=1, 2, \cdots, k$, is chosen to be largest merged subpath on $\psi_j$; and we set
$$
\mathbb{S}^{(i+1)}=\mathbb{S}^{(i)} \cup \mathbb{R}^{(i)}|b(\gamma_{i, 1}), b(\gamma_{i, 2}), \cdots, b(\gamma_{i, k})),
$$
and
$$
\mathbb{R}^{(i+1)}=\mathbb{R}^{(i)} \backslash \mathbb{R}^{(i)}|b(\gamma_{i, 1}), b(\gamma_{i, 2}), \cdots, b(\gamma_{i, k})).
$$
We will further continue in this fashion, if necessary, to obtain $\mathbb{S}^{(2)}, \mathbb{R}^{(2)}, \mathbb{S}^{(3)}, \mathbb{R}^{(3)}, \cdots$ until we obtain $\mathbb{S}^{(i_0)}, \mathbb{R}^{(i_0)}$ such that $\mathbb{S}^{(i_0)} \backslash \mathbb{S}^{(i_0-1)}$ does not contain any $\phi$-terminal merged subpaths. Note that each $\mathbb{S}^{(j)} \backslash \mathbb{S}^{(j-1)}$ ($j=1, 2, \cdots, i$) has $K$ mergings, we thus call each of them a {\em $K$-trunk}. The first $i_0-1$ $K$-trunk, $\mathbb{S}^{(j)} \backslash \mathbb{S}^{(j-1)}$ ($j=1, 2, \cdots, i_0-1$) contains some $\phi$-terminal merged subpaths, we thus call these $K$-trunks {\em singular}; on the other hand, the $i_0$-th $K$-trunk does not contain any terminal merged subpaths on any $\phi$-path, we then call this $K$-trunk {\em normal}.

By Theorem~\ref{main} and the fact that $G$ is non-reroutable, we now choose $K$ so large that the number of critical merged subpaths within $\mathbb{S}^{(i_0)} \backslash \mathbb{S}^{(i_0-1)}$ is larger than $k$ (thus the number of critical merged subpaths within $\mathbb{S}^{(i_0)}$ is larger than $k$), here we say a merged subpath is {\em critical} within a subgraph of $G$ if the associated $\phi$-path, after merging at this merged subpath, does not merge anymore within this subgraph (note that since $\mathbb{S}^{(i_0)} \backslash \mathbb{S}^{(i_0-1)}$ is normal, the $\phi$-path will continue to merge within $\mathbb{R}^{(i_0)}$).

Now, let $T_{i_0}$ denote the set of all the merged subpaths within $\mathbb{R}^{(i_0)}$ which can semi-reach some critical merged subpath within $\mathbb{S}^{(i_0)}$ through group $\psi$ from below. One checks at least one of those $\psi$-paths, each of which contains at least one critical merged subpath within $\mathbb{S}^{(i_0)}$, does not contain any merged subpath within $T_{i_0}$ (since, otherwise, by a usual back-tracing argument, one checks that there exists a merging reducing rerouting). Assume that $\xi_{i_0, 1}, \xi_{i_0, 2}, \cdots, \xi_{i_0, m_{i_0}}$ ($1 \leq m_{i_0} \leq k-1$) are the largest merged subpaths from $T_{i_0}$, and they belong to paths $\psi_{j_{i_0, 1}}, \psi_{j_{i_0, 2}}, \cdots, \psi_{j_{i_0, m_{i_0}}}$, respectively. Let
$$
\bar{T}_{i_0}=\bigcup_{j=1}^{m_{i_0}} \psi_{j_{i_0, j}}[b(\gamma_{i_0-1, j_{i_0, j}}),  b(\xi_{i_0, j})],
$$
here, one can check that $\psi_{j_{i_0, j}}[b(\gamma_{i_0-1, j_{i_0, j}}),  b(\xi_{i_0, j})]$ is the ``segment'' of $\psi_{j_{i_0, j}}$ that is within $\mathbb{R}^{(i_0)}$ and before $b(\xi_{i_0, j})$, or more formally,
$$
\psi_{j_{i_0, j}}[b(\gamma_{i_0-1, j_{i_0, j}}),  b(\xi_{i_0, j})]=\psi_{j_{i_0, j}}[S_1, b(\xi_{i_0, j})] \cap \mathbb{R}^{(i_0)}.
$$

Note that for any $\xi_{i_0, j}$, $j=1, 2, \cdots, m_{i_0}$, the asociated $\phi$-path, from $\xi_{i_0, j}$, may merge outside $\bar{T}_{i_0}$ next time; if this $\phi$-path merge within $\bar{T}_{i_0}$ again after a number of mergings outside $\bar{T}_{i_0}$, we call it an {\em excursive} $\phi$-path (with respect to $\xi_{i_0, j}$). One checks that there are at most $k-2$ excursive $\phi$-paths (since, otherwise, we can find a cycle in $G$, which is a contradiction). So, letting $L_{i_0}$ denote the number of $\phi$-paths that contains at least one merged subpath within $\mathbb{R}^{(i_0)}|b(\xi_{i_0, 1}), b(\xi_{i_0, 2}), \cdots, b(\xi_{i_0, m_{i_0}}))$, the number of connected $\phi$-paths is upper bounded by $L_{i_0}+(k-2)$. Then, by induction assumptions,
$$
|\mathbb{R}^{(i_0)}|b(\xi_{i_0, 1}), b(\xi_{i_0, 2}), \cdots, b(\xi_{i_0, m_{i_0}})) \cap \bar{T}_{i_0}|_{\mathcal{M}} \leq C_{m_{i_0}} (L_{i_0}+(k-2)) \leq C_{k-1} L_{i_0}+C_{k-1} (k-2).
$$
On the other hand, for any merged subpath, say $\eta$, from $\bar{T}_{i_0}$ other than $\xi_{i_0, j}$, $j=1, 2, \cdots, m_{i_0}$, the associated $\phi$-path, from $\eta$, can only merge within $\bar{T}_{i_0}$ (note that this implies that, from $\mathbb{R}^{(i_0)}|b(\xi_{i_0, 1}), b(\xi_{i_0, 2}), \cdots, b(\xi_{i_0, m_{i_0}}))$, at most $k-1$ $\phi$-paths can merge further; this fact will be used later in the proof). One checks that there exists at least one $\psi_{j_{i_0, j}}$, $j=1, 2, \cdots, m_{i_0}$, which does not merge with any $\phi$-paths within $\mathbb{R}^{(i_0)}|b(\xi_{i_0, 1}), b(\xi_{i_0, 2}), \cdots, b(\xi_{i_0, m_{i_0}})) \backslash \bar{T}_{i_0}$ (again, since, otherwise, we can find a cycle in $G$, which is a contradiction). Thus, by induction assumptions,
$$
|\mathbb{R}^{(i_0)}|b(\xi_{i_0, 1}), b(\xi_{i_0, 2}), \cdots, b(\xi_{i_0, m_{i_0}})) \backslash \bar{T}_{i_0} |_{\mathcal{M}} \leq C_{k-1} L_{i_0},
$$
It then immediately follows that
$$
|\mathbb{R}^{(i_0)}|b(\xi_{i_0, 1}), b(\xi_{i_0, 2}), \cdots, b(\xi_{i_0, m_{i_0}}))|_{\mathcal{M}} \leq 2 C_{k-1} L_{i_0}+ C_{k-1} (k-2).
$$

Now set
$$
\mathbb{S}^{(i_0+1)}=\mathbb{S}^{(i_0)} \cup \mathbb{R}^{(i_0)}|b(\xi_{i_0, 1}), b(\xi_{i_0, 2}), \cdots, b(\xi_{i_0, m_{i_0}}))
$$
and
$$
\mathbb{R}_{(i_0+1)}=\mathbb{R}^{(i_0)} \backslash \mathbb{R}^{(i_0)}|b(\xi_{i_0, 1}), b(\xi_{i_0, 2}), \cdots, b(\xi_{i_0, m_{i_0}})).
$$

We roughly summarize what we have done so far. Roughly speaking, from the source side of graph $G$, we keep ``cutting'' $K$-trunks, $\mathbb{S}^{(i)} \backslash \mathbb{S}^{(i-1)}$, $i=1, 2, \cdots, i_0-1$, from $G$, until we obtain a normal $K$-trunk, $\mathbb{S}^{(i_0)} \backslash \mathbb{S}^{(i_0-1)}$, then, by cutting all the merged subpaths smaller than some of those merged subpaths that can semi-reach some critical merged subpaths within $\mathbb{S}^{(i_0)}$ through $\psi$ from below, we obtain a $\tilde{K}$-trunk, $\mathbb{S}^{(i_0+1)} \backslash \mathbb{S}^{(i_0)}$.

Similar operations can be done to $\mathbb{R}^{(i_0+1)}$. More precisely, we keep cutting $K$-trunks from $\mathbb{R}^{(i_0+1)}$ until we obtain a normal $K$-trunk $\mathbb{S}^{(i_1)} \backslash \mathbb{S}^{(i_1-1)}$, then we cut all the merged subpaths smaller than some of those merged subpaths that can semi-reach some critical merged subpaths within $\mathbb{S}^{(i_1)}$ to obtain a $\tilde{K}$-trunk, $\mathbb{S}^{(i_1+1)} \backslash \mathbb{S}^{(i_1)}$ with
$$
|\mathbb{S}^{(i_1+1)} \backslash \mathbb{S}^{(i_1)}|_{\mathcal{M}} \leq 2C_{k-1} L_{i_1}+C_{k-1} (k-2),
$$
where $L_{i_1}$ denotes the number of $\phi$-paths that contains at least one merged subpath within $\mathbb{S}^{(i_1+1)} \backslash \mathbb{S}^{(i_1)}$. We continue these operations in an iterative fashion to further obtain normal $K$-trunks and $\tilde{K}$-trunks (here, again, we are following the same notational convention as before),
$$
\mathbb{S}^{(i_1+1)} \backslash \mathbb{S}^{(i_1)}, \cdots, \mathbb{S}^{(i_2)} \backslash \mathbb{S}^{(i_2-1)}, \mathbb{S}^{(i_2+1)} \backslash \mathbb{S}^{(i_2)}, \cdots, \mathbb{S}^{(i_3)} \backslash \mathbb{S}^{(i_3-1)}, \mathbb{S}^{(i_3+1)} \backslash \mathbb{S}^{(i_3)}, \cdots
$$
until there are no merged subpaths left in the graph.

As stated before, from each $\tilde{K}$-trunk, at most $k-1$ $\phi$-paths can merge further, so at least one of the $\phi$-paths from this trunk will not go forward to merge anymore, implying there exist at most $n$ $\tilde{K}$-trunks. Note also that the number of mergings within all $K$-trunks will be upper bounded by $3Kn$, since we can only have at most $2n$ singular $K$-trunks and $n$ normal $K$-trunks. Summing up all the merged subpaths contained in all $\tilde{K}$-trunks, we conclude that the number of merged subpaths within all $\tilde{K}$-trunks is upper bounded by
$$
2C_{k-1} (L_{i_0} + L_{i_1} + \cdots + L_{i_i}+ \cdots) + nC_{k-1} (k-2) \leq 2C_{k-1} (n+(k-1)(n-1))+nC_{k-1} (k-2),
$$
(here $(k-1)(n-1)$ is the upper bound on the number of $\phi$-paths that may belong to more than one $\tilde{K}$-trunk) which implies that the number of mergings in $G$ can be upper bounded by
$$
|G|_{\mathcal{M}} \leq C_k n,
$$
for some constant $C_k$.

\end{proof}

\section{Minimum Mergings $\mathcal{M^*}$} \label{MStar}

In this section, we consider any acyclic directed graph $G$ with one source and $n$ distinct sinks. Let $M^*(G)$ denote the minimum number of mergings over all possible Menger's path sets $\alpha_i$'s, $i=1, 2, \cdots, n$, and let $\mathcal{M}^*(c_1, c_2, \cdots, c_n)$ denote the supremum of $M^*(G)$ over all possible choices of such $G$.

We also have the following ``finiteness'' theorem for $\mathcal{M}^*$:
\begin{thm} \label{main-1}
For any $c_1, c_2, \cdots, c_n$,
$$
\mathcal{M}^*(c_1, c_2, \cdots, c_n) < \infty,
$$
and furthermore, we have
$$
\mathcal{M}^*(c_1, c_2, \cdots, c_{n}) \leq \sum_{i < j} \mathcal{M}^*(c_i, c_j).
$$
\end{thm}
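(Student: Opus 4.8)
The plan is to mirror the proof of Theorem~\ref{main} almost verbatim, after isolating the single structural difference: all of $\alpha_1, \alpha_2, \cdots, \alpha_n$ now share the common source $S$. The observation that makes this harmless is that \emph{no merged subpath contains an edge incident to $S$}: the first edge of a merged subpath is a merging edge, so by condition~2 it must have at least two distinct edges immediately ahead of it, which is impossible for an edge leaving a source (nothing enters $S$); and every edge incident to $S$ points away from it, so by acyclicity none can occur later inside a merged subpath either. Hence merged subpaths, the sequences defining semi-reachability, and all rerouting operations live strictly downstream of $S$, so Propositions~\ref{quasi} and~\ref{crossing} and the pairwise rerouting machinery used in the proof of Theorem~\ref{main} carry over unchanged to the common-source setting.

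First I would prove the two-sink base case, the $\mathcal{M}^*$-analogue of Lemma~\ref{twotwo}:
$$
\mathcal{M}^*(c_1,c_2) \leq c_1 c_2 (c_1+c_2)/2 < \infty.
$$
Given $G$ with source $S$, sinks $R_1, R_2$, and Menger's path sets $\alpha_1, \alpha_2$, build $G'$ by splitting $S$ into two new vertices $S_1, S_2$: replace each edge $S\to v$ lying on an $\alpha_1$-path by $S_1\to v$ and each edge $S\to v$ lying on an $\alpha_2$-path by $S_2\to v$, creating both copies when the edge lies on paths of both sets (edge-disjointness within each $\alpha_i$ shows one copy per set suffices). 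Since $S_1, S_2$ have no incoming edges, $G'$ is acyclic and $S_2$ is unreachable from $S_1$ and conversely, so the min-cut between $S_i$ and $R_i$ in $G'$ is still exactly $c_i$; and since merged subpaths avoid $S$, the set of mergings between $\alpha_1$ and $\alpha_2$ is identical in $G$ and $G'$. Running the argument of Lemma~\ref{twotwo} inside $G'$ shows that once the set of merged-subpath terminal vertices has size $\geq c_1 c_2(c_1+c_2)+1$ there is a merging-reducing rerouting of $\alpha_1$ using $\alpha_2$ or of $\alpha_2$ using $\alpha_1$; such a rerouting swaps only non-source edges, hence descends to a rerouting in $G$ with the same effect on the merging count, and iterating gives $M^*(G)\leq c_1 c_2(c_1+c_2)/2$. (Alternatively one can re-run the $\hat G$ construction of Lemma~\ref{twotwo} directly in $G$, keeping track of the modified in/out-degrees at $S$; the split-source reduction simply automates this bookkeeping.)

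Second, I would prove
$$
\mathcal{M}^*(c_1,c_2,\cdots,c_n) \leq \mathcal{M}^*(c_1,c_2,\cdots,c_{n-1}) + \sum_{i<n}\mathcal{M}^*(c_i,c_n)
$$
by induction on $n$, using the same pigeonhole-and-rerouting argument as in the proof of Theorem~\ref{main}. Fix $\alpha_1,\cdots,\alpha_{n-1}$ with at most $\mathcal{M}^*(c_1,\cdots,c_{n-1})$ mergings among themselves, together with a set $\alpha_n$; assume for contradiction that the number of \emph{new} mergings (those involving $\alpha_n$) is at least $\sum_{i<n}\mathcal{M}^*(c_i,c_n)+1$ and that no rerouting reduces the total number of mergings. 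By the Pigeonhole principle some $\alpha_i$ shares more than $\mathcal{M}^*(c_i,c_n)$ new mergings with $\alpha_n$, so a rerouting of $\alpha_i$ or of $\alpha_n$ using the other is available; a rerouting of $\alpha_n$ using $\alpha_i$ at once lowers the number of mergings between $\alpha_n$ and $\alpha_1,\cdots,\alpha_{n-1}$, whereas if every available rerouting is of some $\alpha_i$ using $\alpha_n$, then after performing all of them each new $\alpha_j$ misses at least one newly merged subpath, again lowering the total --- the desired contradiction. Chaining this recursion down to the base case yields both $\mathcal{M}^*(c_1,\cdots,c_n)<\infty$ and the bound $\sum_{i<j}\mathcal{M}^*(c_i,c_j)$.

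The only step needing genuine care is the base case: one must check that the split-source graph $G'$ has \emph{exactly} the min-cuts $c_1, c_2$ (so that the comparison with Lemma~\ref{twotwo} is legitimate) and that a rerouting exhibited in $G'$ pulls back to $G$ without producing edge collisions among the rerouted paths. Both follow from the fact that the splitting only touches edges incident to $S$, which no merged subpath and no rerouting ever uses; so the map from edges of $G'$ to edges of $G$ is a bijection when restricted to the edges that appear in any $\alpha_i$-path or rerouted path. Everything else is a direct transcription of Section~\ref{M}, so I expect no further obstacle.
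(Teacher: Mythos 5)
Your strategy is essentially the paper's: the paper also reduces to the distinct-source setting by attaching imaginary sources $S_1,\dots,S_n$ with $c_i$ parallel edges into $S$ (your source-splitting is the same trick), observes that the original paths merge no more often than the extended ones, concludes $\mathcal{M}^*(c_1,\dots,c_n)\le\mathcal{M}(c_1,\dots,c_n)<\infty$, and then notes that the recursion of Theorem~\ref{main} carries over verbatim to give the bound $\sum_{i<j}\mathcal{M}^*(c_i,c_j)$. The one point you should correct is your claim that ``the set of mergings between $\alpha_1$ and $\alpha_2$ is identical in $G$ and $G'$.'' It is not: if $\alpha_{1,1}$ and $\alpha_{2,1}$ share an initial segment $S\to v\to w$ in $G$, they do not merge at $v\to w$ there (the edge immediately ahead is the single edge $S\to v$ on both paths, so condition~2 fails), but after splitting they approach $w$ via the distinct edges $S_1\to v$ and $S_2\to v$ and hence do merge at $v\to w$ in $G'$. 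So the splitting can create mergings, and only the one-sided inclusion (every merging in $G$ is a merging in $G'$) holds --- which is exactly what the paper asserts. This matters for your sentence claiming a rerouting in $G'$ ``descends to a rerouting in $G$ with the same effect on the merging count'': a rerouting in $G'$ might only kill one of the artificial mergings at a split point and leave the $G$-count unchanged. The repair is easy and is what the paper implicitly does: run the whole iteration inside $G'$ until its merging count is at most $c_1c_2(c_1+c_2)/2$, and only then pull the resulting path sets back to $G$, where they have no more mergings. With that adjustment your argument is sound and coincides with the paper's.
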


\begin{proof}

As illustrated in Remark~\ref{imaginary}, we extend $G$ to $\hat{G}$ by first adding $n$ imaginary sources $S_1, S_2, \cdots, S_n$, and then adding $c_i$ disjoint edges from $S_i$ to $S$ for each feasible $i$. For any such $G$ and $\hat{G}$, one checks that the original Menger's paths (from $S$ to each $R_i$ for all $i$) merge with each other fewer times than the extended Menger's paths (from $S_i$ to $R_i$ for all $i$), which implies that
$$
\mathcal{M}^*(c_1, c_2, \cdots, c_n) \leq \mathcal{M}(c_1, c_2, \cdots, c_n).
$$
The finiteness result then immediately follows from Theorem~\ref{main}. As for the inequality,
exactly the same argument of Theorem~\ref{main} applies to $\mathcal{M}^*$, thus we have for any $c_1, c_2, \cdots, c_{n+1}$
$$
\mathcal{M}^*(c_1, c_2, \cdots, c_{n}) \leq \mathcal{M}^*(c_1, c_2, \cdots, c_{n-1})+\sum_{j < n} \mathcal{M}^*(c_j, c_{n}),
$$
which implies the inequality.
\end{proof}

\begin{rem}
The same techniques as in the proof above, together with Theorem~\ref{main}, show that appropriately chosen Menger's paths merge with each other only finitely many times, if only some of the sources and/or some of the sinks are identical.
\end{rem}

\begin{rem}
Theorem~\ref{main} and Theorem~\ref{main-1} do not hold for cyclic directed graphs. As shown in Figure~\ref{cyclic}, for an arbitrary $n$, $\alpha_{2, 1}$ merges with $\alpha_{1, 2}$ at $\gamma_1, \gamma_2, \cdots, \gamma_{n-1}, \gamma_{n}$ subsequently from the bottom to the top. One checks that $\alpha_1$ and $\alpha_2$ has $n$ mergings, and there is no way to reroute $\alpha_1$ or $\alpha_2$ to decrease the number of mergings.

\begin{figure}
\psfrag{G}{$G$}
\psfrag{a1}{$\alpha_{2, 1}$} \psfrag{a2}{$\alpha_{2, 2}$}
\psfrag{b1}{$\alpha_{1, 1}$} \psfrag{b2}{$\alpha_{1, 2}$}
\psfrag{g1}{$\gamma_1$} \psfrag{g2}{$\gamma_2$} \psfrag{g3}{$\gamma_3$} \psfrag{gn1}{$\gamma_{n-1}$} \psfrag{gn}{$\gamma_n$}
\psfrag{S}{$S$}\psfrag{R1}{$R_1$} \psfrag{R2}{$R_2$}
\centerline{\includegraphics[width=3in]{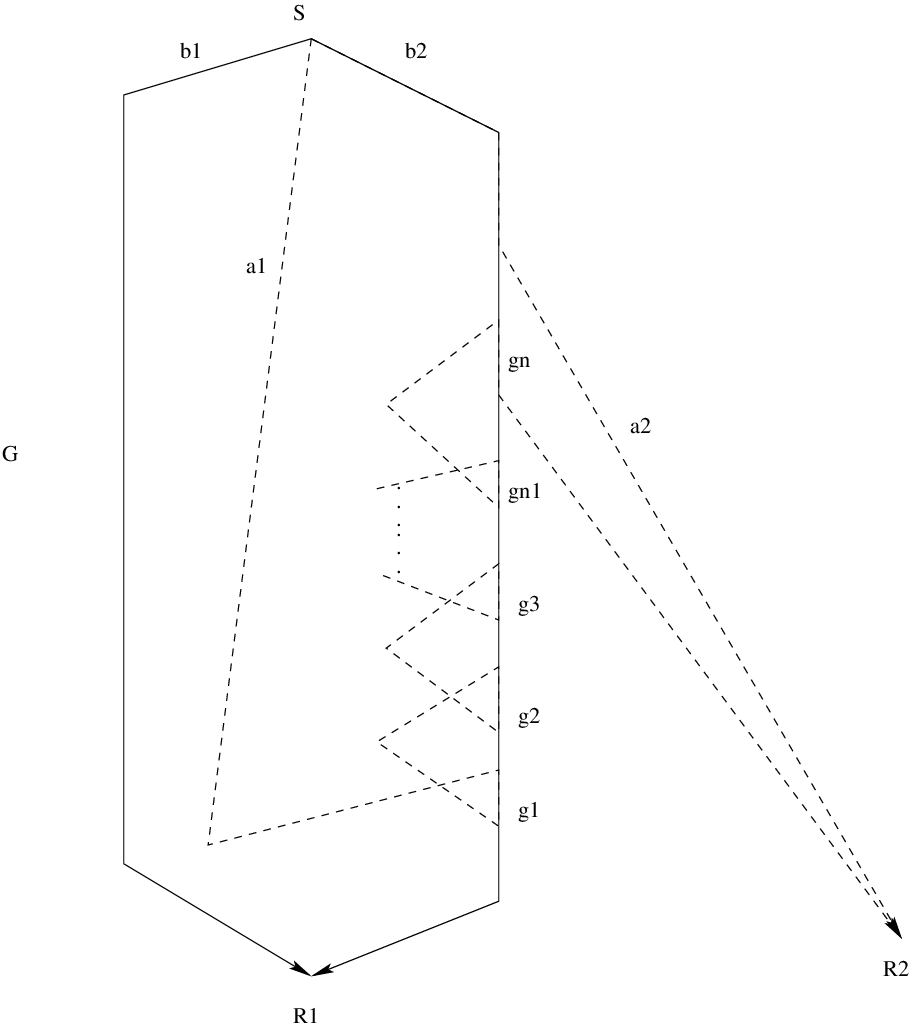}}
\caption{an counterexample}
\label{cyclic}
\end{figure}
\end{rem}

Similar to $\mathcal{M}$, $\mathcal{M}^*$ is a symmetric and ``increasing'' function.

\begin{pr}
$\mathcal{M}^*$ is symmetric on its parameters. More specifically,
$$
\mathcal{M}^*(c_1, c_2, \cdots, c_n)=\mathcal{M}^*(c_{\delta(1)}, c_{\delta(2)}, \cdots, c_{\delta(n)}),
$$
where $\delta$ is any permuation on the set of $\{1, 2, \cdots, n\}$.
\end{pr}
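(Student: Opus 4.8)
The plan is to observe that $\mathcal{M}^*(c_1,\dots,c_n)$ is, by its very definition, a supremum of $M^*(G)$ over a class of graphs that is closed under relabeling the sinks, so permuting the parameters $c_i$ cannot change the value. I would make this precise by exhibiting, for a fixed permutation $\delta$ of $\{1,\dots,n\}$, an explicit correspondence between the graphs relevant to the tuple $(c_1,\dots,c_n)$ and those relevant to $(c_{\delta(1)},\dots,c_{\delta(n)})$.

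First I would take an arbitrary acyclic directed graph $G$ with a single source $S$ and $n$ distinct sinks $R_1,\dots,R_n$ whose min-cut to $S$ is $\mathrm{mincut}(S,R_i)=c_i$, and build $G'$ from it simply by renaming the sink $R_i$ as $R'_{\delta^{-1}(i)}$; equivalently, the vertex serving as $R'_j$ in $G'$ is the one that was called $R_{\delta(j)}$ in $G$. The directed graph itself, its edge set, and the source $S$ are untouched --- only the $n$ sink labels are permuted --- so $\mathrm{mincut}(S,R'_j)=\mathrm{mincut}(S,R_{\delta(j)})=c_{\delta(j)}$, and $G'$ is a legitimate member of the class associated with $(c_{\delta(1)},\dots,c_{\delta(n)})$.

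Next I would verify that $M^*(G')=M^*(G)$, which is the only point needing any argument. It follows at once from the definitions: a set of Menger's paths from $S$ to $R'_j$ in $G'$ is, as a family of edge-sequences, identical to a set of Menger's paths from $S$ to $R_{\delta(j)}$ in $G$, and whether a collection $\beta_1,\dots,\beta_m$ of paths merges at an edge $e$ (conditions $1$ and $2$ in the definition of merging) is a property of these paths as subsets of $E$, making no reference to the index of the sink at which a path ends. Hence the admissible systems $\{\alpha_1,\dots,\alpha_n\}$ for $G'$ and the numbers of mergings they produce coincide exactly with those for $G$, so the two minima agree. Taking the supremum over all $G$ in the first class then gives $\mathcal{M}^*(c_{\delta(1)},\dots,c_{\delta(n)})\ge\mathcal{M}^*(c_1,\dots,c_n)$.

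Finally, running the same construction with $\delta^{-1}$ in place of $\delta$ yields the opposite inequality, and the two combine to the claimed equality. (Proposition~\ref{symmetric} is proved by the identical argument, applied in the multi-source setting.) There is no genuine obstacle in this proof; the only step worth writing out with a little care is the label-independence of the merging relation, which is exactly what guarantees that passing from $G$ to $G'$ preserves $M^*$.
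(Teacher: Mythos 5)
Your proof is correct: the relabeling argument (permuting sink labels leaves the graph, the Menger's path systems, and the merging counts unchanged, hence the supremum is invariant) is exactly the justification the result calls for. The paper states this proposition without proof, evidently regarding it as immediate from the definition, so your write-up simply makes explicit the same observation the author is relying on.
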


\begin{pr}  \label{increasing}
For $m \geq n$, $c_1 \leq c_2 \leq \cdots \leq c_n$, and $d_1 \leq d_2 \cdots \leq d_m$, if
$c_i \leq d_{m-n+i}$ for $i=1, 2, \cdots, n$, then
$$
\mathcal{M}^*(c_1, c_2, \cdots, c_n) \leq \mathcal{M}^*(d_1, d_2, \cdots, d_m).
$$
\end{pr}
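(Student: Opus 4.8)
The plan is to deduce this from two elementary monotonicity facts together with the symmetry of $\mathcal{M}^*$ established just above. First I would isolate: (i) \emph{adding one more sink never helps}, i.e.\ $\mathcal{M}^*(c_1,\dots,c_{n-1})\le \mathcal{M}^*(c_1,\dots,c_{n-1},c_n)$ for every $c_n\ge 1$; and (ii) \emph{enlarging one min-cut never helps}, i.e.\ $\mathcal{M}^*(c_1,\dots,c_i,\dots,c_n)\le \mathcal{M}^*(c_1,\dots,c_i',\dots,c_n)$ whenever $c_i\le c_i'$. Granting (i) and (ii), the proposition follows by a short chain: since $c_i\le d_{m-n+i}$, applying (ii) $n$ times gives $\mathcal{M}^*(c_1,\dots,c_n)\le \mathcal{M}^*(d_{m-n+1},\dots,d_m)$; applying (i) $m-n$ more times gives $\mathcal{M}^*(d_{m-n+1},\dots,d_m)\le \mathcal{M}^*(d_{m-n+1},\dots,d_m,d_{m-n},\dots,d_1)$; and the symmetry of $\mathcal{M}^*$ identifies the right-hand side with $\mathcal{M}^*(d_1,\dots,d_m)$.

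To prove (ii) (of which (i) is the degenerate case where the extra sink is brand new), I would use a padding construction in the spirit of Remark~\ref{imaginary}. Start from any acyclic digraph $G$ with single source $S$ and distinct sinks $R_1,\dots,R_n$ realizing the min-cuts $c_1,\dots,c_n$, and form $G'$ by adjoining $c_i'-c_i$ pairwise disjoint length-two ``shortcut'' paths $S\to v\to R_i$ through fresh interior vertices $v$. By Menger's theorem the $S$--$R_j$ min-cut is unchanged for $j\ne i$ and becomes exactly $c_i'$ for $j=i$. The key point is that in $G'$ every set of Menger's paths from $S$ to $R_i$ is forced to use exactly $c_i'-c_i$ of the shortcuts: if it used fewer, then at least $c_i+1$ of its $c_i'$ pairwise edge-disjoint paths would lie entirely in the $G$-part, contradicting that the $S$--$R_i$ min-cut there is $c_i$; the remaining $c_i$ paths then form a set of Menger's paths for $R_i$ inside $G$. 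Since each shortcut $S\to v\to R_i$ has a fresh interior vertex and edges used by no other path, it takes part in no merging. Hence for every choice of Menger's path sets in $G'$ the number of mergings equals that of the induced Menger's path sets in $G$, so $M^*(G')\ge M^*(G)$; taking the supremum over all such $G$ yields (ii). For (i) one does the same with $R_n$ a brand-new vertex carrying all $c_n$ shortcuts, and the induced path sets for $R_1,\dots,R_{n-1}$ live in the untouched $G$.

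The only delicate step is the forcing argument inside (ii): that a maximum family of edge-disjoint $S$--$R_i$ paths in $G'$ cannot avoid the newly added shortcuts — this is exactly where the hypothesis that the $G$-part has $S$--$R_i$ min-cut $c_i$ is consumed — together with the bookkeeping that the shortcuts are merging-free, so that the correspondence between Menger's path sets of $G$ and of $G'$ preserves the merging count on the nose rather than merely up to a one-sided inequality. Everything else, including the combinatorial chaining in the first paragraph, is routine and only invokes the already-established symmetry of $\mathcal{M}^*$.
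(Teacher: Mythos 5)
Your proposal is correct, and in fact the paper states this proposition (and its analogue for $\mathcal{M}$ in Section~2) without any proof at all, evidently regarding it as routine; so your argument fills in details the paper omits rather than diverging from a printed proof. The decomposition into the two monotonicity facts plus symmetry is the natural one, and the chain $\mathcal{M}^*(c_1,\dots,c_n)\le\mathcal{M}^*(d_{m-n+1},\dots,d_m)\le\mathcal{M}^*(d_{m-n+1},\dots,d_m,d_{m-n},\dots,d_1)=\mathcal{M}^*(d_1,\dots,d_m)$ is sound. The padding construction also works as you describe: since each shortcut $S\to v\to R_i$ passes through a fresh interior vertex, any $S$--$R_i$ path either is such a shortcut or lies entirely in $G$, so your counting argument correctly forces every maximum edge-disjoint family to use all $c_i'-c_i$ shortcuts, and the shortcut edges belong to exactly one path each, so they can never satisfy condition~1 of the definition of a merging. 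This gives a merging-count-preserving bijection between Menger's path set choices in $G$ and in $G'$, hence $M^*(G')=M^*(G)$ exactly, which is what you need before taking suprema. The only implicit hypothesis worth flagging is that the sinks $R_j$ have out-degree zero (the standard convention for a DAG with designated sinks), so that the new edges into $R_i$ cannot create new $S$--$R_j$ paths for $j\ne i$ and the other min-cuts are genuinely unchanged; with that observation your proof is complete and, as a bonus, the same construction proves the corresponding monotonicity statement for $\mathcal{M}$ by hanging the shortcuts off the individual sources $S_i$ instead of the common source $S$.
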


\begin{pr} \label{SharedSubpath}
For $c_1 \leq c_2 \leq \cdots \leq c_n$, if $c_1+c_2+\cdots+c_{n-1} \leq c_n$, then
$$
\mathcal{M^*}(c_1, c_2, \cdots, c_n)=\mathcal{M^*}(c_1, c_2, \cdots, c_{n-1}, c_1+c_2+\cdots+c_{n-1}).
$$
\end{pr}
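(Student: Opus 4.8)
Write $d=c_1+\cdots+c_{n-1}$, so the hypothesis is $d\le c_n$. I would prove the two inequalities separately; the ``$\ge$'' direction is immediate and all the substance is in ``$\le$''.

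For ``$\ge$'': since $c_1\le\cdots\le c_{n-1}$ and $d=c_1+\cdots+c_{n-1}\ge c_{n-1}$, the tuple $(c_1,\dots,c_{n-1},d)$ is already nondecreasing, and it is dominated termwise by the nondecreasing tuple $(c_1,\dots,c_{n-1},c_n)$ because $d\le c_n$. Proposition~\ref{increasing} then gives $\mathcal{M}^*(c_1,\dots,c_{n-1},d)\le\mathcal{M}^*(c_1,\dots,c_{n-1},c_n)=\mathcal{M}^*(c_1,\dots,c_n)$.

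For ``$\le$'': it suffices to show that every acyclic digraph $G$ with one source $S$, distinct sinks $R_1,\dots,R_n$, and min-cuts $c_1,\dots,c_n$ (with $c_n\ge d$) satisfies $M^*(G)\le\mathcal{M}^*(c_1,\dots,c_{n-1},d)$. Fix Menger path sets $\alpha_1,\dots,\alpha_n$ realizing $M^*(G)$ mergings. The crucial claim is that these merger-minimal path sets can be chosen so that \emph{at most $d$ of the $c_n$ paths of $\alpha_n$ take part in any merging}. Granting this, I would let $G'$ be the subgraph of $G$ spanned by the edges lying on some $\alpha_i$-path ($i<n$) or on one of the (at most $d$) merging $\alpha_n$-paths, discarding the remaining $\alpha_n$-paths; then $R_n$ has in-degree $\le d$ in $G'$ while the retained $\alpha_n$-paths remain edge-disjoint $S$--$R_n$ paths, so the min-cut from $S$ to $R_n$ in $G'$ can be arranged to be exactly $d$ (adjoining, if necessary, further merger-free dummy $S$--$R_n$ paths through fresh edges), and the min-cuts to $R_1,\dots,R_{n-1}$ are still $c_1,\dots,c_{n-1}$ since all $\alpha_i$-paths survive and $G'\subseteq G$. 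Thus $G'$ realizes the parameters $(c_1,\dots,c_{n-1},d)$, and the restricted configuration still has exactly $M^*(G)$ mergings, so $M^*(G')\le M^*(G)$. For the reverse, any Menger configuration in $G'$ lifts to $G$ and — running the rerouting construction of the Claim in reverse — can be completed to a Menger configuration of $G$ by adjoining $c_n-d$ edge-disjoint $S$--$R_n$ paths that merge with nothing, whence $M^*(G)\le M^*(G')$. Therefore $M^*(G)=M^*(G')\le\mathcal{M}^*(c_1,\dots,c_{n-1},d)$, and the supremum over $G$ finishes the proof.

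The entire difficulty is the Claim, and the key structural input is that, since $\alpha_1,\dots,\alpha_{n-1}$ all emanate from the common source $S$, the union $\bigcup_{i<n}\alpha_i$ uses at most $d$ edges incident to $S$. Using the imaginary-source device of Remark~\ref{imaginary}, I would argue by contradiction: if strictly more than $d$ of the $\alpha_n$-paths merge with $\bigcup_{i<n}\alpha_i$, then tracing the mergings backward toward $S$ along the $\alpha_i$-paths — exactly in the style of the ``semi-reachable'' analysis in Lemma~\ref{twotwo} and Proposition~\ref{ILikeIt} — forces, by the pigeonhole principle applied to these $\le d$ source-incident edges, a merged subpath that is semi-reachable through some $\alpha_i$ (or through $\alpha_n$) by itself from above. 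By Proposition~\ref{crossing} such a configuration admits a rerouting that strictly decreases the number of mergings, contradicting merger-minimality; iterating yields a merger-minimal configuration in which at most $d$ of the $\alpha_n$-paths ever merge. The step I expect to be the main obstacle is making this pigeonhole/semi-reachability bookkeeping precise for the shared source — in particular checking that the loop of merged subpaths that is produced is genuinely oriented ``from above'', so that Proposition~\ref{crossing} applies and the rerouting truly lowers the merging count rather than merely permuting mergings. A secondary delicate point is the ``completion without new mergings'' statement used in the reverse direction: given $d$ edge-disjoint $S$--$R_n$ paths and min-cut $c_n\ge d$, the residual graph still carries $c_n-d$ edge-disjoint $S$--$R_n$ paths, and these must be rerouted to avoid all edges of $\bigcup_{i<n}\alpha_i$ except along initial segments shared from $S$ (where no merging can occur); this should again reduce to the same rerouting machinery but needs to be verified carefully.
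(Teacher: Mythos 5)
Your overall skeleton is right --- the ``$\ge$'' direction via Proposition~\ref{increasing} is exactly the paper's, and your crucial Claim (in a merging-minimal configuration at most $d=c_1+\cdots+c_{n-1}$ of the $c_n$ paths of $\alpha_n$ participate in any merging) is precisely what the paper establishes. But the mechanism you propose for proving the Claim is the wrong one, and this is a genuine gap, not just a bookkeeping issue. You want to derive, from the existence of more than $d$ merging $\alpha_n$-paths, a merged subpath semi-reachable through some $\alpha_i$ by itself \emph{from above}, and then invoke Proposition~\ref{crossing}. No such self-semi-reachable subpath need exist. Take one source $S$, two sinks with $c_1=1$, $c_2=2$ (so $d=1$), and let both $\alpha_2$-paths merge with the single $\alpha_1$-path at two consecutive merged subpaths $\gamma_1,\gamma_2$ on $\alpha_{1,1}$. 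More than $d$ of the $\alpha_2$-paths merge, yet there is no cycle of the kind Proposition~\ref{crossing} needs ($\gamma_1$ and $\gamma_2$ lie on \emph{different} $\alpha_2$-paths, so neither is semi-reachable by itself from above), and indeed the configuration is irreducible by that machinery --- while $\mathcal{M}^*(1,2)=0$ by Proposition~\ref{WithOne}, so it must be reducible by something else. That something else is a rerouting that exploits the shared source and is not an instance of Proposition~\ref{crossing}: if an $\alpha_n$-path $\beta$ does \emph{not} share an initial subpath from $S$ with any other path, and $\gamma$ is its first merged subpath, then $\beta[S,b(\gamma)]$ is merging-free, and replacing $\eta[S,b(\gamma)]$ by $\beta[S,b(\gamma)]$ for \emph{every} path $\eta$ merging with $\beta$ at $\gamma$ destroys the merging at $\gamma$ (all participants now enter $\gamma$ along the same preceding edge, so condition~2 of the merging definition fails) without creating new ones. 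Iterating, every such $\beta$ can be made merging-free; and since the $\alpha_n$-paths have pairwise distinct first edges at $S$ while $\bigcup_{i<n}\alpha_i$ occupies at most $d$ edges at $S$, at most $d$ of the $\alpha_n$-paths can share an initial subpath with the others. This is where your pigeonhole on source-incident edges correctly enters --- but it feeds the initial-segment swap, not Proposition~\ref{crossing}.

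Once the Claim is in hand, your construction of $G'$ and the two-sided comparison $M^*(G)=M^*(G')$ is more elaborate than necessary: you only need the one inequality $M^*(G)\le\mathcal{M}^*(c_1,\dots,c_{n-1},d)$, which follows because the at least $c_n-d$ merging-free $\alpha_n$-paths contribute nothing and the remaining configuration has $R_n$-connectivity at most $d$ (monotonicity of $\mathcal{M}^*$ absorbs the case where it is strictly less than $d$). Your ``reverse direction'' and the adjunction of dummy paths can be dropped entirely.
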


\begin{proof}
Given any acyclic directed graph $G$ with one source $S$ and $n$ sinks $R_1, R_2, \cdots, R_n$, where the min-cut between $S$ and $R_i$ is $c_i$, pick a set of Menger's paths $\alpha_i=\{\alpha_{i, 1}, \alpha_{i, 2}, \cdots, \alpha_{i, c_i}\}$ from $S$ to $R_i$ for all feasible $i$. If any path from $\alpha_n$, say $\beta$, does not share subpath starting from $S$ with any other paths and first merges with some path $\eta$ at merged subpath $\gamma$, then one can reroute all such $\eta$ (merging with $\beta$ at $\gamma$) by replacing $\eta[S, b(\gamma)]$ by $\beta[S, b(\gamma)]$ to reduce the merging number. Note that such possible reroutings can be done to all the paths from $\alpha_n$. As a result of such possible reroutings, at least $c_n-(c_1+c_2+\cdots+c_{n-1})$ paths from $\alpha_n$ will not merge with any paths from $\alpha_1, \alpha_2, \cdots, \alpha_{n-1}$, which implies
$$
\mathcal{M^*}(c_1, c_2, \cdots, c_n) \leq \mathcal{M^*}(c_1, c_2, \cdots, c_{n-1}, c_1+c_2+\cdots+c_{n-1}).
$$
The other direction is obvious from Proposition~\ref{increasing}. The proposition then immediately follows.
\end{proof}

\begin{pr} \label{WithOne}
For $c_1=1 \leq c_2 \leq \cdots \leq c_n$, we have
$$
\mathcal{M^*}(c_1, c_2, \cdots, c_n)=\mathcal{M^*}(c_2, \cdots, c_{n-1}, c_n).
$$
\end{pr}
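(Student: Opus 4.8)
The plan is to prove the two inequalities separately. The direction $\mathcal{M}^*(c_2,\ldots,c_n)\le\mathcal{M}^*(c_1,c_2,\ldots,c_n)$ is immediate from the monotonicity of $\mathcal{M}^*$: under the hypothesis $1=c_1\le c_2\le\cdots\le c_n$ the tuple $(c_2,\ldots,c_n)$ is dominated entrywise (after sorting) by the tuple $(1,c_2,\ldots,c_n)$, so Proposition~\ref{increasing} applies. Hence the substance of the statement is the reverse inequality $\mathcal{M}^*(1,c_2,\ldots,c_n)\le\mathcal{M}^*(c_2,\ldots,c_n)$, and for this it suffices to show that for every acyclic directed graph $G$ with one source $S$ and $n$ distinct sinks $R_1,\ldots,R_n$, where the min-cut between $S$ and $R_1$ equals $1$, there is a choice of Menger's path sets $\alpha_1,\ldots,\alpha_n$ whose total number of mergings is at most $\mathcal{M}^*(c_2,\ldots,c_n)$; taking the supremum over $G$ then finishes the proof.

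First I would discard the sink $R_1$ and view $G$ as a graph with source $S$ and the $n-1$ sinks $R_2,\ldots,R_n$; by the definition of $\mathcal{M}^*$ there is a choice of Menger's path sets $\alpha_2,\ldots,\alpha_n$ in $G$ with at most $\mathcal{M}^*(c_2,\ldots,c_n)$ mergings among themselves (the min-cut between $S$ and $R_i$ is still $c_i$, since it is a property of $G$ alone). Fix such a choice and keep it. Since $c_1=1$, a set of Menger's paths from $S$ to $R_1$ consists of a single path $\alpha_{1,1}$, and I will choose it so that adjoining it creates no merging that was not already present among $\alpha_2,\ldots,\alpha_n$. Call an edge \emph{private} if it lies on no $\alpha_i$-path with $i\ge 2$. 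Pick any directed path $Q$ from $S$ to $R_1$ (one exists, as $c_1\ge 1$), let $v$ be the last vertex of $Q$ lying on some $\alpha_i$-path with $i\ge 2$ (the source $S$ lies on all of them, so $v$ exists), and let $P$ be some $\alpha_i$-path through $v$. Every vertex of $Q$ strictly after $v$ lies on no $\alpha_i$-path with $i\ge 2$, so every edge of $Q[v,R_1]$ is private. I would then take $\alpha_{1,1}=P[S,v]\circ Q[v,R_1]$, which by acyclicity of $G$ is a simple directed path from $S$ to $R_1$, hence a legitimate singleton Menger's path set from $S$ to $R_1$.

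It then remains to verify that the family $\alpha_1=\{\alpha_{1,1}\},\alpha_2,\ldots,\alpha_n$ has no merging at an edge where $\alpha_2,\ldots,\alpha_n$ did not already merge. At an edge of $Q[v,R_1]$ this is clear, because such an edge is private and hence lies on no path other than $\alpha_{1,1}$. At an edge $e$ of $P[S,v]$, the path $\alpha_{1,1}$ coincides with $P$ on this stretch, so the edge immediately ahead of $e$ along $\alpha_{1,1}$ (if $e$ is not the first edge of $\alpha_{1,1}$) is exactly the one immediately ahead of $e$ along $P$; consequently adjoining $\alpha_{1,1}$ does not enlarge the set of edges occurring immediately ahead of $e$ among the paths through $e$, and by the definition of ``merging'' the presence or absence of a merging at $e$ is unchanged. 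Therefore the number of mergings of $\alpha_1,\ldots,\alpha_n$ equals the number of mergings of $\alpha_2,\ldots,\alpha_n$, which is at most $\mathcal{M}^*(c_2,\ldots,c_n)$, so $M^*(G)\le\mathcal{M}^*(c_2,\ldots,c_n)$, and the proof concludes by taking the supremum over $G$.

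The step I expect to require the most care is precisely this last piece of bookkeeping with the definition of merging: one must treat the boundary case in which $e$ is the first edge of $\alpha_{1,1}$ (so $\alpha_{1,1}$ contributes no predecessor edge at $e$ at all) and the case in which $e$ already carries a merging among $\alpha_2,\ldots,\alpha_n$, and in each case confirm both that no new merged edge is created and that $\alpha_{1,1}$ does not ``merge with itself.'' Everything else — the existence of $Q$, the choice and well-definedness of $v$, the privateness of $Q[v,R_1]$, and the fact that $P[S,v]\circ Q[v,R_1]$ is a simple path — is routine given that $G$ is acyclic with the single source $S$.
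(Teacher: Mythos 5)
Your proof is correct and takes essentially the same approach as the paper: fix an optimal choice of $\alpha_2,\ldots,\alpha_n$ and then make the single path $\alpha_{1,1}$ ride along an existing Menger path from $S$ until it can exit to $R_1$ on edges used by no other path, so that it contributes no new predecessor edge anywhere and hence no new merging. The only cosmetic difference is that the paper starts from an arbitrary $\alpha_{1,1}$ and reroutes its initial segment at its last merged subpath, whereas you build $\alpha_{1,1}$ directly from the last contact vertex of an arbitrary $S$--$R_1$ path; both hinge on the same observation, which you verify more carefully than the paper does.
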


\begin{proof}
Given any acyclic directed graph $G$ with one source $S$ and $n$ sinks $R_1, R_2, \cdots, R_n$, where the min-cut between $S$ and $R_i$ is $c_i$, choose Menger's paths $\alpha_2, \alpha_3, \cdots, \alpha_n$ such that the number of mergings among them is less than $\mathcal{M^*}(c_2, \cdots, c_{n-1}, c_n)$. If $\alpha_{1, 1}$ does not merge with any paths from $\alpha_2, \alpha_3, \cdots, \alpha_n$, then the number of mergings in $G$ among all $\alpha_i$'s is less than $\mathcal{M^*}(c_2, \cdots, c_{n-1}, c_n)$; if $\alpha_{1,1}$ does merge with other paths and it last merges with, say $\alpha_{i, j}$, at $\gamma$, then we can reroute $\alpha_{1, 1}$ by replacing $\alpha_{1, 1}[S, a(\gamma)]$ by $\alpha_{i, j}[S, a(\gamma)]$. With rerouted $\alpha_{1, 1}$, the number of mergings in $G$ among all $\alpha_i$'s is still less than $\mathcal{M^*}(c_2, \cdots, c_{n-1}, c_n)$, which implies
$$
\mathcal{M^*}(c_1, c_2, \cdots, c_n) \leq \mathcal{M^*}(c_2, \cdots, c_{n-1}, c_n).
$$
The other direction is obvious from Proposition~\ref{increasing}. The Proposition then immediately follows.
\end{proof}

\begin{rem}
Now we can see that in terms of the dependence on the parameters, the behaviors of $\mathcal{M}$ and $\mathcal{M}^*$ can be very different.
For instance,
\begin{itemize}
\item from Example~\ref{MWithOne}, we have $\mathcal{M}(1, 2)=2 > 1=\mathcal{M}(1, 1)$, which implies $\mathcal{M}$ does not satisfy the equality in Proposition~\ref{SharedSubpath};
\item through Proposition~\ref{SharedSubpath}, we see that
$$
\mathcal{M}^*(2c, c)=\mathcal{M}^*(c, c) \leq \mathcal{M}^*(c, c) + \mathcal{M}^*(c, c),
$$
and strict inequality in the above expression holds as long as $\mathcal{M}^*(c, c) > 0$, thus $\mathcal{M}^*$ does not satisfy the inequality in Proposition~\ref{Isolated}; namely, not like $\mathcal{M}$, $\mathcal{M}^*$ is not sup-linear in its parameters;
\item Proposition~\ref{WithOne} implies that $\mathcal{M}^*(1, n)=0$, while from Example~\ref{MWithOne}, we have $\mathcal{M}(1, n)=n$, which implies $\mathcal{M}$ does not satisfy the equality in Proposition~\ref{WithOne}.
\end{itemize}
\end{rem}

The following proposition reveals a relationship between $\mathcal{M}$ and $\mathcal{M}^*$.

\begin{pr} \label{MvsMStar} For any $n$, we have
$$
\mathcal{M}^*(n+1, n+1) \leq \mathcal{M}(n, n)-n+1.
$$
\end{pr}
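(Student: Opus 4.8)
The plan is to reduce the single-source estimate to the two-source estimate by splitting the common source, at a cost of one unit of connectivity to each sink. Fix an acyclic directed graph $G$ with source $S$, sinks $R_1,R_2$, and min-cut $n+1$ between $S$ and each $R_i$; since $\mathcal{M}^*(n+1,n+1)$ is the supremum of $M^*(G)$ over such $G$, it suffices to exhibit one Menger's path pair in $G$ with at most $\mathcal{M}(n,n)$ mergings. The heart of the argument is to construct an auxiliary acyclic directed graph $H$ with two distinct sources $S_1,S_2$, the two sinks $R_1,R_2$, and min-cut $n$ between $S_1,R_1$ and between $S_2,R_2$, built so that every Menger's path pair of $H$ can be pulled back to one of $G$ with exactly the same number of mergings; granting this, $M^*(G)\le M(H)\le\mathcal{M}(n,n)$, and taking the supremum over $G$ proves the proposition.

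To build $H$, I would first choose Menger's path sets $\alpha_1=\{\alpha_{1,1},\dots,\alpha_{1,n},P\}$ for $R_1$ and $\alpha_2=\{\alpha_{2,1},\dots,\alpha_{2,n},Q\}$ for $R_2$ together with distinguished ``private'' paths $P,Q$ such that $P$ shares no edge with any $\alpha_{2,j}$, $Q$ shares no edge with any $\alpha_{1,i}$, and $E(P)\cap E(Q)$ is at most a common initial segment of $P$ and $Q$ out of $S$. Then let $H$ be obtained from the subgraph of $G$ spanned by $E(\alpha_1)\cup E(\alpha_2)$ by deleting $E(P)\cup E(Q)$, deleting $S$, adding new sources $S_1,S_2$, joining $S_1$ to the head of the first edge of each $\alpha_{1,i}$ with $i\le n$, joining $S_2$ to the head of the first edge of each $\alpha_{2,j}$ with $j\le n$, and, where two paths of $\alpha_1$ and $\alpha_2$ leave $S$ along a common initial segment, duplicating that segment so that $\alpha_1$ uses the $S_1$-copy and $\alpha_2$ the $S_2$-copy (legitimate since such a segment lies on only those two paths). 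Privacy of $P,Q$ guarantees that $\alpha_{1,1},\dots,\alpha_{1,n}$ and $\alpha_{2,1},\dots,\alpha_{2,n}$ survive the deletion of $E(P)\cup E(Q)$, so $H$ contains $n$ edge-disjoint $S_1$-$R_1$ paths, and since $S_1$ has only $n$ out-edges the min-cut between $S_1$ and $R_1$ is exactly $n$; symmetrically for $S_2,R_2$. Thus $H$ belongs to the class over which $\mathcal{M}(n,n)$ is a supremum, so $M(H)\le\mathcal{M}(n,n)$.

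For the pull-back, take a Menger's path pair $(\gamma_1,\gamma_2)$ of $H$ realizing $M(H)$, and replace each initial $S_1$- or $S_2$-edge (and any duplicated segment) by the original piece through $S$; this turns $\gamma_1,\gamma_2$ into edge-disjoint families $\{P_1,\dots,P_n\}$ and $\{Q_1,\dots,Q_n\}$ of $S$-$R_1$ and $S$-$R_2$ paths in $G$. Adjoining $P$ and $Q$ gives Menger's path sets of $G$: within each set the paths are edge-disjoint, because the pulled-back paths avoid $E(P)\cup E(Q)$ and the first edges of the $\alpha_{1,i}$ and $\alpha_{2,j}$ lie outside $P$ and $Q$. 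Moreover $P$ and $Q$ take part in no merging at all---each is edge-disjoint from every other path in its union except possibly along the common $P$-$Q$ initial segment, which cannot be a merging site---so the mergings of this pair in $G$ are exactly those of $(\gamma_1,\gamma_2)$ in $H$. Hence $M^*(G)\le M(H)\le\mathcal{M}(n,n)$, as required.

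The step I expect to be the main obstacle is the very first one: showing that Menger's path sets and private paths $P,Q$ as above always exist---this is exactly where the single common source is essential. I would attack it with a max-flow argument, routing a maximum flow from $S$ into a super-sink fed by $R_1$ and $R_2$ through capacity-$(n+1)$ edges and reading off from its path decomposition an edge-disjoint $S$-$R_1$ family and an $S$-$R_2$ family with at least one ``spare'' path, then extending each side to a full Menger's path set while keeping the spare path off the other side. The degenerate case in which every $S$-$R_1$ path, or every $S$-$R_2$ path, meets the other family has to be treated on its own; there the source region of $G$ essentially decouples and the bound follows more directly. Confirming that this case split is exhaustive, and that the ``extend while keeping a path disjoint'' step can always be carried out, is the technical core of the proof.
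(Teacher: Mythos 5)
Your overall reduction --- find one path on each side that takes part in no merging, strip those two paths off, and bound what remains by $\mathcal{M}(n,n)$ --- is the same reduction the paper performs. But the existence statement you rest it on is too strong, and in fact false. You require a path $P\in\alpha_1$ that is edge-disjoint from every $\alpha_{2,j}$ with $j\le n$ and meets $Q$ only in a common initial segment out of $S$. The butterfly network already defeats this for $n=1$: there each sink has a unique Menger's path pair, every $S$--$R_1$ path shares either its first edge or the bottleneck edge with some $S$--$R_2$ path, and no choice of $P,Q$ satisfies your conditions. This is not an exotic corner case you can quarantine as ``degenerate'': in the butterfly the source region does not decouple at all (the unavoidable merging sits at $W\to X$, deep inside the graph), yet the bound $\mathcal{M}^*(2,2)\le\mathcal{M}(1,1)=1$ still holds. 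What is actually true --- and is the entire content of the proposition --- is the weaker statement that the path sets can be \emph{rerouted} so that one path of $\alpha_1$ and one path of $\alpha_2$ participate in no merging; such a path may still share an initial segment out of $S$ with a \emph{non-private} path of the other family (as in the butterfly), which your deletion of $E(P)\cup E(Q)$ would destroy. Your duplication trick covers shared initial segments between surviving paths but not between $P$ and a surviving $\alpha_{2,j}$, so the auxiliary graph $H$ need not contain $n$ edge-disjoint $S_2$--$R_2$ paths.

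The missing idea is the rerouting argument, which a max-flow computation does not supply. The paper first arranges (as in Proposition~\ref{SharedSubpath}) that $\alpha_{1,i}$ and $\alpha_{2,i}$ share initial segments from $S$, then considers the map $\delta$ sending $i$ to the index of the $\alpha_2$-path that $\alpha_{1,i}$ \emph{first} merges with; a cycle $\delta^m(1)=1$ of this map yields a merging-reducing rerouting of $\alpha_2$, and once no such rerouting is possible one path on each side is merging-free. Since merging-freeness (in an acyclic graph with a single source) already forces the path to meet the other family only along one common initial segment, all remaining mergings are carried by $n$ paths per side and Remark~\ref{imaginary} gives the bound. If you replace your edge-disjointness requirement by ``participates in no merging,'' prove existence by this rerouting, and let $P$ pair with whichever $\alpha_2$-path it shares its first edge with (duplicating that segment too), your construction of $H$ goes through; as written, the existence step fails and the flow argument cannot repair it.
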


\begin{proof}
Consider the case when $G$ has one source $S$ and two sinks $R_1, R_2$, and the min-cut between the source $S$ and every sink is equal to $n+1$. For each sink $R_i$, pick a set of Menger's paths $\alpha_i=\{\alpha_{i, 1}, \alpha_{i, 2}, \cdots, \alpha_{i, n+1}\}$. By Proposition~\ref{increasing}, we can assume every $\alpha_1$-path merges with certain $\alpha_2$-path and vice versa. As shown in the proof of Proposition~\ref{SharedSubpath}, we can further assume $\alpha_{1, i}$ shares subpath starting from $S$ with $\alpha_{2, i}$, $i=1, 2, \cdots, n+1$, after possible reroutings. Now, if every $\alpha_1$-path merges with some $\alpha_2$-path, for instance, $\alpha_{1, i}$ first merges with $\alpha_{2, \delta(i)}$ at merged subpath $\gamma_i$, here $\delta$ denotes certain mapping from $\{1, 2, \cdots, n+1\}$ to $\{1, 2, \cdots, n+1\}$. One then checks that there exist $i$ ($i \leq n+1$) and $m$ ($m \leq n+1$) such that $\delta^m(i)=i$, and one can further choose $m$ to be the smallest such ``period''. However, in this case certain reroutings of $\alpha_2$ can be done by replacing $\alpha_{2, \delta^j(i)}[S, b(\gamma_{\delta^{j-1}(i)})]$ by $\alpha_{1, \delta^{j-1}(i)}[S, b(\gamma_{\delta^{j-1}(i)})]$, $j=1, \cdots, m$ (here $\delta^0(i)\stackrel{\triangle}{=}i$), to reduce the merging number. So, without loss of generality, we can assume, after further possible reroutings, $\alpha_{1, n+1}$ does not merge with any other paths, and $\alpha_{2, 1}$ doesn't merge with any other paths either by similar argument; in other words, all mergings are by paths $\alpha_{1, 1}, \alpha_{1, 2}, \cdots, \alpha_{1, n}$ and paths $\alpha_{2, 2}, \alpha_{2, 3}, \cdots, \alpha_{2, n+1}$. With the fact that for $j=2, 3, \cdots, n$, $\alpha_{1, j}$ shares a subpath (which will not be counted when computing $\mathcal{M}^*(G)$) with $\alpha_{2, j}$, we establish the theorem.
\end{proof}

\begin{pr} \label{WithTwo}
For any $n$, we have
$$
\mathcal{M^*}(\underbrace{2, 2, \cdots, 2}_n)=\mathcal{M^*}(\underbrace{2, \cdots, 2, 2}_{n-1})+1.
$$
\end{pr}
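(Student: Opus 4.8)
The plan is to establish the two bounds $\mathcal{M}^*(\underbrace{2,\dots,2}_{n})\le\mathcal{M}^*(\underbrace{2,\dots,2}_{n-1})+1$ and $\mathcal{M}^*(\underbrace{2,\dots,2}_{n})\ge\mathcal{M}^*(\underbrace{2,\dots,2}_{n-1})+1$ separately; write $\mu_k:=\mathcal{M}^*(\underbrace{2,\dots,2}_{k})$, which is finite by Theorem~\ref{main-1}. A single set of Menger's paths is pairwise edge-disjoint and hence has no merging, so $\mu_1=0$; thus the recursion $\mu_n=\mu_{n-1}+1$ also yields the closed form $\mu_n=n-1$. I treat $n\ge 2$ throughout.

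For the lower bound I would build an explicit graph $G_n$ attaining $M^*(G_n)=\mu_{n-1}+1$. Start from a graph $G_{n-1}$ with source $S$ and sinks $R_1,\dots,R_{n-1}$ of min-cut $2$ with $M^*(G_{n-1})=\mu_{n-1}$, and let $z_1,z_2$ be the penultimate vertices of a fixed Menger set $\{X_1,X_2\}$ for $R_1$ (so $z_1\to R_1$, $z_2\to R_1$ are edges of $G_{n-1}$). Form $G_n$ by deleting $z_1\to R_1$ and $z_2\to R_1$, adding a new sink $R_n$ and fresh vertices $u,w,v,v'$, and adding edges $z_1\to u$, $u\to R_1$, $u\to v$, $z_2\to w$, $w\to v$, $w\to R_n$, $v\to v'$, $v'\to R_1$, $v'\to R_n$. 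One checks: the min-cut from $S$ to each $R_i$ ($i\le n$) is still $2$; in \emph{any} Menger choice for $G_n$ the two $R_1$-paths are forced to run $\cdots z_1\to u\to R_1$ and $\cdots z_2\to w\to v\to v'\to R_1$, and the two $R_n$-paths to run $\cdots z_2\to w\to R_n$ and $\cdots z_1\to u\to v\to v'\to R_n$, so there is always a merging at $v\to v'$, and that edge is never a merging of $\alpha_1,\dots,\alpha_{n-1}$ alone; and the restriction to $G_{n-1}$ of any Menger choice for $R_1,\dots,R_{n-1}$ in $G_n$ is a Menger choice in $G_{n-1}$ with the same mergings, hence carries at least $\mu_{n-1}$ of them. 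This gives $M^*(G_n)\ge\mu_{n-1}+1$; choosing $\alpha_1,\dots,\alpha_{n-1}$ optimally on $G_{n-1}$ and routing the $\alpha_n$-paths along the forced shape with their initial segments shared with the appropriate $\alpha_1$-paths (to avoid spurious mergings at $z_1\to u$ and $z_2\to w$) attains equality. This is the ``sequential isolation'' mechanism used in the constructions of Propositions~\ref{Isolated} and~\ref{BraveHeart}.

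For the upper bound, given an arbitrary acyclic $G$ with source $S$ and sinks $R_1,\dots,R_n$ of min-cut $2$, fix Menger's path sets $\alpha_1,\dots,\alpha_{n-1}$ with at most $\mu_{n-1}$ mutual mergings and set $\mathcal{A}=\bigcup_{i<n}\alpha_i$. It then suffices to prove the sub-lemma: for any edge-disjoint family $\mathcal{A}$ emanating from $S$ there is a pair $\alpha_n=\{p,q\}$ of edge-disjoint $S\to R_n$ paths creating at most one new merging with $\mathcal{A}$ (being edge-disjoint, $p$ and $q$ never merge with each other, so every new merging is one of $\{p,q\}$ with $\mathcal{A}$); indeed such $\alpha_n$ gives at most $\mu_{n-1}+1$ mergings in $G$, and taking the supremum over $G$ finishes. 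To prove the sub-lemma I would start from any $\alpha_n$ and show that whenever it has two or more new merging edges, a rerouting of $\alpha_n$ (using subpaths of $\mathcal{A}$ and of $\alpha_n$) strictly decreases that number. The governing idea is that of Proposition~\ref{WithOne}: one of the two paths can be ``absorbed'' — if $p$ merges with some $\beta\in\mathcal{A}$, reroute $p$ to run along $\beta$ from $S$ to the start of that merged subpath — leaving the other path responsible for at most one new merging; a second one would force a repeated path-index pattern, i.e. a configuration semi-reachable through $\alpha_n$ by itself, and hence a further rerouting, exactly as in Propositions~\ref{crossing}--\ref{quasi} and Lemma~\ref{twotwo}.

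The real obstacle is this sub-lemma. Unlike the min-cut-$1$ case of Proposition~\ref{WithOne}, rerouting one of $\alpha_n$'s two paths along an $\mathcal{A}$-path may collide with the other $\alpha_n$-path, so the two must be repaired simultaneously; showing the repair terminates and genuinely lowers the new-merging count while leaving the (fixed) mergings of $\mathcal{A}$ untouched appears to require the reversed-edge auxiliary-graph / ``regular form'' bookkeeping from the proof of Lemma~\ref{twotwo}, specialised to cut value $2$. Note that the generic estimate of Theorem~\ref{main-1} only gives $\mu_n\le\mu_{n-1}+(n-1)\mathcal{M}^*(2,2)=\mu_{n-1}+(n-1)$, so the sharpening down to $+1$ genuinely exploits that $\alpha_n$ has only two paths and shares its source with $\mathcal{A}$.
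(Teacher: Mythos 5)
Your reduction of the upper bound is the right one, and it is the same reduction the paper makes: fix $\alpha_1,\dots,\alpha_{n-1}$ with at most $\mu_{n-1}$ mergings and show $\alpha_n=\{p,\bar p\}$ can be arranged to contribute at most one new merging. But you then declare that sub-lemma to be ``the real obstacle'' and do not prove it, and that sub-lemma \emph{is} the proof of this proposition. The paper's argument for it is a long, delicate case analysis: it first kills two configurations by explicit reroutings (scenario 1: $p$ merges with $q$ and $\bar p$ with $\bar q$ for the two paths $q,\bar q$ of a single $\alpha_j$; scenario 2: $p$ shares an initial subpath with every path it newly merges with), then introduces the notion of an \emph{essential} new merging (non-essential ones can be rerouted away), and finally disposes of the two remaining cases --- $p$ newly merges essentially twice, or each of $p$ and $\bar p$ newly merges essentially once --- by chasing last/first merging points $\eps_1,\eps_2,\eps_3$ and exhibiting concatenated reroute paths such as $q_2[S,b(\eps_1)]\circ q_1[b(\eps_1),a(\eps_2)]\circ\bar p[a(\eps_2),a(\gamma_2)]$. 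None of this appears in your proposal, and your suggestion that the reversed-edge auxiliary graph of Lemma~\ref{twotwo} would substitute is not obviously workable: the reroutings here must control the \emph{total} merging count while the paths being rerouted include members of $\mathcal{A}$ itself, a constraint that the two-sink machinery of Lemma~\ref{twotwo} does not address. So the upper bound is a genuine gap, not a routine completion.

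Your lower bound is a legitimately different (and more explicit) construction than the paper's --- the paper simply appends an $\alpha_n$ sharing initial subpaths with $\alpha_{n-1}$ and merging with it once at a largest merged subpath, whereas you build a forcing gadget --- but as written it has a hole: you delete only the two edges $z_1\to R_1$, $z_2\to R_1$ used by your chosen Menger set $\{X_1,X_2\}$, yet $R_1$ may have further in-edges in $G_{n-1}$, in which case a Menger choice for $R_1$ in $G_n$ need not enter the gadget at all and the merging at $v\to v'$ is not forced. This is repairable (e.g.\ first normalize $G_{n-1}$ so that each sink has in-degree equal to its min-cut, which does not change $M^*$), but it needs to be said.
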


\begin{proof}
Given any acyclic directed graph $G$ with one source $S$ and $n$ sinks $R_1, R_2, \cdots, R_n$, where the min-cut between $S$ and $R_i$ is $2$, pick a set of Menger's paths $\alpha_i=\{\alpha_{i, 1}, \alpha_{i, 2}\}$ from $S$ to $R_i$ for all feasible $i$. Again by a new merging, we mean a merging among $\alpha_1, \alpha_2, \cdots, \alpha_n$, however not among $\alpha_1, \alpha_2, \cdots, \alpha_{n-1}$. Assume that $\alpha_1, \alpha_2, \cdots, \alpha_{n-1}$ are chosen such that the mergings among themselves is no more than $\mathcal{M^*}(\underbrace{2, 2, \cdots, 2}_{n-1})$, we shall prove that whenever $\alpha_n$ newly merges with $\alpha_1, \alpha_2, \cdots, \alpha_{n-1}$ more than $2$ times, one can always reroute certain paths to decrease the total number of mergings within $\alpha_1, \alpha_2, \cdots, \alpha_{n}$. Apparently this will be sufficient to imply:
$$
\mathcal{M^*}(\underbrace{2, 2, \cdots, 2}_n) \leq \mathcal{M^*}(\underbrace{2, \cdots, 2, 2}_{n-1})+1.
$$

In the following, for any $j$, if we use $p$ to refer to one of the two paths in $\alpha_j$, we will use $\bar{p}$ to refer to the other path in $\alpha_j$. Consider the following two scenarios:
\begin{enumerate}
\item for two certain Menger's paths $p, q$, $p$ merges with $q$ and $\bar{p}$ merges with $\bar{q}$;
\item for a Menger's path $p \in \alpha_n$ which newly merges with $q_1, q_2, \cdots, q_l$ at subpath $\gamma$ (here we have listed all the paths merging with $p$ at $\gamma$), $p$ shares a subpath with every $q_j$ before the new merging.
\end{enumerate}

For scenario $1$, suppose $p$ merges with $q$ at $\gamma$, and $\bar{p}$ merges with $\bar{q}$ at $\eps$. Then one can
always reroute $p[S, a(\gamma)]$ using $q[S, a(\gamma)]$, reroute $\bar{p}[S, a(\eps)]$ using $\bar{q}[S, a(\eps)]$; or alternatively reroute $q[S, a(\gamma)]$ using $p[S, a(\gamma)]$, reroute $\bar{q}[S, a(\eps)]$ using $\bar{p}[S, a(\eps)]$. So in the following we assume that scenario $1$ never occurs.

For scenario $2$, suppose that before $p$ newly merges with $q_1, q_2, \cdots, q_l$ at $\gamma$, $p$ shares a subpath $\eps_j$ with every $q_j$. We can assume $\bar{p}$ merges with every $q_j[b(\eps_j), a(\gamma)]$, otherwise one can reroute $p[b(\psi_j), a(\phi)]$ using $q_j[b(\psi_j), a(\phi)]$ (and thus the new merging at $\gamma$ disappear); we can also assume for some path $i$, $\bar{q}_i$ merges with $p[b(\eps_i), a(\gamma)]$, otherwise one can reroute every $q_j[b(\eps_j), a(\gamma)]$ using $p[b(\eps_j), a(\gamma)]$ and consequently all paths $q_1, q_2, \cdots, q_l$ can be rerouted (and thus the new merging at $\gamma$ disappear). But if for some path $i$, $\bar{q}_i$ merges with $p[b(\eps_i), a(\gamma)]$, scenario $1$ occurs: $p$ merges with $q_i$, and $\bar{p}$ merges with $\bar{q}_i$. So in the following we assume scenario $2$ does not occur either, i.e., there is always some $q_i$ such that before the new merging, $p$ does not internally intersect with $q_i$.

We say $p$ newly merges with $q_i$ {\em essentially} at $\gamma$ if
\begin{enumerate}
\item before the new merging, $p$ does not internally intersect (again meaning share subpath) with $q_i$;
\item $\bar{p}$ internally intersects with $q_j[S, a(\gamma)]$;
\item $\bar{q}_i$ internally intersects with $p[S, a(\gamma)]$.
\end{enumerate}
One checks that if $p$ newly merges with some $q_i$ non-essentially at $\gamma$, then either $p[S, a(\gamma)]$ or $q_i[S, a(\gamma)]$ can be rerouted. Furthermore if $p$ newly merges with $q_i$ essentially at $\gamma$, and $\bar{p}$ last merges with $q_i[S, a(\gamma)]$ at $\eps$, then one can reroute $\bar{p}$ by replacing $\bar{p}[S, a(\eps)]$ by $\bar{q}_i[S, a(\eps)]$, so the new $\bar{p}$ shares subpath $\bar{q}_i[S, b(\eps)]$ staring from $S$;
in other words, after possible reroutings, we can further assume that $\bar{p}$ shares certain subpath with $q_i$ starting from $S$.

Now suppose $p \in \alpha_n$ newly merges twice at $\gamma_1, \gamma_2$. For $i=1, 2$, among all the Menger's paths merging with $p$ at $\gamma_i$, let $q_i$ denote an arbitrarily chosen path such that $p$ newly merges with $q_i$ at $\gamma_i$ essentially (note that $q_1 \neq q_2$ since both of them merge with $p$ essentially). If $\bar{q}_2$ merges with $p[b(\gamma_1), a(\gamma_2)]$ at subpath $\eps_1$, since $\bar{q}_2$ does not merge with $\bar{p}$ (scenario $1$ does not occur), one can reroute $p[S, a(\eps_1)]$ using $\bar{q}_2[S, a(\eps_1)]$ (then the new merging at $\gamma_1$ would disappear). Consider the case when $\bar{q}_2$ does not merge with $p[b(\gamma_1), a(\gamma_2)]$. If $\bar{q}_2$ does not merge with $q_1[S, a(\gamma_1)]$ either, one can reroute $q_2[S, a(\gamma_2)]$ using $q_1[S, a(\gamma_1)] \circ p[a(\gamma_1), a(\gamma_2)]$. Now consider the case when $\bar{q}_2$ merges with $q_1[S, a(\gamma_1)]$ and suppose $\bar{q}_2$ last merges with $q_1[S, a(\gamma_1)]$ at $\eps_2$. If $\bar{p}$ does not merge with $q_1[b(\eps_2), a(\gamma_1)]$, since $\bar{q}_2$ won't merge with $\bar{p}$, $p[S, a(\gamma_1)]$ can be rerouted using $\bar{q}_2[S, b(\eps_2)] \circ q_1[b(\eps_2), a(\gamma_1)]$ (then the new merging at $\gamma_1$ would disappear). Now consider the case when $\bar{p}$ does merge with $q_1[b(\eps_2), a(\gamma_1)]$ at subpath $\eps_3$. But in this case, one can reroute $q_2[S, a(\gamma_2)]$ using $\bar{p}[S, a(\eps_3)] \circ q_1[a(\eps_3), a(\gamma_1)] \circ p[a(\gamma_1), a(\gamma_2)]$. Apply the arguments above to arbitrarily chosen pair $q_1, q_2$ essentially merging with $p$, together with the fact that non-essential merging will disappear after appropriate reroutings, we conclude that ultimately certain reroutings to reduce the number of mergings are always possible when $p \in \alpha_n$ newly merges twice.

Now suppose $p \in \alpha_n$ newly merges at $\gamma_1$, and $\bar{p} \in \alpha_n$ newly merges at $\gamma_2$. Let $q_1$ denote an arbitrarily chosen path, among all the paths merging with $p$ at $\gamma_1$, such that $p$ newly merges with $q_1$ at $\gamma_1$ essentially; let $q_2$ denote an arbitrarily chosen path, among all the paths merging with $\bar{p}$ at $\gamma_2$, such that $\bar{p}$ newly merges with $q_2$ at $\gamma_2$ essentially (again one checks that $q_1 \neq q_2$ since they essentially merge with $p, \bar{p}$, respectively). Apparently $q_1, q_2$ must merge with each other, otherwise one can reroute $p[S, a(\gamma_1)]$ using $q_1[S, a(\gamma_1)]$ and reroute $\bar{p}[S, a(\gamma_2)]$ using $q_2[S, a(\gamma_2)]$ (then the two new mergings would disappear). Suppose $q_1$ and $q_2$ last merge at $\eps_1$. We claim that $\bar{p}$ must merge with $q_1[b(\eps_1), a(\gamma_1)]$, otherwise one can reroute $p[S, a(\gamma_1)]$ using $q_2[S, b(\eps_1)] \circ q_1[b(\eps_1), a(\gamma_1)]$ ($p$ shares subpath with $q_2$ from $S$ and does not merge with $q_1$ before $\gamma_1$). Furthermore $\bar{p}$ must merge with $q_1[b(\eps_1), a(\gamma_1)]$ at least once before $a(\gamma_2)$ (in other words, $\bar{p}[S, a(\gamma_2)]$ must merge with $q_1[b(\eps_1), a(\gamma_1)]$), since otherwise, say $\bar{p}[b(\gamma_2), R_n]$ merges with $q_1[b(\eps_1), a(\gamma_1)]$ at $\eps_2$, then one can reroute $\bar{p}[S, a(\eps_2)]$ with $q_1[S, a(\eps_2)]$ (thus the new merging at $\gamma_2$ would disappear). Similarly $p[S, a(\gamma_1)]$ must merge with $q_2[b(\eps_1), a(\gamma_2)]$. Now suppose $\bar{p}[S, a(\gamma_2)]$ first merges with $q_1[b(\eps_1), a(\gamma_1)]$ at subpath $\eps_2$. Since scenario $1$ does not occur, $\bar{q}_1$ won't merge with $p$, therefore it must share certain subpath with $p$ staring from $S$ (here we remind the reader that $p$ newly merges with $q_1$ essentially, so $\bar{q}_1$ will either merge with or share certain subpath with $p$ from $S$). Similarly suppose $p[S, a(\gamma_1)]$ first merges with $q_2[b(\eps_1), a(\gamma_2)]$ at $\eps_3$, then $\bar{q}_2$ must share certain subpath with $\bar{p}$ staring from $S$. Now since scenario $1$ does not occur, either $\bar{q}_2$ won't merge with $q_1[b(\eps_1), a(\eps_2)]$ or $\bar{q}_1$ won't merge with $q_2[b(\eps_1), a(\eps_3)]$. If $\bar{q}_2$ does not merge with $q_1[b(\eps_1), a(\eps_2)]$, then one can reroute $q_2[b(\eps_1), a(\gamma_2)]$ with $q_1[b(\eps_1), a(\eps_2)] \circ \bar{p}[a(\eps_2), a(\gamma_2)]$; if $\bar{q}_1$ does not merge with $q_2[b(\eps_1), a(\eps_3)]$, then one can reroute $q_1[b(\eps_1), a(\gamma_1)]$ with $q_2[b(\eps_1), a(\eps_3)] \circ p[a(\eps_3), a(\gamma_1)]$.
Apply the arguments above to arbitrarily chosen pair $q_1, q_2$ essentially merging with $p$, together with the fact that non-essential merging will disappear after appropriate reroutings, we conclude that ultimately certain reroutings to reduce the number of mergings are always possible when when $p \in \alpha_n$ newly merges and $\bar{p} \in \alpha_n$ newly merges.

For the other direction, assume that the subgraph consisting of $\alpha_1, \alpha_2, \cdots, \alpha_{n-1}$ achieves $\mathcal{M^*}(\underbrace{2, 2, \cdots, 2}_{n-1})$, we add $\alpha_n$ such that for $i=1, 2$, $\alpha_{n, i}$ share subpath with $\alpha_{n-1, i}$, $\alpha_n$ only merges with $\alpha_{n-1}$ once, say $\alpha_{n, 1}$ merges with $\alpha_{n-1, 2}$ at $\gamma$, where $\gamma$ is a largest merged subpath. One checks the graph consisting $\alpha_1, \alpha_2, \cdots, \alpha_n$ has $\mathcal{M^*}(\underbrace{2, \cdots, 2, 2}_{n-1})+1$ mergings, and the number of mergings can't be reduced, implying
$$
\mathcal{M^*}(\underbrace{2, 2, \cdots, 2}_n) \geq \mathcal{M^*}(\underbrace{2, \cdots, 2, 2}_{n-1})+1.
$$
We thus prove the proposition.

\end{proof}

\begin{exmp}
It immediately follows from Proposition~\ref{WithOne} that
$$
\mathcal{M^*}(1, 1, \cdots, 1)=0.
$$
\end{exmp}

\begin{exmp} \label{NotEqual}
It immediately follows from Proposition~\ref{WithTwo} that
$$
\mathcal{M^*}(\underbrace{2, 2, \cdots, 2}_n)=n-1,
$$
which was first shown in~\cite{Fr2006}. In particular, $\mathcal{M}(2, 2)=1$. Further together with Proposition~\ref{SharedSubpath}, we have $\mathcal{M}^*(2, m)=1$ for $m \geq 2$. Note that
$$
\mathcal{M^*}(\underbrace{2, 2, \cdots, 2}_n) < \sum_{1 \leq i < j \leq n} \mathcal{M}^*(2, 2),
$$
which implies the inequality in Theorem~\ref{main-1} may not hold for certain cases.
\end{exmp}

\begin{exmp}
It follows from Proposition~\ref{MvsMStar} that
$$
\mathcal{M}^*(3, 3) \leq \mathcal{M}(2, 2)-1=4.
$$
One checks that the graph depicted by Figure~\ref{threethree} does not allow any rerouting to reduce the number of mergings, which implies $\mathcal{M^*}(3,3)=4$. Applying Theorem~\ref{main-1}, we have
$$
\mathcal{M}^*(\underbrace{3, 3, \cdots, 3}_n) \leq 2n(n-1).
$$

\begin{figure}
\psfrag{a11}{$\alpha_{1, 1}$} \psfrag{a12}{$\alpha_{1, 2}$} \psfrag{a13}{$\alpha_{1, 3}$}
\psfrag{a21}{$\alpha_{2, 1}$} \psfrag{a22}{$\alpha_{2, 2}$} \psfrag{a23}{$\alpha_{2, 3}$}
\psfrag{S}{$S$}\psfrag{R1}{$R_1$}\psfrag{R2}{$R_2$}
\centerline{\includegraphics[width=2.5in]{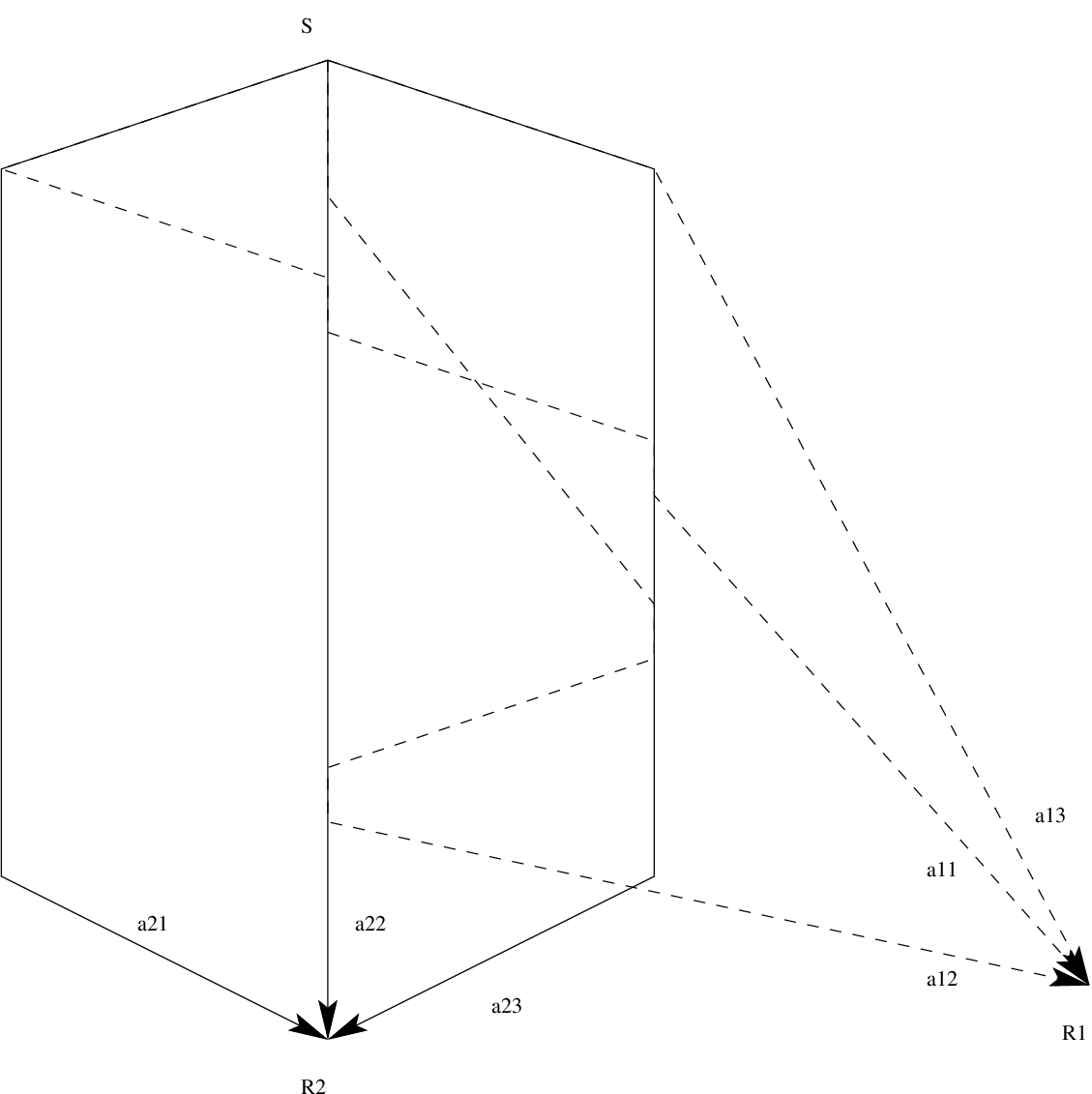}}
\caption{an example achieving $\mathcal{M}^*(3, 3)$}
\label{threethree}
\end{figure}
\end{exmp}

\section{Motivations}

Mergings in directed graphs naturally relate to ``congestions'' of traffic flows in various networks. Particularly, in network coding theory~\cite{Yeung2005}, which studies digital communication networks carrying information flow~\cite{Ahlswede2000}, computations and estimations of $\mathcal{M}$ and $\mathcal{M}^*$ have drawn much interest recently. Recent related work in network coding theory listed in this section are done in very different languages; we shall briefly introduce network coding theory and describe these work using the terminology and notations in this paper.

Network coding is a novel technique to improve the capability of networks (directed graphs) to transfer digital information between senders (sources) and receivers (sinks). Before network coding, information is transferred among networks using the traditional routing scheme, where intermediate nodes (vertices) can only forward and duplicate the received information. In contrast to the routing scheme, the idea of network coding is to allow intermediate nodes to ``combine'' data received from different incoming links (edges), eventually boosting the transmission rate of the network.

For a very comprehensive introduction to network coding theory, we refer to~\cite{Yeung2005}. Here, we roughly illustrate the idea of network coding using the following famous ``butterfly network''~\cite{Li2003}. Consider the network depicted in Figure~\ref{Butterfly}, where each link has capacity $1$ bit per time unit and there is no processing delay at each node. Two binary bits $a, b$ are to be transmitted from the source $S$ to $Y$ and $Z$. If we ignore the transmission to $Z$, we can use path $S \to T \to Y$ to transmit $a$, and use path $S \to U \to W \to X \to Y$ to transmit $b$ simultaneously; similarly ignoring the transmission to $Y$, we can use path $S \to U \to Z$ to transmit $a$, and use path $S \to T \to W \to X \to Z$ to transmit $b$ simultaneously. Note that paths $S \to U \to W \to X \to Y$ and $S \to T \to W \to X \to Z$ merge at $W \to X$. If the traditional routing scheme is assumed, $W \to X$ will become a ``bottleneck'' for simultaneous data transmission to $Y$ and $Z$, since for each time unit $W \to X$ can either carry $a$ or $b$, but not both at once. Thus under the routing scheme, completion of data transmission takes at least $2$ time units. Allowing intermediate nodes to recode the data from the incoming links, network coding scheme will provide a solution to speed up the data transmission: the ``bottleneck'' $W \to X$ carry $a$ and $b$ at the same time by carrying $a+b$, here $+$ denotes the exclusive-OR on $a, b$. Then as shown in Figure~\ref{Butterfly}(b), $Y$ will receive $a$ and $a+b$, from which $b$ can be decoded; at the same time unit $Z$ will receive $b$ and $a+b$, from which $a$ can be decoded. In other words, with the encoding at node $W$, $Y$ and $Z$ can receive the complete data simultaneously within $1$ time unit.
\begin{figure}
\centerline{\includegraphics[width=4in]{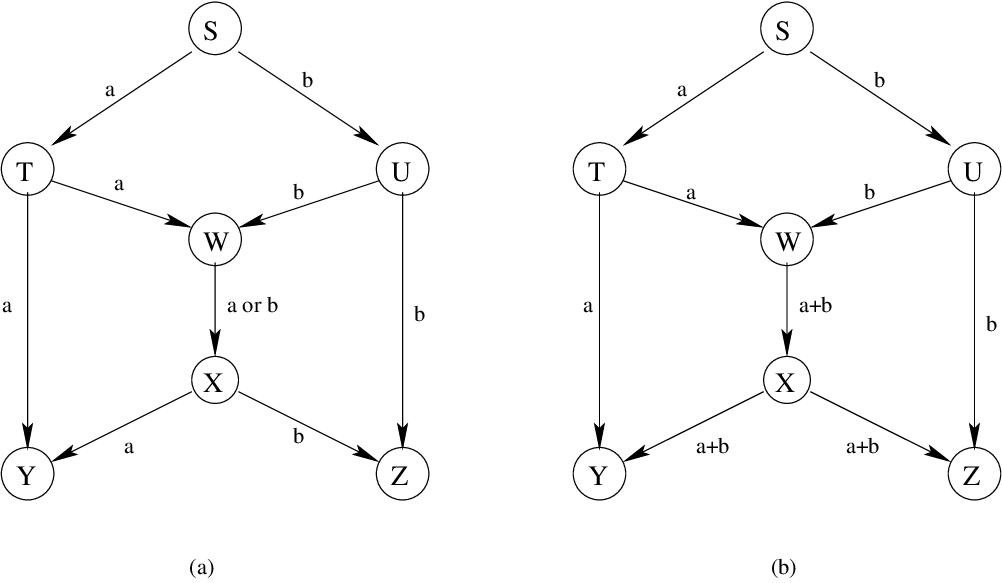}}
\caption{network coding on the Butterfly network}
\label{Butterfly}
\end{figure}

Now consider a general network with one sender $S$ and $n$ receivers $R_1, R_2, \cdots, R_n$, where each edge has capacity $1$ bit per time unit and there is no processing delay at each node. Suppose that each $R_i$ has the same min-cut $c$ with the sender $S$, and $c$ bit information are to be transmitted from $S$ to all $R_i$'s. Ignoring the presence of other receivers, any set of Menger's paths from $S$ to a receiver is able to carry data to the receiver at the maximum possible rate $c$; however for simultaneous data transmission, any merging among these Menger's paths will become a bottleneck. It has been shown~\cite{Ahlswede2000, Li2003} that with appropriate network coding at the merging nodes, all the receivers can receive the information at the maximum possible rate $c$.

In a network coding scheme, we call a node an ``encoding node'' if this node recodes the data from the incoming links, rather than simply duplicating and forwarding the incoming date. It is important to minimize, for a given network, the number of nodes which are needed to be equipped with such encoding capabilities, since these nodes are typically more expensive than other forwarding nodes, and may increase the overall complexity of the network. Since for given sets of Menger's paths from the source to the receivers, encoding operations are only needed at merging nodes among these paths, $\mathcal{M}$ and $\mathcal{M}^*$ with appropriate parameters will naturally give upper bounds on the number of necessary encoding nodes for a given network. In particular, for an acyclic network $G$ with one source and multiple sinks, as suggested by Lemma $13$ of~\cite{la06}, the minimum number of coding operations (required to guarantee all receivers receive data at the maximum possible rate) is equal to $M^*(G)$.

It was first conjectured that $\mathcal{M}(c_1, c_2, \cdots, c_n)$ is finite in~\cite{Ta2003}. More specifically the authors proved that (see Lemma $10$ of~\cite{Ta2003}) if $\mathcal{M}(c_1, c_2)$ is finite for all $c_1, c_2$, then
$\mathcal{M}(c_1, c_2, \cdots, c_n)$ is finite as well. To support the conjecture, the authors showed that
$\mathcal{M}(2, c)$ is finite for any $c$, and subsequently $\mathcal{M}(\underbrace{2, 2, \cdots, 2}_n, c)$ is finite for any $n$ and $c$. Lemma~\ref{twotwo} shows that indeed $\mathcal{M}(c_1, c_2)$ is finite for all $c_1, c_2$, thus the conjecture is true.

As for $\mathcal{M}^*$, the authors of~\cite{Fr2006} use the idea of ``subtree decomposition'' to first prove that
$$
\mathcal{M}^*(\underbrace{2, 2, \cdots, 2}_n)=n-1.
$$
Although their idea seems to be difficult to generalize to other parameters, it does allow us to gain deeper understanding about the topological structure of minimum mergings achieving graph for this special case.
It was first shown in~\cite{la06} that $\mathcal{M}^*(c_1, c_2)$ is finite for all $c_1, c_2$ (see Theorem $22$ of~\cite{la06}), and subsequently $\mathcal{M}^*(c_1, c_2, \cdots, c_n)$ is finite all $c_1, c_2, \cdots, c_n$.
The proof of Lemma~\ref{twotwo} is inspired by and follows closely the spirit of the proof of Theorem $22$ of~\cite{la06}. One of the differences between the approach in~\cite{la06} and ours is that we start with arbitrarily chosen Menger's paths, and focus on transformations (more specifically, merging number reducing reroutings) of these paths, which allow us to see how $\mathcal{M}, \mathcal{M}^*$ depend on the min-cuts.

\medskip \textbf{Acknowledgements:} We are very grateful to Raymond Yeung and Sidharth Jaggi, who have pointed out the related work~\cite{Fr2006, la06, Ta2003} in network coding theory. We also thank Wenan Zang and Sheng Huang for pointing out some mistakes in an earlier version of this manuscript.

\end{document}